\documentclass[twoside,11pt]{article}
\usepackage[top=1in,bottom=1in,left=1in,right=1in]{geometry}
\usepackage{nameref}
\usepackage{authblk}
\usepackage{graphicx}
\usepackage{subcaption}
\usepackage{caption}
\usepackage[T1]{fontenc}    
\usepackage[utf8]{inputenc} 
\usepackage{color, colortbl}
\usepackage{amsfonts, amsmath, amssymb, amsthm, dsfont}
\usepackage{nicefrac}
\usepackage{algorithm}
\usepackage{algorithmic}
\usepackage{url}
\usepackage[numbers]{natbib}


\newtheorem{theorem}{Theorem}
\newtheorem{lemma}{Lemma}
\newtheorem{corollary}{Corollary}
\newtheorem{definition}{Definition}

\newcommand{\MMD}{\text{MMD}}


\DeclareMathOperator*{\argmax}{argmax}

\begin{document}

\title{Supervised Contamination Detection, with Flow Cytometry Application}

\author[1]{Solenne Gaucher \thanks{solenne.gaucher@ensae.fr}}
\author[2]{Gilles Blanchard}
\author[2]{Frédéric Chazal}
\affil[1]{CREST, ENSAE, IP Paris, INRIA}
\affil[2]{Université Paris-Saclay, CNRS, INRIA}

\maketitle

\begin{abstract} The contamination detection problem aims to determine whether a set of observations has been contaminated, i.e. whether it contains points drawn from a distribution different from the reference distribution. Here, we consider a supervised problem, where labeled samples drawn from both the reference distribution and the contamination distribution are available at training time. This problem is motivated by the detection of rare cells in flow cytometry. Compared to novelty detection problems or two-sample testing, where only samples from the reference distribution are available, the challenge lies in efficiently leveraging the observations from the contamination detection to design more powerful tests. 

In this article, we introduce a test for the supervised contamination detection problem. We provide non-asymptotic guarantees on its Type I error, and characterize its detection rate. The test relies on estimating reference and contamination densities using histograms, and its power depends strongly on the choice of the corresponding partition. We present an algorithm for judiciously choosing the partition that results in a powerful test. Simulations illustrate the good empirical performances of our partition selection algorithm and the efficiency of our test. Finally, we showcase our method and apply it to a real flow cytometry dataset.
\end{abstract}

{
  \small	
  \textbf{\textit{Keywords---}} Test, contamination detection, mixture model, flow cytometry.
}

\section{Introduction}

\subsection{Motivation}

Anomaly detection is an important research topic in data analysis, aiming to identify rare elements in a dataset that deviate significantly from the majority of observations. Its applications are widespread, ranging from cybersecurity to social media content moderation and fraud detection, making it a thoroughly investigated field. The primary goal of anomaly detection is typically to identify unusual elements within a dataset. However, in specific scenarios, the objective shifts to testing for the presence of anomalies in the entire dataset, a problem called contamination detection. In this paper, we focus on the problem of \textit{supervised contamination detection}, which seeks to assess whether a sample has been contaminated by data points stemming from a distinct distribution, based on samples from the non-contaminated distribution and the contamination distribution. Our study is motivated, among other problems, by the challenge of detecting rare populations in \textit{flow cytometry.}

\paragraph{Flow cytometry} Flow cytometry is a widely used technology employed to identify and quantify specific cell populations within a sample of cells. This method enables the simultaneous quantification of multiple surface and intracellular markers at the individual cell level. To achieve this, cells are labeled with antibodies coupled to fluorescent molecules, which bind specifically to particular receptors or intracellular markers. These cells then pass one-by-one in front of an optical device that measures the emission spectrum of the markers attached to the cell, thereby revealing the presence of the target receptor. Modern flow cytometers can detect up to 15 or 20 parameters (i.e., markers) routinely, with throughput rates exceeding 10,000 cells per second. Flow cytometry datasets commonly contain several million cells \citep{cytoComp}. 
    
Data analysis for flow cytometry has traditionally been conducted through ``manual gating", involving successive visual inspections of two-dimensional scatterplots followed by selections of regions in this two-dimensional space containing cells of interest. Typically, markers are selected based on their ability to distinguish between different types of cells, allowing for meaningful cell clustering when separating cells according to their values. However, this time-consuming method has significant limitations, including its lack of scalability and the subjectivity in selecting dimensions and regions of interest, leading to a lack of reproducibility. Consequently, substantial efforts have been devoted to the development of automated analysis methods. These approaches include unsupervised clustering algorithms as well as supervised techniques trained on manually annotated samples. The primary goal of the latter type of algorithms is not only to identify the cell clusters but also to automatically annotate these clusters.

One problem faced in flow cytometry data analysis is the significant heterogeneity in the number of cells within each population, and the importance of small sub-populations in biological contexts. Notably, a major challenge in clinical applications is the detection of rare pathogenic objects in patient blood. These objects may include circulating tumor cells, which are very rare during the early stages of cancer development \citep{ctc}, or various microorganisms and parasites present in the blood during acute blood infections \citep{ijms21072323}. These examples emphasize the importance of detecting small cell populations. Unlike many anomaly detection scenarios where information about contaminations is unavailable, in this specific problem, manually labeled samples containing instances of these small populations are accessible and can be leveraged for training our algorithm. In this article, we therefore explore the problem of contamination detection using observations from both uncontaminated and contaminating samples, motivated by the challenges related to small cell populations identification in flow cytometry data.

\subsection{Related works}
Contamination detection is closely related to anomaly detection, which focuses on identifying instances that originate from a different distribution or class than the majority. The settings for anomaly detection vary based on the observations available to the statistician. In completely unsupervised anomaly detection, the statistician only has access to unlabeled data. Anomalies, also referred to as outliers, are commonly defined as ``observations which deviate so significantly from other observations as to raise suspicions that they were generated by a different mechanism" \citep{Enderlein1987HawkinsDM}. In semi-supervised anomaly detection, also referred to as novelty detection, it is assumed that the statistician has access to a ``pure" sample of observations drawn from the non-contaminated density $f^0$, and to a test sample that is a mixture of $f^0$ and of an arbitrary distribution $f^1$  \citep{6287347, JMLR:v7:vert06a, BlanchardNovelty, VILLAPEREZ2021106878}. Finally, in supervised anomaly detection, samples from both $f^0$ and $f^1$ are available at training time. Consequently, the problem reduces to a two-class classification scenario, which can be addressed using classical classification algorithms  \citep{Aggarwal2017}. We emphasize that all these problems ultimately aim to identify anomalous observations, i.e., observations drawn from a distribution different from $f^0$, and classify them as such. In contrast, in the present paper, we focus on testing the presence of corruptions or anomalies in the test sample, i.e. on testing whether the distribution $f^1$ has a positive proportion in the mixture model defining the density of the test sample.

Contamination detection from labeled samples is closely related to the problems of label shift estimation and of quantification. Label shift estimation  \citep{10.5555/3495724.3496001} addresses the task of estimating the target label distribution in a classification task under label shift. On the other hand, quantification  \citep{10.1145/3117807} focuses on estimating the class distribution for unlabeled test sets using models trained on a dataset with a different class distribution. In both problems, the test distribution can be viewed as a mixture of distributions corresponding to different labels, which can be learned from a training sample. Methods to solve these problems can typically be categorized into two types. The first type includes variants of the classify-and-count approach  \citep{count_class, DBLP:conf/icml/LiptonWS18, 10.5555/3524938.3524960}, which relies on fitting a classifier to the training samples and approximating the distributions of the classes in the test sample based on the distribution of predictions. Another approach considers the test distribution as a mixture model of distributions that can be learned from the training sample  \citep{pmlr-v32-iyer14, GONZALEZCASTRO2013146, Dussap}. The task of label shift quantification then involves estimating the weights of the components in the source dataset. Our approach is similar to the latter approach. It is important to note that these problems are typically concerned with obtaining guarantees on the aggregated error for reconstructing the weights of the mixture model, whereas our objective is to test if the weight of a given component is zero.

The contamination model examined in this paper is closely associated with the admixture model, which finds applications in false discovery problems, genetics, and astronomy, and has been extensively investigated (see, for example, \cite{Bordes10AOS, CAI10AOS, Celisse10JSIP,Nguyen14}). In this model, the distribution of the non-contaminated density is assumed to be known, while the weight of the mixture and the distribution of the contaminations are unknown. Previous works have mainly focused on estimating these weights and on testing the equality of contamination distributions across different samples \citep{Milhaud24}.

Finally, it is worth noting that without observations drawn from the contamination distribution $f^1$, the problem of contamination detection reduces to that of two-sample testing. As highlighted in \cite{BlanchardNovelty}, the optimal approach in this scenario is to test if the testing and training samples have the same distribution. In contrast, this paper takes advantage of additional information to develop an adaptive test with high power against contaminations drawn from a distribution from which we observe samples.

\subsection{Preliminary considerations}

\paragraph{Supervised contamination detection problem} We consider the following contamination detection problem : given a non-contaminated distribution $f^0$ and a contamination distribution $f^1$, we want to determine whether a test sample $\mathbb{X} = (X_1, \ldots, X_n)$ of size $n$ is drawn from $f^0$, or if contains a positive fraction of points drawn from $f^1$. More precisely, denoting by $f$ the distribution of the test sample, we want to test the hypothesis
\begin{center}
    $H_0 \ :\ f = f_0 $.
\end{center}
 against the alternative
\begin{center}
    $H_1 \ :\ f = (1-\theta)\times f_0 + \theta \times f_1$ for some $\theta \in (0, 1]$.
\end{center}
In the supervised setting, we have access to a pure sample denoted $\mathbb{X}^0 =(X_1^0, \ldots, X_{n^0}^0)$ of $n^0$ points drawn independently from $f^0$. Additionally, we observe a sample of $n^1$ labeled anomalies drawn independently and identically from $f^1$, denoted as $\mathbb{X}^1 =(X_1^1, \ldots, X_{n^1}^1)$. In the following, we assume that the densities $f^0$ and $f^1$ have support bounded in $[0,1]^d$. \footnote{In practice, this assumption holds true in flow cytometry analysis, where the data is often pre-processed in a standardized fashion.}

\paragraph{Detection rates in case of known distributions}
To gain insight into the complexity of the detection problem, let us first consider the simpler scenario where densities $f^0$ and $f^1$ are known. Arguably, the most straightforward case for testing $H_0$ against $H_1$ arises when both $f^0$ and $f^1$ are positive on $[0,1]^d$. In this situation, classical results from asymptotic statistics can be applied to develop a test asymptotically of level $\alpha$, and to establish its detection rate, as demonstrated in the following lemma.
\begin{lemma}\label{lem:known_f}
Assuming that $f^0$ and $f^1$ are positive on $[0,1]^d$, define
    \begin{align*}
       \sigma^2 = \int\left(\frac{f^1(x)}{f^0(x)}-1\right)^2f^0(x)dx \quad \quad \text{ and }\quad \quad S_n(\mathbb{X}) = \frac{1}{n}\sum_{i\leq n}\left(\frac{f^1(X_i)}{f^0(X_i)} - 1\right),
    \end{align*}
and let $\Phi^{-1}$ be the quantile function of the normal distribution. Then, for all $\alpha \in (0,1)$, the test rejecting $H_0$ if
\begin{align*}
    S_n(\mathbb{X}) \geq \frac{\sqrt{\sigma^2} \Phi(1-\alpha)^{-1}}{\sqrt{n}}
\end{align*} is asymptotically of level $\alpha$. Moreover, it has asymptotically power $\beta > \alpha$ against alternatives $\theta = \frac{\Phi^{-1}(1-\alpha)- \Phi^{-1}(1-\beta)}{\sqrt{n\sigma^2}}$.
\end{lemma}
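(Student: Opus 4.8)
The plan is to analyze the statistic $S_n(\mathbb{X})$ through central limit theorems, treating the null and the sequence of local alternatives separately. Write $Z_i = \frac{f^1(X_i)}{f^0(X_i)} - 1$, so that $S_n(\mathbb{X}) = \frac1n\sum_{i\le n} Z_i$, and note that, since $f^0$ and $f^1$ are positive on the compact set $[0,1]^d$ (with enough regularity for the likelihood ratio $f^1/f^0$ to be bounded), $Z_1$ is a bounded random variable under any mixture of $f^0$ and $f^1$.

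First I would handle $H_0$. There the $X_i$ are i.i.d.\ from $f^0$, hence the $Z_i$ are i.i.d.\ with $\mathbb{E}_{f^0}[Z_1] = \int (f^1 - f^0)\,dx = 0$ and $\mathrm{Var}_{f^0}(Z_1) = \mathbb{E}_{f^0}[Z_1^2] = \sigma^2 < \infty$. The classical CLT gives $\sqrt{n}\,S_n(\mathbb{X}) \xrightarrow{d} \mathcal{N}(0,\sigma^2)$, so the rejection probability $\mathbb{P}_{f^0}\big(\sqrt{n}\,S_n(\mathbb{X})/\sqrt{\sigma^2} \ge \Phi^{-1}(1-\alpha)\big)$ converges to $1 - \Phi(\Phi^{-1}(1-\alpha)) = \alpha$, which is the claimed asymptotic level.

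Next, for the power, $\theta = \theta_n = \frac{\Phi^{-1}(1-\alpha) - \Phi^{-1}(1-\beta)}{\sqrt{n\sigma^2}} \to 0$, so the $X_i$ form a triangular array with common density $f_{\theta_n} = (1-\theta_n)f^0 + \theta_n f^1$. A direct computation gives $\mathbb{E}_{f_{\theta_n}}[Z_1] = (1-\theta_n)\!\int (f^1 - f^0) + \theta_n\!\int (f^1-f^0)\tfrac{f^1}{f^0} = \theta_n\big(\int \tfrac{(f^1)^2}{f^0} - 1\big) = \theta_n\sigma^2$, using the identity $\sigma^2 = \int \tfrac{(f^1)^2}{f^0} - 1$; similarly $\mathrm{Var}_{f_{\theta_n}}(Z_1) = \sigma^2 + o(1)$ as $\theta_n \to 0$. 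Applying the Lindeberg--Feller CLT to this triangular array (the Lindeberg condition being immediate, since $Z_1$ is bounded so the truncating indicator vanishes for $n$ large) yields $\sqrt{n}\,\big(S_n(\mathbb{X}) - \theta_n\sigma^2\big)/\sqrt{\sigma^2} \xrightarrow{d} \mathcal{N}(0,1)$. Since the choice of $\theta_n$ gives $\sqrt{n}\,\theta_n\sigma^2/\sqrt{\sigma^2} = \sqrt{n}\,\theta_n\sqrt{\sigma^2} = \Phi^{-1}(1-\alpha) - \Phi^{-1}(1-\beta)$, the rejection probability converges to $\mathbb{P}\big(\mathcal{N}(0,1) \ge \Phi^{-1}(1-\alpha) - (\Phi^{-1}(1-\alpha) - \Phi^{-1}(1-\beta))\big) = 1 - \Phi(\Phi^{-1}(1-\beta)) = \beta$, and $\beta > \alpha$ follows from $\theta_n > 0$.

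The only real subtlety — the hard part — is the integrability justifying these CLTs, together with the fact that the power statement concerns a genuine triangular array and so needs Lindeberg--Feller rather than the classical CLT. Boundedness of $f^1/f^0$ (guaranteed, e.g., when both densities are continuous and bounded away from $0$ on $[0,1]^d$) makes everything routine; without such regularity one would instead impose directly that $\int (f^1)^3/(f^0)^2\,dx < \infty$, so that $Z_1 \in L^2(f^1)$ as well as $L^2(f^0)$, and then verify a Lyapunov-type condition for the array. All remaining steps are bookkeeping with Slutsky's lemma and the definition of $\sigma^2$.
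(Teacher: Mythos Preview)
Your argument is correct, but it takes a genuinely different route from the paper for the power part. Under $H_0$ both proofs are identical (classical CLT). For the alternatives, the paper does \emph{not} recompute the mean and variance of $Z_1$ under $f_{\theta_n}$ and invoke Lindeberg--Feller; instead it argues that the one-parameter family $\theta\mapsto(1-\theta)f^0+\theta f^1$ is differentiable in quadratic mean at $\theta=0$, expands the log-likelihood ratio, and applies Le~Cam's third lemma to transfer the limit law of $\sqrt{n}\,S_n$ from $\mathbb{P}_n$ to $\mathbb{Q}_n$, obtaining $\sqrt{n}\,S_n\xrightarrow{d}\mathcal N(h\sigma^2,\sigma^2)$ under the local alternatives directly.

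What each approach buys: your direct triangular-array computation is more elementary and self-contained --- no contiguity machinery is needed --- but it requires the extra integrability you flag, namely $Z_1\in L^2(f^1)$ (equivalently $\int (f^1)^3/(f^0)^2<\infty$), so that $\mathrm{Var}_{f_{\theta_n}}(Z_1)\to\sigma^2$ and Lindeberg can be checked. The paper's Le~Cam route sidesteps this: once DQM at $\theta=0$ holds (which needs only $\sigma^2<\infty$), contiguity transfers the CLT to the alternatives without ever computing second moments under $f^1$. So the contiguity argument is formally a bit more general, while yours is shorter and avoids invoking the LAN/Le~Cam framework. Both rely on the tacit regularity that $\sigma^2<\infty$, which does not follow from mere positivity of $f^0,f^1$ without a boundedness assumption --- a point you rightly call out and the paper glosses over.
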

Under the assumptions of Lemma \ref{lem:known_f}, the rates for testing $H_0$ against $H_1$ are parametric: the test has constant power against alternative containing a proportion of order $1/\sqrt{n\sigma^2}$ of points from $f^1$. This result highlights the role played by the signal term $\sigma^2$, given by the $L^2(f^0)$ norm between the density ratio $f^1/f^0$ and the constant function equal to $1$.  In comparison, two-sample tests typically have constant power against alternatives $f$ that are located at a given distance from the null hypothesis $f^0$, where the specific definition of this distance depends on the choice of the test. Consequently, we expect that additional information on $f^1$ will lead to improved detection rates if the ratio $f^1/f^0$ takes large values.

If the densities $f^0$ and $f^1$ have disjoint support, the signal $\sigma^2$ defined above is not defined, and the results of Lemma \ref{lem:known_f} do not hold. However, the problem becomes easier, since observing points in a region belonging to the support of $f^1$ but not of $f^0$ provides enough evidence to reject the null hypothesis for any confidence level. More generally, if there exists a sequence of regions $R_n$ with sufficiently low probability under $f^0$, and higher probability under $f^1$, the following lemma shows that a well-tailored test can achieve fast detection rates.
\begin{lemma}\label{lem:known_f_somewhat_fast_rates}
Assume that there exists a sequence of measurable sets $R_n\subset [0,1]^d$ such that, for some constants $A, B, \gamma >0$, we have 
\begin{align*}
    \int_{R_n} f^0(x)dx \leq \frac{A}{n} \quad \quad \text{ and } \quad \quad \int_{R_n} f^1(x)dx \geq \frac{B}{n^{\gamma}}.
\end{align*}
Then, for all $\alpha \in (c_{A, n},1)$, the test rejecting $H_0$ if $\sum_{i\leq n}\mathds{1}\{X_i \in R_n\} \geq 1$ is of level $\alpha$, where $c_{A, n}$ is a constant depending on $n$ and $A$.  Moreover, this test has power $\beta$ against alternatives $\theta =  \frac{c_{B,\beta}}{n^{1-\gamma}}$, where $c_{B,n}$ is a constant depending on $B$ and $\beta$.
\end{lemma}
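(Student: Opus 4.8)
The plan is to exploit that the test statistic is nothing but an indicator: writing $g(R_n):=\int_{R_n}g(x)\,dx$ for the mass a distribution $g$ places on $R_n$, the rejection event $\{\sum_{i\le n}\mathds{1}\{X_i\in R_n\}\ge 1\}$ is exactly the event that at least one of the $n$ i.i.d.\ observations falls in $R_n$. Hence, whatever the common distribution $g$ of the test sample, the rejection probability is the closed form $1-(1-g(R_n))^n$, and since $p\mapsto 1-(1-p)^n$ is nondecreasing on $[0,1]$, the entire lemma reduces to substituting the two assumed bounds on $f^0(R_n)$ and $f^1(R_n)$ into this expression.

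For the level, I would note that under $H_0$ the test sample is i.i.d.\ $f^0$, so its Type~I error equals $1-(1-f^0(R_n))^n\le 1-(1-A/n)^n=:c_{A,n}$ by the bound $f^0(R_n)\le A/n$ and monotonicity; thus the test has level $\alpha$ for every $\alpha\ge c_{A,n}$, which is exactly the range in the statement. (One can add the remarks $c_{A,n}\le A$ and $c_{A,n}\uparrow 1-e^{-A}$, which make the constraint mild for small $A$ but also show it cannot be dropped, since the Type~I error of this test never falls below $c_{A,n}$.)

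For the power, under $H_1$ with mixing weight $\theta$ the mass on $R_n$ is $f(R_n)=(1-\theta)f^0(R_n)+\theta f^1(R_n)\ge\theta f^1(R_n)\ge \theta B/n^{\gamma}$, simply discarding the nonnegative first term. So, provided $\theta B/n^{\gamma}\le 1$ (the complementary regime makes the power essentially $1$), monotonicity yields a power of at least $1-(1-\theta B/n^{\gamma})^n$. Forcing this to be $\ge\beta$ is the same as $\theta B/n^{\gamma}\ge 1-(1-\beta)^{1/n}$, and using $1-(1-\beta)^{1/n}=1-e^{-t}\le t$ with $t=\frac1n\log\frac1{1-\beta}$, it suffices to take $\theta\ge \frac{1}{B}\log\!\big(\frac1{1-\beta}\big)\frac{1}{n^{1-\gamma}}$, i.e.\ $\theta=c_{B,\beta}/n^{1-\gamma}$ with $c_{B,\beta}=\frac1B\log\frac1{1-\beta}$.

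I do not expect any genuine obstacle: the argument is an exact binomial (equivalently, union-bound) computation. The only points needing care are keeping the base $1-\theta B/n^{\gamma}$ inside $[0,1]$, and recording the constants $c_{A,n}$ and $c_{B,\beta}$ explicitly so that the level constraint $\alpha>c_{A,n}$ and the detection threshold $\theta=c_{B,\beta}/n^{1-\gamma}$ emerge exactly as stated.
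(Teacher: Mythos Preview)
Your proposal is correct and follows essentially the same approach as the paper: both compute the exact probability $1-(1-g(R_n))^n$ of the rejection event and plug in the assumed bounds on $f^0(R_n)$ and $f^1(R_n)$. The only cosmetic difference is in the elementary inequalities used to simplify the constants---the paper invokes $\log(1-x)\ge -\tfrac{x(2-x)}{2(1-x)}$ and $\log(1-x)\le -\tfrac{2x}{2-x}$, whereas you use direct monotonicity for $c_{A,n}$ and $1-e^{-t}\le t$ for $c_{B,\beta}$; your route is arguably cleaner and you are more explicit about discarding the $(1-\theta)f^0(R_n)$ term under $H_1$.
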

Under the assumptions of Lemma \ref{lem:known_f_somewhat_fast_rates}, the test that rejects the null hypothesis if $R_n$ contains at least one observation can detect alternatives with proportions of points drawn from $f^1$ as small as $n^{\gamma-1}$. When $\gamma\leq 1/2$, meaning that the probability of the region $R_n$ under $f^1$ is large enough,  this detection rate is faster than the parametric rate.  One extreme scenario arises when the support of $f^1$ is not contained within that of $f^0$, and there exists a fixed region $R$ with constant, positive probability under $f^1$ and null probability under $f^0$. In this case, the test proposed in Lemma \ref{lem:known_f_somewhat_fast_rates} detects the alternative $H_1$ for $\theta$ of order $1/n$, or equivalently with a finite number of points drawn from $f^1$. As $\gamma$ increases, we observe a smooth transition to the parametric regime: in this situation, the probability of the regions $R_n$ decreases under $f^1$, leading to an increase in the detection rate up to the parametric rate of $1/\sqrt{n}$. Finally, when $\gamma> 1/2$, the test proposed in Lemma \ref{lem:known_f_somewhat_fast_rates} exhibits a detection rate slower than the parametric rate, suggesting that such regions have a negligible contribution to the signal of the contamination detection problem.

\subsection{Outline and contribution}
Although limited to the simplified scenario where the densities are known, the preliminary results established above shed light on the factors driving the complexity of the contamination detection problem. In particular, they show that if the densities are positive, then the detection rate of the alternative is of order $1/\sqrt{\sigma^2 n}$. On the other hand, if there are regions of sufficiently low weight under $f^0$ and sufficiently high weight under $f^1$, faster detection rates can be achieved. When compared to the two-sample testing problem, these results highlight the advantage gained from the knowledge of both densities, enabling the development of effective tests tailored to the specific problem. When these densities are unknown, the following question arises: can we design a test that achieves fast detection rate under the assumptions of Lemma \ref{lem:known_f_somewhat_fast_rates}, and competitive parametric rates otherwise, all without knowing the densities $f^0$ and $f^1$?

To address this question, we develop a non-asymptotic, non-parametric test for the problem of supervised contamination detection. More precisely, we construct an estimate $\widehat{S}(\mathbb{X})$ of the test statistic $S_n(\mathbb{X})$ used in Lemma \ref{lem:known_f_somewhat_fast_rates}. To do so, we estimate the densities $f^0$ and $f^1$ using thresholded histograms introduced in Section \ref{sec:histo}. We derive non-asymptotic bounds on the fluctuations of $\widehat{S}(\mathbb{X})$ under $H_0$, which allow us to design a non-asymptotic test of level $\alpha$, presented in Section \ref{sec:def_test}. We also provide a characterization of the detection rate based solely on quantities available to the statistician. Our test is rather intuitive: thanks to its simplicity, its results of can be easy interpreted by practitioners.

Our test critically depends on the choice of the partition of the sample space, which defines the histograms for estimating $f^0$ and $f^1$. We provide an explicit characterization of this dependence, and propose in Section \ref{sec:DROP} a heuristic for selecting a partition that corresponds to a significant signal, leading to a faster detection rate. This heuristic is adapted from Classification and Regression Trees (CART) and can be easily implemented using standard packages. 

Simulations provided in Sections \ref{sec:exp_part} and \ref{sec:exp_simul} illustrate respectively the efficiency of the partitioning algorithm in identifying high-signal regions in the space, and the good performances of our test.  In Section \ref{sec:exp_MMD}, we conduct a thorough investigation of the empirical power of our test using simulated datasets, comparing our algorithm with a bootstrap version of the test, as well as with a benchmark test for two-sample testing. Our results highlight the computational and statistical efficiency of our method. Finally, in Section \ref{sec:exp_HIPC} we apply our algorithm to analyze a publicly available flow cytometry dataset from the Human Immune Phenotyping Consortium.

Proofs and additionnal experiments are postponed to the Appendix.

\section{A non-asymptotic test for contamination detection}\label{sec:test}
\subsection{Estimation of the densities} \label{sec:histo}
Building upon our preliminary findings on the toy problem with known densities, we propose to estimate the statistic $S_n(\mathbb{X})$ using the training samples $\mathbb{X}^0$ and $\mathbb{X}^1$. This requires estimating the densities $f^0$ and $f^1$, which we do using histograms. 

\paragraph{Estimation by piecewise-constant functions} Let us divide the space $[0,1]^d$ into $K$ bins using a partition $\mathcal{P} = \left(B_k\right)_{k \leq K}$. The unknown functions $f^0$ and $f^1$ can be approximated using piece-wise constant functions $\overline{f}^0_{\mathcal{P}}$ and $\overline{f}^1_{\mathcal{P}}$, defined as follows:
\begin{align*}
    &\overline{f}^0_{\mathcal{P}}(x) = \sum_{k\leq K}h_k^0 \mathds{1}\{x\in B_k\} \quad \quad \text{ and }\quad \quad \overline{f}^1_{\mathcal{P}}(x) = \sum_{k\leq K}h_k^1 \mathds{1}\{x\in B_k\},
\end{align*}
where $h_k^0 = \frac{\int_{B_k}f^0(x)dx}{\int_{B_k}dx}$ and $h_k^1 = \frac{\int_{B_k}f^1(x)dx}{\int_{B_k}dx}$. The optimal choice of partition, and particularly the number of bins $K$, involves a trade-off between the variance of estimates for the probabilities $h_k^0$ and $h_k^1$, and the discretization cost incurred while approximating the densities using piecewise linear functions. This approximation indeed comes at the expense of diminishing the signal when testing $H_0$ against $H_1$. Specifically, if these functions are positive, Lemma \ref{lem:known_f} reveals that the signal corresponding to the problem with densities $\overline{f}^0_{\mathcal{P}}$ and $\overline{f}^1_{\mathcal{P}}$ is given by 
\begin{align}
    \overline{\sigma}^2_{\mathcal{P}} = \int\left(\frac{\overline{f}^1_{\mathcal{P}}(x)}{\overline{f}^0_{\mathcal{P}}(x)}-1\right)^2\overline{f}^0_{\mathcal{P}}(x)dx.
\end{align}
We can easily verify that $\overline{\sigma}^2_{\mathcal{P}}\leq \sigma^2$ (see Lemma \ref{app:sigma_decroit} in Appendix). In general, the choice of the partition $\mathcal{P}$ will impact our ability to distinguish $\overline{f}^0_{\mathcal{P}}$ from $\overline{f}^1_{\mathcal{P}}$, thereby influencing the test's power. The choice of this partition will be discussed in Section  \ref{sec:DROP}; for now, we consider $\mathcal{P}$ as fixed, and independent from the training and testing samples.

\paragraph{Histogram estimators} Recall that we observe samples $\mathbb{X}^0 =(X_1^0, \ldots, X_{n^0}^0)$ drawn i.i.d. from $f^0$, $\mathbb{X}^1 =(X_1^1, \ldots, X_{n^1}^1)$ drawn i.i.d. from $f^1$, and $\mathbb{X} =(X_1, \ldots, X_{n})$ drawn i.i.d. from $f$. For all $k\leq K$, let $N^0_k$ (resp. $N^1_k$ and $N_k$) be the number of points from sample $\mathbb{X}^0$ (resp. $\mathbb{X}^1$ and $\mathbb{X}$) falling into bin $B_k$. More formally, we have:
\begin{align*}
    N^0_k= \sum_{i\leq n^0}\mathds{1}\{X_i^0 \in B_k\}\text{, }\qquad N^1_k= \sum_{i\leq n^1}\mathds{1}\{X_i^1 \in B_k\}, \qquad \text{and }\qquad N_k= \sum_{i\leq n^1}\mathds{1}\{X_i \in B_k\}.
\end{align*}
Recall that our test statistic estimates $S_n(\mathbb{X}) = \frac{1}{n}\sum_{i\leq n}\left(\frac{f^1(X_i)}{f^0(X_i)} - 1\right)$, and that our test rejects $H_0$ for large values of $\widehat{S}(\mathbb{X})$. Thus, a significant variance in our estimate of $f^0$ in regions where it is small can substantially impact the quality of our test. To ensure the robustness of our estimates, the estimators for $h^0_k$ are lower-thresholded at the levels $\epsilon^0$, where 
\begin{align}\label{eq:def_eps_0}
    \epsilon^0 = \frac{3u}{n^0} \vee \frac{t}{n},
\end{align}
and $u$ and $t$ are some positive constants to be defined later. We argue that lower-thresholding the probabilities $h_k^0$ at $\epsilon_0$ does not significantly diminish the test signal for this bin, given by $\big(\frac{h_k^1}{h_k^0}-1\big)^2 h_k^0$. Indeed, the preliminary results outlined in Lemma \ref{lem:known_f_somewhat_fast_rates} indicate that a probability $h^0_k$ of order $1/n$ or lower will have the same contribution to the signal strength.

Following the same reasoning, we propose to threshold the estimates for $h^1_k$ if the number of points falling into bin $B_k$ is excessively small. Lemma \ref{lem:known_f_somewhat_fast_rates} shows that regions where $h_k^1 \leq 1/\sqrt{n}$ have a negligible contribution to the signal. For this reason, we propose to set the thresholds for $h^1_k$ at the levels $\epsilon^1$, where 
\begin{align}\label{eq:def_eps_1}
    \epsilon^1 = \sqrt{\frac{3u}{n^1}}
\end{align}
In the subsequent analysis, we assume that $u$ and $t$ are small, while $n^0$ and $n^1$ are large, so that we typically have $\epsilon^0, \epsilon^1 \ll 1$. Moreover, we assume that $n^1 \leq n^0$: indeed, our goal is to detect the presence of points drawn from $f^1$, which are typically rare. Consequently, in most scenarios of interest, the size of the training sample $\mathbb{X}_1$ will be considerably smaller than that of $\mathbb{X}_0$.

Finally, we define the sets where the estimates for the probabilities $h^0_k$ and $h_k^1$ are small and need to be thresholded as follows:
\begin{align*}
    \Omega^0 &= \Big\{k: \frac{N^1_k}{n^1} > \epsilon^1\text{ and } \frac{N^0_k}{n^0} \leq \epsilon^0 \Big\},\\    \Omega^{01} &= \Big\{k:  \frac{N^1_k}{n^1} \leq \epsilon^1\text{ and } \frac{N^0_k}{n^0} \leq \epsilon^1 \Big\},\\
     \text{ and } \qquad \Omega^1 &= \Big\{k:  \frac{N^1_k}{n^1} \leq \epsilon^1 \text{ and }\frac{N^0_k}{n^0} > \epsilon^1 \Big\}.
\end{align*}
We emphasize that these sets do not intersect. The set $\Omega^0$ includes bins where $N_1^k$ is sufficiently large for estimating $h^1_k$, but where $N_0^k$ is too small and needs to be thresholded. The set $\Omega^1$ includes bins where $N_0^k$ is sufficiently large for estimating $h^0_k$, but where $N_1^k$ is too small and need to be thresholded. Finally, the set $\Omega^{01}$ corresponds to bins where both densities are low. The signal in these bins is too weak to provide meaningful information, so we set the density ratio to 1 within these bins.

Having introduced the thresholds $\epsilon^0$ and $\epsilon^1$, as well as the sets $\Omega^0$, $\Omega^1$, and $\Omega^{01}$, we are now ready to introduce the thresholded histogram estimators. They are defined as follows:
\begin{align}\label{eq:def_h_est}
    \widehat{h}^0_k = \begin{cases}
        3\epsilon^{0} \text{ if } k \in  \Omega^{0},\\
        3\epsilon^{1} \text{ if } k \in  \Omega^{01},\\
        \frac{N^0_k}{n^0} \text{ else, }
    \end{cases} \quad \quad \text{ and }\quad \quad 
    \widehat{h}^1_k = \begin{cases}
        3\epsilon^{1} \text{ if } k \in  \Omega^{1} \cup \Omega^{01},\\
        \frac{N^1_k}{n^1} \text{ else.}
    \end{cases}
\end{align}

\subsection{Estimated Density Ratio Test} \label{sec:def_test}
Following the preliminary results presented in Lemma \ref{lem:known_f}, we consider the following Estimated Density Ratio Test.
\begin{definition}[Estimated Density Ratio Test]\label{defi:T}
For a desired confidence level $\alpha \in (0,1)$, let $u = \log\left(\frac{4K}{\alpha}\right)$, $t = \log\left(\frac{2}{\alpha}\right)$, and let $\epsilon^0$, $\epsilon^1$, $\widehat{h}^0_k$ and $\widehat{h}^1_k$, be defined respectively by Equations \eqref{eq:def_eps_0}, \eqref{eq:def_eps_1}, and \eqref{eq:def_h_est}. 
Let us define
\begin{align*}
\widehat{r}_k = \frac{\widehat{h}_k^1}{\widehat{h}_k^0},  \qquad \widehat{S}(\mathbb{X}) = \frac{1}{n} \sum_{k\leq K}\left(\widehat{r}_k - 1\right)N_k, \qquad \text{ and }\qquad \widehat{\sigma}^2&= \sum_{k \leq K}\left(\widehat{r}_k - 1\right)^2\widehat{h}_k^0.
\end{align*}
The Estimated Density Ratio Test is defined as the test rejecting $H_0$ if
\begin{align*}
    \widehat{S}(\mathbb{X}) \geq \sqrt{\widehat{\sigma}^2}\left(\sqrt{\frac{10uK}{n^0}} +\sqrt{\frac{6t}{n}}\right)+ \frac{t}{3n}\max_{k}\vert \widehat{r}_k - 1\vert +3 \epsilon^1 K^1
\end{align*}
where $K^1 = \left(\vert \Omega^{01}\vert + \vert \Omega^1\vert \right).$
\end{definition}
\noindent Note that the test statistic $\widehat{S}(\mathbb{X})$ defined above verifies
\begin{align*}
     \widehat{S}(\mathbb{X}) = \frac{1}{n}\sum_{i\leq n}\left(\frac{\widehat{f}^1(X_i)}{\widehat{f}^0(X_i)} - 1\right),
\end{align*} 
where 
\begin{align*}
    &\widehat{f}^0(x) = \sum_{k\leq K}\widehat{h}_k^0 \mathds{1}\{x\in B_k\} \quad \quad \text{ and }\quad \quad \widehat{f}^1(x) = \sum_{k\leq K}\widehat{h}_k^1 \mathds{1}\{x\in B_k\},
\end{align*}
are the estimators of $f^0$ and $f^1$. Thus, the statistic $\widehat{S}$ approximates the test statistic defined in Lemma \ref{lem:known_f}. Similarly, $\widehat{\sigma}^2$ is an estimator for the signal $\overline{\sigma}^2_{\mathcal{P}}$ corresponding to the problem where the densities $f^0$ and $f^1$ are approximated by their piece-wise constant approximation on the partition $\mathcal{P}$. The following theorem controls the type I and type II error of the Estimated Density Ratio Test.
\begin{theorem}\label{thm:type_I_II}
Assume that $3\epsilon^0 \leq \epsilon^1\leq 1$, and that $n^1 \leq n^0$. Then, the Estimated Density Ratio Test has type I error lower than $\alpha$. Moreover, under the alternative hypothesis $H_1$, this test rejects the hypothesis $H_0$ with probability larger than $1-\alpha$ if $\theta$ satisfies
\begin{align*}
\theta \geq& C\left(\sqrt{\frac{t}{n\widehat{\sigma}^2}}\left(1 + K^0\sqrt{\frac{t}{n\widehat{\sigma}^2}}\right) + \frac{K^1\sqrt{u}}{\widehat{\sigma}^2\sqrt{n^1}} + \sqrt{\frac{u}{n^0\widehat{\sigma}^2}}\left(\sqrt{K} + K^0 \sqrt{\frac{u}{n^0\widehat{\sigma}^2}}\right)\right).
\end{align*}
where $C$ is an absolute constant, $K^1 = \vert \Omega^1\vert + \vert \Omega^{01}\vert$, and $K^0 = \vert \Omega^0\vert$.
\end{theorem}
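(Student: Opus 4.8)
The plan is to argue conditionally on the training samples $(\mathbb{X}^0,\mathbb{X}^1)$, which fixes the bin quantities $\widehat h^0_k,\widehat h^1_k,\widehat r_k,\widehat\sigma^2,K^0,K^1$ and hence the whole rejection threshold. Conditionally, $\widehat S(\mathbb{X})=\frac1n\sum_{i\le n}(\widehat r_{k(X_i)}-1)$ is an average of i.i.d.\ terms bounded in absolute value by $M:=\max_k|\widehat r_k-1|$, with conditional mean $\mu_f:=\sum_k(\widehat r_k-1)p_k$ and conditional second moment $V_f:=\sum_k(\widehat r_k-1)^2 p_k$, where $p_k:=\int_{B_k}f$. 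Bernstein's inequality gives, with probability at least $1-e^{-t}=1-\alpha/2$ over $\mathbb{X}$, that $|\widehat S(\mathbb{X})-\mu_f|\lesssim\sqrt{V_f\,t/n}+Mt/n$. The remaining $\alpha/2$ of the budget is spent on a good event $\mathcal{G}$ for the training data: Bernstein for each binomial $N^0_k$ and $N^1_k$, together with a union bound over the $K$ bins, gives for $u=\log(4K/\alpha)$ that $\mathbb{P}(\mathcal{G})\ge1-\alpha/2$, where on $\mathcal{G}$ each frequency $N^j_k/n^j$ lies within its Bernstein radius of $p^j_k$. Both parts of the theorem then follow from deterministic estimates of $\mu_f$ and $V_f$ on $\mathcal{G}$.

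The technical core is a cell-by-cell comparison of the true masses $p^j_k$ with the thresholded estimators $\widehat h^j_k$ on $\mathcal{G}$, treating the non-thresholded cells and the cells of $\Omega^0,\Omega^1,\Omega^{01}$ separately; this is where the definitions \eqref{eq:def_eps_0}--\eqref{eq:def_eps_1}, the constant $3$ in \eqref{eq:def_h_est}, and the hypotheses $3\epsilon^0\le\epsilon^1$, $n^1\le n^0$ are used. On a non-thresholded cell, $N^0_k/n^0>\epsilon^0\ge 3u/n^0$ forces $N^0_k\gtrsim u$, so the Bernstein deviation of $N^0_k/n^0$ from $p^0_k$ is a fixed fraction of $\widehat h^0_k$ and $p^0_k\asymp\widehat h^0_k$; likewise $p^1_k\asymp\widehat h^1_k$ on cells not thresholded for $f^1$. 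For $k\in\Omega^0$, $N^0_k/n^0\le\epsilon^0$ forces $p^0_k\le 3\epsilon^0=\widehat h^0_k$ on $\mathcal{G}$ (this is precisely why the multiplicative constant is $3$), and symmetrically $p^1_k\lesssim\epsilon^1$ on the $f^1$-thresholded cells; cells of $\Omega^{01}$ have $\widehat r_k=1$ and drop out of every sum below. These comparisons yield at once $\sum_k(\widehat r_k-1)^2 p^0_k\lesssim\widehat\sigma^2$ and, after Cauchy--Schwarz, control of the remainder sums $\sum_{k}(\widehat r_k-1)(p^0_k-\widehat h^0_k)$ and $\sum_k(\widehat r_k-1)(p^1_k-\widehat h^1_k)$ by $\sqrt{\widehat\sigma^2}$ times $\sqrt{Ku/n^0+K^0\epsilon^0}$ and $\sqrt{K^1\epsilon^1+Ku/n^1}$ respectively.

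For the Type I bound, under $H_0$ the conditional mean is $\bar S_0:=\sum_k(\widehat r_k-1)p^0_k$; using $\widehat h^1_k-\widehat h^0_k=(\widehat r_k-1)\widehat h^0_k$ I write it as $\bar S_0=\big(\sum_k\widehat h^1_k-\sum_k\widehat h^0_k\big)+\sum_k(\widehat r_k-1)(p^0_k-\widehat h^0_k)$. The first bracket is at most $3\epsilon^1K^1$ by mass conservation ($\sum_kN^1_k/n^1=\sum_kN^0_k/n^0=1$, so thresholding adds at most $3\epsilon^1K^1$ to $\sum_k\widehat h^1_k$ while keeping $\sum_k\widehat h^0_k\ge1$). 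In the second bracket the $\Omega^0$ cells contribute nonpositively ($\widehat r_k-1>0$ while $p^0_k-\widehat h^0_k=p^0_k-3\epsilon^0\le0$ on $\mathcal{G}$), the $\Omega^{01}$ cells contribute zero, and the non-thresholded cells give, by Cauchy--Schwarz against $\widehat\sigma^2$, at most $\lesssim\sqrt{\widehat\sigma^2Ku/n^0}$. With $V_f=\sum_k(\widehat r_k-1)^2 p^0_k\lesssim\widehat\sigma^2$, the conditional Bernstein bound $\mu_f+\sqrt{V_f t/n}+Mt/n$ then does not exceed the rejection threshold $\sqrt{\widehat\sigma^2}\big(\sqrt{10uK/n^0}+\sqrt{6t/n}\big)+\frac{t}{3n}M+3\epsilon^1K^1$ (the constants $10,6$ absorbing the absolute constants from the comparisons and from Bernstein), so the test does not reject with probability $\ge1-\alpha/2$ on $\mathcal{G}$, giving Type I error $\le\alpha$.

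For the Type II bound, under $H_1$ one has $p_k=(1-\theta)p^0_k+\theta p^1_k$, so $\mu_f=\bar S_0+\theta(\bar S_1-\bar S_0)$ with $\bar S_1-\bar S_0=\sum_k(\widehat r_k-1)(p^1_k-p^0_k)$. Writing $p^1_k-p^0_k=(\widehat h^1_k-\widehat h^0_k)+(p^1_k-\widehat h^1_k)-(p^0_k-\widehat h^0_k)$ and using the identity $\sum_k(\widehat r_k-1)(\widehat h^1_k-\widehat h^0_k)=\widehat\sigma^2$ identifies the signal: $\bar S_1-\bar S_0\ge\widehat\sigma^2$ minus the Cauchy--Schwarz remainders above, which in the relevant regime are at most $\widehat\sigma^2/2$, so $\mu_f\gtrsim\bar S_0+\theta\widehat\sigma^2$. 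Bounding $\bar S_0$ from below the same way --- the $\Omega^0$ cells now producing a loss of order $\sqrt{K^0\epsilon^0\widehat\sigma^2}$, which in turn forces $K^0\epsilon^0$ to be comparable to $\widehat\sigma^2$ precisely when this loss is large --- and bounding $V_f\lesssim\widehat\sigma^2+\theta\sum_k(\widehat r_k-1)^2 p^1_k\lesssim\widehat\sigma^2(1+\theta M)$, the requirement that the lower Bernstein bound $\mu_f-\sqrt{V_f t/n}-Mt/n$ exceed the threshold becomes an inequality of the form
\[
\theta\widehat\sigma^2 \;\gtrsim\; \sqrt{\widehat\sigma^2 t/n}+\sqrt{\widehat\sigma^2 Ku/n^0}+K^1\epsilon^1+K^0\epsilon^0+(\text{lower order}).
\]
Dividing by $\widehat\sigma^2$ and using $\epsilon^0\le 3u/n^0+t/n$, $\epsilon^1=\sqrt{3u/n^1}$ --- the term $K^0\epsilon^0$ producing the ``quadratic'' pieces $K^0\frac{t}{n\widehat\sigma^2}$ and $K^0\frac{u}{n^0\widehat\sigma^2}$ --- yields exactly the stated bound on $\theta$, hence rejection with probability $\ge1-\alpha/2$ on $\mathcal{G}$ and $\ge1-\alpha$ overall. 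I expect the main obstacle to be the cell-by-cell comparisons of the second paragraph (where the precise thresholds and the constant $3$ are forced) and the Type II bookkeeping: tracking all the Cauchy--Schwarz and Bernstein error terms, verifying that the $\theta$-multiplied remainders stay below $\widehat\sigma^2/2$, and performing the final rearrangement while keeping every absolute constant consistent with the explicitly written threshold.
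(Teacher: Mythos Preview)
Your overall architecture --- condition on $(\mathbb{X}^0,\mathbb{X}^1)$, Bernstein on $\widehat S(\mathbb{X})$, a Bernstein good event $\mathcal{G}$ for the bin counts, then the decomposition $\bar S_0=\sum_k(\widehat h^1_k-\widehat h^0_k)+\sum_k(\widehat r_k-1)(p^0_k-\widehat h^0_k)$ with the $\Omega^0$ cells contributing nonpositively, the mass--conservation bound $3\epsilon^1 K^1$, and $V_f\lesssim\widehat\sigma^2$ --- is exactly the paper's proof, and your Type I argument goes through essentially verbatim.

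The Type II argument, however, contains a real gap. You claim Cauchy--Schwarz control of the remainder $\sum_k(\widehat r_k-1)(p^1_k-\widehat h^1_k)$ by $\sqrt{\widehat\sigma^2}\sqrt{K^1\epsilon^1+Ku/n^1}$, but this does not follow. On non-thresholded $f^1$ cells one has $|p^1_k-\widehat h^1_k|\lesssim\sqrt{u\widehat h^1_k/n^1}$, and any Cauchy--Schwarz pairing against $\widehat\sigma^2=\sum_k(\widehat r_k-1)^2\widehat h^0_k$ leaves a factor $\sum_k\widehat h^1_k/\widehat h^0_k=\sum_k\widehat r_k$ (or equivalently $\sum_k(\widehat r_k-1)^2\widehat h^1_k$), which is \emph{not} bounded by $K$: individual $\widehat r_k$ can be as large as $1/(3\epsilon^0)\asymp n$. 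Moreover, even your intended bound $\sqrt{\widehat\sigma^2\,Ku/n^1}$ would force an extra condition $Ku/(n^1\widehat\sigma^2)\lesssim 1$ that is not present in the theorem --- the only $n^1$ dependence in the statement is through $K^1$, not $K$.

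The paper avoids this entirely by using a \emph{multiplicative} rather than additive control of $p^1_k$. The point of choosing $\epsilon^1=\sqrt{3u/n^1}$ (with a square root, unlike $\epsilon^0$) is that on non-thresholded $f^1$ cells $N^1_k\ge n^1\epsilon^1=\sqrt{3un^1}$, so the relative Bernstein error is $\sqrt{u/N^1_k}\lesssim(u/n^1)^{1/4}$, giving $p^1_k\ge\big(1-(u/n^1)^{1/4}\big)\widehat h^1_k$. Since $\widehat r_k\ge 0$, this yields $\sum_k\widehat r_k\,p^1_k\ge\big(1-(u/n^1)^{1/4}\big)\sum_k\widehat r_k\,\widehat h^1_k$, and after subtracting the ``$-1$'' part one obtains $\sum_k(\widehat r_k-1)p^1_k\ge\big(1-(u/n^1)^{1/4}\big)\widehat\sigma^2-(u/n^1)^{1/4}-O(\epsilon^1 K^1+\epsilon^0 K^0)$. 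The factor $1-(u/n^1)^{1/4}$ is then absorbed into the absolute constant $C$ using $\epsilon^1\le 1$. Replacing your Cauchy--Schwarz step by this multiplicative argument (which is the content of Lemma~\ref{lem:lower_bound_B1}) fixes the proof; the rest of your Type II bookkeeping --- including $V_1\lesssim\widehat\sigma^2(1+\theta M)$ and the final quadratic rearrangement producing the $K^0\epsilon^0/\widehat\sigma^2$ term --- then matches the paper. (Incidentally, the $\Omega^0$ loss in the lower bound on $\bar S_0$ is additive of order $\epsilon^0 K^0$, not $\sqrt{K^0\epsilon^0\widehat\sigma^2}$ as you wrote, though this is a minor point.)
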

Theorem \ref{thm:type_I_II} establishes that the Estimated Density Ratio Test at a confidence level $\alpha$ indeed has type I error smaller than $\alpha$. Moreover, it offers a characterization of alternatives that are detected with a probability greater than $1-\alpha$. It is important to note that this characterization depends solely on known quantities, such as $\widehat{\sigma}^2$, $n^0$, $n^1$, and $n$. Thus, the results presented in Theorem \ref{thm:type_I_II} enable the statistician to assess, based solely on the training data and on the size of the test sample, the typical proportions $\theta$ of elements drawn from $f^1$ that the test could detect. 

As a first remark, note that we typically have $K^0\sqrt{\frac{t}{n\widehat{\sigma}^2}} \ll 1$ and $K^0 \sqrt{\frac{u}{n^0\widehat{\sigma}^2}} \ll 1$, so the detection rate is driven by $\sqrt{\frac{t}{n\widehat{\sigma}^2}} \vee \frac{K_1 \sqrt{u}}{\widehat{\sigma}^2\sqrt{n^1}} \vee \sqrt{\frac{uK}{n^0\widehat{\sigma}^2}}$. This result highlights the very different roles played by the sizes of the training samples $n^0$ and $n^1$, along with the size of the test sample $n$. When enough observations from the distributions $f^1$ fall in all bins, and $K^1 = 0$, the detection rates does not depend on $n^1$ (provided that $\epsilon^1 \leq 1$). In words, the (constant) additive error in the estimates $h_1^k$ implies an error in estimating the signal $\widehat{\sigma}$, which is deflated by a multiplicative factor. Note that this behavior breaks down when $K^1 > 0$, as the signal is poorly estimated in those bins. 

By contrast, both $n$ and $n^0$ can in general be limiting factors in the detection rate. On the one hand, when $n \geq \frac{n^0}{K}$, the size of the training sample $\mathbb{X}^0$ becomes limiting, restricting the precision with which the distribution $f^0$ is estimated. On the other hand, when $n \leq \frac{n^0}{K}$, the proportion $\theta$ required to reject $H_0$ with a constant probability is dictated by the size of the test sample. In the following section, we analyze this detection rate under different assumptions regarding the densities $f^0$ and $f^1$, corresponding to the various detection regimes examined in our preliminary remarks.

\subsection{Detection rate}\label{sec:detection_rates}
\paragraph{Positive densities} We begin by analyzing the regret rate under the assumptions of Lemma \ref{lem:known_f}: in this setting, both $f^0$ and $f^1$ have a positive density, and therefore, so do $\overline{f}^0_{\mathcal{P}}$ and $\overline{f}^1_{\mathcal{P}}$. In fact, we demonstrate that under the milder assumption that  the signal $\overline{\sigma}^2_{\mathcal{P}}$ is finite, the estimated signal $\widehat{\sigma}^2$ is close to $\overline{\sigma}^2_{\mathcal{P}}$. This is done in the following Lemma.
\begin{lemma}\label{lem:control_sigma}
Assume that $3\epsilon^0 \leq \epsilon^1\leq 1$, that $n^1 \leq n^0$, and that $\overline{\sigma}^{2}_{\mathcal{P}}$ is finite. Then, there exists a constant $C_{\overline{f}_{\mathcal{P}}^0, \overline{f}_{\mathcal{P}}^1, K}$ depending only on $\overline{f}_{\mathcal{P}}^0$ and $ \overline{f}_{\mathcal{P}}^1$ such that for all $n$, $n^0$ and $n^1$ large enough, with probability $1-\alpha$,
    \begin{align*}
    \left \vert \widehat{\sigma}^2- \overline{\sigma}^2_{\mathcal{P}} \right \vert \leq C_{\overline{f}_{\mathcal{P}}^0, \overline{f}_{\mathcal{P}}^1, K}\left(\sqrt{\frac{u}{n^1}}+\sqrt{\frac{u}{n^0}}\right).
\end{align*}
\end{lemma}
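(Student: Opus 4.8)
The plan is to reduce everything to a single event $\mathcal{E}$ of probability at least $1-\alpha$ on which every empirical bin frequency is close to its expectation, and then to argue that on $\mathcal{E}$, as soon as $n,n^0,n^1$ are large enough, the thresholding in \eqref{eq:def_h_est} is triggered in exactly the ``structural'' bins where the true probabilities vanish, so that $\widehat\sigma^2$ and $\overline\sigma^2_{\mathcal{P}}$ differ only through the sampling error of the bin probabilities. Writing $p^0_k=\int_{B_k}f^0$ and $p^1_k=\int_{B_k}f^1$, and noting that $\widehat r_k$ and $r_k=h^1_k/h^0_k=p^1_k/p^0_k$ involve only ratios, one has $\overline\sigma^2_{\mathcal{P}}=\sum_{k:\,p^0_k>0}(p^1_k/p^0_k-1)^2p^0_k$, and finiteness of $\overline\sigma^2_{\mathcal{P}}$ forces $p^1_k=0$ whenever $p^0_k=0$. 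I would set $m=\min\{p^0_k:p^0_k>0\}\wedge\min\{p^1_k:p^1_k>0\}>0$; this quantity, together with $K$, is what the final constant depends on. For $\mathcal{E}$ I would apply Bernstein's inequality to each of the binomial counts $N^0_k$, $N^1_k$ with a union bound: since $u=\log(4K/\alpha)$, on an event of probability at least $1-\alpha$ one gets, for every $k$, $\bigl|N^0_k/n^0-p^0_k\bigr|\le\sqrt{2p^0_k u/n^0}+u/n^0$ and $\bigl|N^1_k/n^1-p^1_k\bigr|\le\sqrt{2p^1_k u/n^1}+u/n^1$ (in particular $N^0_k=0$ a.s.\ when $p^0_k=0$, and likewise for $p^1_k$).

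Next I would classify the bins on $\mathcal{E}$ once $n,n^0,n^1$ are large enough that $\epsilon^0\vee\epsilon^1<m/2$; this is possible because $\epsilon^0=3u/n^0\vee t/n\to0$ and $\epsilon^1=\sqrt{3u/n^1}\to0$, and it is exactly where the hypothesis ``$n$ large'' enters even though $n$ does not appear in the bound. If $p^0_k>0$, the deviation bound gives $N^0_k/n^0\ge m/2>\epsilon^1\ge\epsilon^0$; if moreover $p^1_k>0$ then likewise $N^1_k/n^1\ge m/2>\epsilon^1$, so $k$ lands in the ``else'' branch with $\widehat h^0_k=N^0_k/n^0$ and $\widehat h^1_k=N^1_k/n^1$, whereas if $p^1_k=0$ then $N^1_k=0\le\epsilon^1$, so $k\in\Omega^1$ and $\widehat h^1_k=3\epsilon^1$. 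If $p^0_k=0$ then $N^0_k=N^1_k=0$, so $k\in\Omega^{01}$ and $\widehat r_k=1$. Hence on $\mathcal{E}$ (for large samples) $\Omega^0=\emptyset$, $\Omega^{01}=\{k:p^0_k=0\}$, and $\Omega^1=\{k:p^0_k>0=p^1_k\}$.

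I would then bound $\bigl|\widehat\sigma^2-\overline\sigma^2_{\mathcal{P}}\bigr|$ term by term along this partition. Bins in $\Omega^{01}$ contribute $0$ to both sums. For $k\in\Omega^1$, writing $b=N^0_k/n^0\ge m/2$, the contribution to $\widehat\sigma^2$ is $(3\epsilon^1-b)^2/b=9(\epsilon^1)^2/b-6\epsilon^1+b$ against $p^0_k$ for $\overline\sigma^2_{\mathcal{P}}$, so the difference is at most $|b-p^0_k|+6\epsilon^1+18(\epsilon^1)^2/m$, of order $\sqrt{u/n^0}+u/n^0+\sqrt{u/n^1}+u/n^1$ since $(\epsilon^1)^2=3u/n^1$. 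For $k$ in the ``else'' branch, both contributions have the form $g(a,b):=(a-b)^2/b$, and since $g$ is Lipschitz on $\{b\ge m/2,\ 0\le a\le1\}$ with constant $L(m)$, the difference is at most $L(m)\bigl(|N^1_k/n^1-p^1_k|+|N^0_k/n^0-p^0_k|\bigr)$. Summing over the at most $K$ bins, using $\sum_k\sqrt{p_k}\le\sqrt{K}$, $n^1\le n^0$, and $u/n^1\le\sqrt{u/n^1}$ for $n^1\ge u$, yields $\bigl|\widehat\sigma^2-\overline\sigma^2_{\mathcal{P}}\bigr|\le C_{\overline f^0_{\mathcal{P}},\overline f^1_{\mathcal{P}},K}\bigl(\sqrt{u/n^1}+\sqrt{u/n^0}\bigr)$ on $\mathcal{E}$.

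The main obstacle is the bin-classification step: one must make sure the thresholds $\epsilon^0,\epsilon^1$ genuinely vanish relative to $m$, so that no bin with a positive-but-small true probability is spuriously thresholded and the $\Omega$-sets collapse to the structural sets above. This is precisely why the statement is only asymptotic (``$n,n^0,n^1$ large enough'') and why the constant must depend on the unknown densities through $m$, in contrast with Theorem~\ref{thm:type_I_II}. The remaining ingredients — Bernstein concentration with a union bound, the convention for bins of zero probability, and the elementary Lipschitz estimate for $g(a,b)=(a-b)^2/b$ bounded away from $b=0$ — are routine.
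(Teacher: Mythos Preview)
Your proof is correct and follows essentially the same route as the paper's: work on the high-probability concentration event for the bin frequencies, argue that for large $n,n^0,n^1$ the sets $\Omega^0,\Omega^1,\Omega^{01}$ collapse to the structural sets determined by which $p_k^a$ vanish, and then control the remaining sampling error bin by bin. The only cosmetic difference is that the paper expands $(\widehat h^1_k/\widehat h^0_k-1)^2\widehat h^0_k=(\widehat h^1_k)^2/\widehat h^0_k-2\widehat h^1_k+\widehat h^0_k$ and bounds the pieces via the concentration of each $\widehat h^a_k$, whereas you bound $g(a,b)=(a-b)^2/b$ directly by a Lipschitz argument away from $b=0$; both lead to the same $\sqrt{u/n^0}+\sqrt{u/n^1}$ rate with a constant depending on $m=\min\{p^a_k:p^a_k>0\}$ and $K$.
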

Lemma \ref{lem:control_sigma} reveals that $\widehat{\sigma}^2 \rightarrow \overline{\sigma}^2_{\mathcal{P}}$ as the size of the training samples goes to infinity. Combining this result with Theorem \ref{thm:type_I_II} yields the following corollary.
\begin{corollary}\label{cor:power_slow_rate}
     Assume that $3\epsilon^0 \leq \epsilon^1\leq 1$, that $n^1 \leq n^0$, and that $\overline{\sigma}^{2}_{\mathcal{P}}$ is finite. Then, there exists a constant $C_{\alpha, K}$ depending on $\alpha$ such that for all $n$, $n^0$ and $n^1$ large enough, the Estimated Density Ratio Test of level $\alpha$ has power $1-\alpha$ against alternatives 
    \begin{align*}
        \theta \geq& \frac{C_{\alpha, K}}{\sqrt{\overline{\sigma}_{\mathcal{P}} \left(n \land n^0\right)}}.
    \end{align*}
\end{corollary}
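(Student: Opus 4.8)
The plan is to read off Corollary~\ref{cor:power_slow_rate} directly from Theorem~\ref{thm:type_I_II} and Lemma~\ref{lem:control_sigma}: once $\widehat{\sigma}^2$ is known to be of the same order as $\overline{\sigma}^2_{\mathcal{P}}$ and the threshold sets $\Omega^0,\Omega^1,\Omega^{01}$ are known to be empty, the detection condition of Theorem~\ref{thm:type_I_II} collapses to the announced parametric rate. Inspection of their proofs shows that both the power bound of Theorem~\ref{thm:type_I_II} and the conclusion of Lemma~\ref{lem:control_sigma} hold on a common event $\mathcal{G}$ of probability at least $1-\alpha$, since both rest on the same Bernstein-type deviation inequalities for the bin counts $N^0_k$, $N^1_k$, $N_k$; I will work on $\mathcal{G}$ throughout, so that no probability budget is lost in combining them.

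First (control of the estimated signal), on $\mathcal{G}$ Lemma~\ref{lem:control_sigma} gives $\bigl|\widehat{\sigma}^2-\overline{\sigma}^2_{\mathcal{P}}\bigr|\le C_{\overline{f}^0_{\mathcal{P}},\overline{f}^1_{\mathcal{P}},K}\bigl(\sqrt{u/n^1}+\sqrt{u/n^0}\bigr)$; since $u=\log(4K/\alpha)$ depends only on $\alpha$ and $K$ while $\overline{\sigma}^2_{\mathcal{P}}>0$ is fixed, there is $N_0=N_0(\overline{f}^0_{\mathcal{P}},\overline{f}^1_{\mathcal{P}},K,\alpha)$ such that for $n^0,n^1\ge N_0$ this deviation is at most $\overline{\sigma}^2_{\mathcal{P}}/2$, hence $\widehat{\sigma}^2\ge\overline{\sigma}^2_{\mathcal{P}}/2$ on $\mathcal{G}$. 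Second (the threshold sets vanish), under the positivity assumption of Lemma~\ref{lem:known_f} the piecewise-constant densities are bounded below, $h^0_k\ge c^0>0$ and $h^1_k\ge c^1>0$ for all $k\le K$, with $c^0,c^1$ depending only on $\overline{f}^0_{\mathcal{P}},\overline{f}^1_{\mathcal{P}}$. Since $\epsilon^0=\frac{3u}{n^0}\vee\frac{t}{n}\to 0$ and $\epsilon^1=\sqrt{3u/n^1}\to 0$, the bin-count concentration underlying $\mathcal{G}$ forces $N^0_k/n^0>\epsilon^0$ and $N^1_k/n^1>\epsilon^1$ for every $k$ once $n,n^0,n^1$ exceed a threshold depending on $c^0,c^1,K,\alpha$; by the definitions of $\Omega^0,\Omega^1,\Omega^{01}$ this gives $\Omega^0=\Omega^1=\Omega^{01}=\emptyset$, i.e.\ $K^0=K^1=0$ on $\mathcal{G}$. (When only $\overline{\sigma}^2_{\mathcal{P}}<\infty$ is assumed one argues analogously, treating separately the fixed collection of bins on which $\overline{f}^1_{\mathcal{P}}$ vanishes.)

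To conclude, plugging $K^0=K^1=0$ into the detection condition of Theorem~\ref{thm:type_I_II} leaves, on $\mathcal{G}$,
\[
\theta \;\ge\; C\left(\sqrt{\tfrac{t}{n\widehat{\sigma}^2}}+\sqrt{\tfrac{uK}{n^0\widehat{\sigma}^2}}\right),
\]
and using $\widehat{\sigma}^2\ge\overline{\sigma}^2_{\mathcal{P}}/2$ the right-hand side is at most $C\sqrt{2}\,\overline{\sigma}_{\mathcal{P}}^{-1}\bigl(\sqrt{t/n}+\sqrt{uK/n^0}\bigr)$. Since $t=\log(2/\alpha)$ and $u=\log(4K/\alpha)$ depend only on $\alpha$ and $K$, and $1/\sqrt{n}+1/\sqrt{n^0}\le 2/\sqrt{n\land n^0}$, this is bounded by $C_{\alpha,K}\,\overline{\sigma}_{\mathcal{P}}^{-1}(n\land n^0)^{-1/2}$, which is the rate of the statement. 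Hence, for $\theta$ exceeding this quantity, Theorem~\ref{thm:type_I_II} (used on $\mathcal{G}$) shows the test rejects $H_0$ with probability at least $\mathbb{P}(\mathcal{G})\ge 1-\alpha$, which proves the corollary.

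The main obstacle is the second step: making ``$n,n^0,n^1$ large enough'' quantitative requires producing an explicit threshold, depending on $\overline{f}^0_{\mathcal{P}},\overline{f}^1_{\mathcal{P}},K$ and $\alpha$, beyond which the bin counts concentrate tightly enough to empty $\Omega^0,\Omega^1,\Omega^{01}$; and in the general finite-signal version one must verify that the bins on which $\overline{f}^1_{\mathcal{P}}$ vanishes do not reintroduce an $n^1$-dependent term that would degrade the rate. A secondary subtlety is the confidence-level bookkeeping: rather than union-bounding two separate $(1-\alpha)$ events, it is cleanest to re-expose the deviation inequalities behind Theorem~\ref{thm:type_I_II} and Lemma~\ref{lem:control_sigma} on a single common event, so that the overall probability stays $1-\alpha$.
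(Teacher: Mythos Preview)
Your proposal is correct and follows essentially the same route as the paper: invoke Lemma~\ref{lem:control_sigma} on the common deviation event to get $\widehat{\sigma}^2\ge\overline{\sigma}^2_{\mathcal{P}}/2$, argue that the threshold sets are empty so $K^0=K^1=0$ for large samples, and plug into the detection condition of Theorem~\ref{thm:type_I_II}. Your explicit bookkeeping of the shared event $\mathcal{G}$ and your flagged caveat about bins where $h^1_k=0$ but $h^0_k>0$ are in fact more careful than the paper, which simply asserts $K^0=K^1=0$ by appeal to ``arguments similar to Lemma~\ref{lem:control_sigma}'' without isolating that case.
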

Under the hypothesis of Corollary \ref{cor:power_slow_rate}, when $n^0, n^1 \geq n$, the detection rate is of order $1/\sqrt{\overline{\sigma}_{\mathcal{P}}n}$. In this case, the test is asymptotically equivalent (up to a numerical multiplicative constant) to the test in the setting where $\overline{f}_{\mathcal{P}}^0$ and $\overline{f}_{\mathcal{P}}^1$ are known. On the contrary, when the training samples are too small, the test's complexity is driven by the challenge of learning $\overline{f}_{\mathcal{P}}^0$ and $\overline{f}_{\mathcal{P}}^1$.

If we consider regimes where the number of points in both training samples becomes large compared to the size of the testing samples, we anticipate that the inherent complexity of the problem reduces to that of the toy problem where the densities $f^0$ and $f^1$ are known. A closer examination of Lemma \ref{lem:control_sigma} reveals that the bound on $\left \vert \widehat{\sigma}^2- \overline{\sigma}^2_{\mathcal{P}} \right \vert$ remains valid as long as the number of training points from the samples $\mathbb{X}^0$ and $\mathbb{X}^1$ remains substantial in each bin (specifically, respectively larger than a constant, and larger than $\sqrt{n^1}$). Therefore, if $f^0$ and $f^1$ are positive, when $n^0/n, n^1/n \rightarrow \infty$, we can choose a sequence of partitions $\mathcal{P}$ of increasing size to better approximate the density functions. Under additional assumptions on the regularity of the functions $f^0$ and $f^1$, we then have $\overline{\sigma}^2_{\mathcal{P}} \rightarrow \sigma^2$ as the training sample sizes increase. In the limit where $n^0/n, n^1/n \rightarrow + \infty$, the test has asymptotically constant power against alternatives of order $\sqrt{n\sigma^2}$, and the problem indeed reduces to that with known densities.

\paragraph{Faster rates for densities with disjoint support} When the densities $f^0$ and $f^1$ have disjoint support, or when certain regions of the space have low probability under $f^0$ and higher probability under $f^1$, we hope to achieve a faster detection rate than $1/\sqrt{n}$,  i.e., to recover the fast detection rates highlighted by our preliminary analysis. However, this can only happen if one of the bins in the partition coincides with such a highly informative region. Indeed, if the piece-wise constant approximations to $f^0$ and $f^1$ are positive, the conclusions of the previous analysis still hold. In this case, the test cannot leverage the information contained in low-density areas under $f^0$ because they are not captured by the partition. On the contrary, if there exists a bin $B_k$ such that $h_k^0 = 0$ and $h_k^1>0$, classical concentration results reveal that $\widehat{h}_k^1$ is asymptotically of order $h_k^1$, while $h_k^0$ is always thresholded at level $\epsilon^0$.  The following Lemma indicates that in this scenario, the detection rate can be of order $1/n$. We recall that $h_k^1 = \nicefrac{\int_{B_k}f^1(x)dx}{\int_{B_k}dx}$ is the probability that a points belongs to bin $B_k$ under $f^1$.

\begin{lemma}\label{lem:lb_h}
Assume that $3\epsilon^0 \leq \epsilon^1\leq 1$, that $n^1 \leq n^0$, and that there exists $k \leq K$ such that $h_k^1 \geq \sqrt{\frac{5u}{n^1}}$ and $h_k^0 = 0$. Then, with probability $1-\alpha$, 
\begin{align*}
    \widehat{\sigma}^2&\geq \left(1 - \left(\frac{u}{n^1}\right)^{\frac{1}{4}}\right)^2 \frac{h_k^1}{\epsilon^0}.
\end{align*}
where we recall that $\epsilon^0 = \nicefrac{3u}{n^0} \vee \nicefrac{t}{n}$. Furthermore, for $n$, $n^0$, and $n^1$ large enough, the Estimated Density Ratio Test has power $1-\alpha$ against alternatives
\begin{align*}
    \theta & \geq \frac{C_{\alpha, K}\epsilon^0}{h_k^1},
\end{align*}
where the constant $C_{\alpha, K}>0$ depends on $\alpha$ and $K$.
\end{lemma}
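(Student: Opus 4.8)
The statement has two parts --- a lower bound on the estimated signal $\widehat{\sigma}^2$ holding on a high-probability event, and its translation into a detection rate via Theorem~\ref{thm:type_I_II} --- and both revolve around the single ``informative'' bin $B_k$. First I would check that $B_k$ falls into $\Omega^0$: since $h_k^0 = 0$, no point of $\mathbb{X}^0$ lies in $B_k$, so $N_k^0 = 0$ and the requirement $N_k^0/n^0 \le \epsilon^0$ in the definition of $\Omega^0$ holds deterministically. For the other requirement, $N_k^1$ is binomial with $n^1$ trials and success probability (the $f^1$-mass of $B_k$) at least $\sqrt{5u/n^1}$; applying a multiplicative Chernoff/Bernstein lower-tail bound, on the complement of an event of probability at most $\alpha/4$ one has
\[ \frac{N_k^1}{n^1} \ \ge\ h_k^1\Bigl(1 - \sqrt{\tfrac{u}{n^1\,h_k^1}}\Bigr) \ \ge\ h_k^1\Bigl(1 - \bigl(\tfrac{u}{n^1}\bigr)^{1/4}\Bigr), \]
where the last inequality uses $h_k^1 \ge \sqrt{5u/n^1}$; the same lower bound also forces $N_k^1/n^1 > \epsilon^1 = \sqrt{3u/n^1}$, so indeed $k\in\Omega^0$ and hence $\widehat{h}_k^0 = 3\epsilon^0$, $\widehat{h}_k^1 = N_k^1/n^1$ on this event.

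\textbf{The $\widehat{\sigma}^2$ bound.} Keeping only the $k$-th summand of $\widehat{\sigma}^2 = \sum_k (\widehat{r}_k-1)^2\widehat{h}_k^0$ (all others being non-negative) and writing $\widehat{r}_k = \widehat{h}_k^1/(3\epsilon^0)$,
\[ \widehat{\sigma}^2 \ \ge\ (\widehat{r}_k-1)^2\,\widehat{h}_k^0 \ =\ \frac{\bigl(\widehat{h}_k^1 - 3\epsilon^0\bigr)^2}{3\epsilon^0}. \]
I would then plug in $\widehat{h}_k^1 \ge h_k^1(1-(u/n^1)^{1/4})$ together with $3\epsilon^0 \le \epsilon^1 = \sqrt{3u/n^1} \le \sqrt{3/5}\,h_k^1$ (again from $h_k^1 \ge \sqrt{5u/n^1}$), so that $\widehat{h}_k^1 - 3\epsilon^0$ is a fixed fraction of $h_k^1$ while $1/(3\epsilon^0) \ge 1/(\sqrt{3/5}\,h_k^1)$; elementary simplification then produces $\widehat{\sigma}^2 \ge (1-(u/n^1)^{1/4})^2\, h_k^1/\epsilon^0$. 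I would be careful here about the paper's normalisation of $h_k^1$ (bin mass vs.\ average density) so that the powers of $h_k^1$ land as in the statement.

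\textbf{Detection rate.} For the second claim I would feed $\widehat{\sigma}^2 \gtrsim h_k^1/\epsilon^0$ into the threshold on $\theta$ from Theorem~\ref{thm:type_I_II} and verify that, once $n,n^0,n^1$ are large, every term there is $O_{\alpha,K}(\epsilon^0/h_k^1)$. Using $t/n \le \epsilon^0$ and $3u/n^0 \le \epsilon^0$: the leading term obeys $\sqrt{t/(n\widehat{\sigma}^2)} \lesssim \sqrt{t\epsilon^0/(nh_k^1)} \le \epsilon^0/\sqrt{h_k^1}$; the corrections $K^0\sqrt{t/(n\widehat{\sigma}^2)}$ and $K^0\sqrt{u/(n^0\widehat{\sigma}^2)}$ vanish; $K^1\sqrt{u}/(\widehat{\sigma}^2\sqrt{n^1}) \lesssim \bigl(K^1\sqrt{u/n^1}\bigr)\epsilon^0/h_k^1$ with $K^1\sqrt{u/n^1}\to 0$; and $\sqrt{uK/(n^0\widehat{\sigma}^2)} \lesssim \sqrt{K}\,\epsilon^0/\sqrt{h_k^1}$. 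Hence the threshold is at most $C_{\alpha,K}\epsilon^0/h_k^1$. Since Theorem~\ref{thm:type_I_II} then yields rejection probability $\ge 1-\alpha$ conditionally on the training samples, and the event of the first two steps has probability $\ge 1-\alpha/4$, a union bound --- running Theorem~\ref{thm:type_I_II} at a slightly smaller level, with the extra logarithmic factor absorbed into $C_{\alpha,K}$ --- gives overall power $\ge 1-\alpha$ against $\theta \ge C_{\alpha,K}\epsilon^0/h_k^1$.

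\textbf{Main obstacle.} The one non-routine ingredient is the binomial lower-tail estimate in the first step: the Chernoff/Bernstein constant must be chosen sharply enough that $N_k^1/n^1$ simultaneously exceeds $\epsilon^1$ and stays within a $(u/n^1)^{1/4}$ relative factor of $h_k^1$, on a complement whose probability is a prescribed fraction of $\alpha$ --- the hypotheses $h_k^1 \ge \sqrt{5u/n^1}$ and $\epsilon^1 = \sqrt{3u/n^1}$ are tuned precisely for this. Everything downstream (the $\widehat{\sigma}^2$ inequality and the rate) is bookkeeping on top of Theorem~\ref{thm:type_I_II}.
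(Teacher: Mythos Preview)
Your plan mirrors the paper's proof closely: verify $k\in\Omega^0$ (so $\widehat{h}_k^0=3\epsilon^0$) and $\widehat{h}_k^1\ge h_k^1\bigl(1-(u/n^1)^{1/4}\bigr)$ on the concentration event $\mathcal{E}$, keep only bin $k$ in $\widehat{\sigma}^2$, then plug into Theorem~\ref{thm:type_I_II}. One point to flag: the ``elementary simplification'' you sketch to reach $h_k^1/\epsilon^0$ does not work --- replacing $1/(3\epsilon^0)$ by the smaller quantity $1/(\sqrt{3/5}\,h_k^1)$ weakens the bound rather than strengthening it, and what survives is of order $(h_k^1)^2/\epsilon^0$, not $h_k^1/\epsilon^0$. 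The paper's own proof in fact obtains
\[
\widehat{\sigma}^2 \ \ge\ \Bigl(1-\bigl(\tfrac{u}{n^1}\bigr)^{1/4}\Bigr)^{2}\frac{(h_k^1)^2}{\epsilon^0},
\]
i.e.\ with $(h_k^1)^2$ rather than $h_k^1$; the display in the statement appears to carry a typo, so your caveat about the ``powers of $h_k^1$'' was well placed. For the detection rate the paper uses this $(h_k^1)^2/\epsilon^0$ bound together with $\sqrt{n^1}\,h_k^1\ge 1$ (a consequence of $h_k^1\ge\sqrt{5u/n^1}$) to control the $K^1\sqrt{u}/(\widehat{\sigma}^2\sqrt{n^1})$ term, and the remaining terms are handled just as you describe.
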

Lemma \ref{lem:lb_h} characterizes the detection rate in the case where the partition $\mathcal{P}$ captures a high-signal region $B_k$ such that $h_k^0 = 0$ and $h_k^1 >0.$ In this case, fast detection rates can be achieved: if $n^0> n$, the test detects alternatives corresponding to proportion $\theta$ of order $1/(nh^1_k)$. Note that as the size of the training sample $\mathbb{X}^1$ increases, our test is able to take into account high-signal regions with ever decreasing probabilities $h_k^1$. In the limit where $n^1 \approx n$, these regions have probabilities of order $n^{-1/2}$, corresponding to the limiting regime highlighted in Lemma \ref{lem:known_f_somewhat_fast_rates}.

\subsection{DROP: a partitioning algorithm}\label{sec:DROP}
The analysis of the two cases outlined above, corresponding to the regimes explored in Lemmas \ref{lem:known_f} and \ref{lem:known_f_somewhat_fast_rates} yielding respectively parametric and fast rates, emphasizes the critical role of partition selection. In the first scenario, it was demonstrated that the detection rate is asymptotically $1/\sqrt{n \overline{\sigma}_{\mathcal{P}}^2}$. In this situation, opting for a partition $\mathcal{P}$ associated with a substantial signal $\overline{\sigma}_{\mathcal{P}}^2$ allows for an enhancement of this rate by a multiplicative factor. In the second scenario, the choice of partition becomes even more pivotal, as selecting a partition that captures regions contributing the most to the signal is crucial in achieving fast detection rates. 

The remainder of this section is dedicated to introducing the Density-Ratio Oriented Partitioning (DROP) algorithm. In order to obtain meaningful piece-wise constant approximations of the densities $f^0$ and $f^1$, we propose constructing a sequence of nested partitions with increasing sizes. The optimal partition size will then be selected to maximize the lower bound on the test power as established in Theorem \ref{thm:type_I_II}. We begin by outlining our partitioning approach, followed by a discussion on the selection of the optimal partition within this sequence. As a first step, we split the training sample $\mathbb{X}^0$ (resp. $\mathbb{X}^1$) into two samples each, denoted $\mathbb{X}^0_{part}$  and $\mathbb{X}^0_{est}$  (resp. $\mathbb{X}^1_{part}$  and $\mathbb{X}^1_{est}$ ), of respective sizes $n^0_{part}$ and $n^0_{est}$ (resp. $n^1_{part}$ and $n^1_{est}$). The samples $\mathbb{X}^0_{part}$  and $\mathbb{X}^0_{part}$  will be used for choosing the partition, while the samples $\mathbb{X}^0_{est}$  and $\mathbb{X}^1_{est}$  will be used for estimating the densities. This separation ensures that the density estimates remain independent of the partition, preserving the validity of the theoretical guarantees established for a fixed partition in Theorem \ref{thm:type_I_II}.

\paragraph{A greedy partitioning scheme} To construct a partition with a significant signal $\widehat{\sigma}^2$, we begin by generating a sequence of nested partitions with increasing size $(\mathcal{P}_K)_{K\in \mathbb{N}}$. At each iteration $K$, the partition $\mathcal{P}_K$ is obtained by subdividing a bin from $\mathcal{P}_{K-1}$ into two separate bins. To achieve a partition $\mathcal{P}_K$ with a substantial estimated signal $\widehat{\sigma}_{\mathcal{P}_K}^2$, we choose this division in a greedy manner. Before specifying the splitting criteria, let us introduce additional notations.

We emphasize the dependence of the density estimator on the partition by denoting by $\widehat{h_k^0}(\mathcal{P})$ (resp. $\widehat{h_k^1}(\mathcal{P})$) the estimated probability that a point belongs to the $k$-th bin in partition $\mathcal{P}$ under $f^0$ (resp. $f^1$). Assume that the partition $\mathcal{P}_K$ is obtained by splitting the $k-th$ bin of partition $\mathcal{P}_{K-1}$. We denote the resulting bins as $k_L$ and $k_R$, referred to as the ``left" and ``right" bins, respectively. The increase in signal, denoted by $\Delta$, can be expressed as $\Delta = \widehat{\sigma}_{\mathcal{P}_K}^2 - \widehat{\sigma}_{\mathcal{P}_{K-1}}^2$, or equivalently as
\begin{align}\label{eq:impurity}
\Delta = & \left(\frac{\widehat{h}_{k_R}^1(\mathcal{P}_K)}{\widehat{h}_{k_R}^0(\mathcal{P}_K)}-1\right)^2\widehat{h}_{k_R}^0(\mathcal{P}_K) + \left(\frac{\widehat{h}_{k_L}^1(\mathcal{P}_K)}{\widehat{h}_{k_L}^0(\mathcal{P}_K)}-1\right)^2\widehat{h}_{k_L}^0(\mathcal{P}_K) \nonumber \\
& \qquad - \left(\frac{\widehat{h}_{k}^1(\mathcal{P}_{K-1})}{\widehat{h}_{k}^0(\mathcal{P}_{K-1})}-1\right)^2\widehat{h}_{k}^0(\mathcal{P}_{K-1}).
\end{align}
To obtain partition $\mathcal{P}_{K}$ from partition $\mathcal{P}_{K-1}$, the split that maximizes the increase in signal $\Delta$ is selected. We recall that all quantities in Equation \eqref{eq:impurity} are estimated using the samples $\mathbb{X}^1_{part}$ and $\mathbb{X}^1_{part}$. Note that both $\widehat{h}_{k}^0$ and $\widehat{h}_{k}^1$ depend on the size of the partition $K$ though $\epsilon^0$ and $\epsilon^1$, and more precisely through the parameter $u$, set as $\log(4K/\alpha)$ in Theorem \ref{thm:type_I_II}. We adopt a conservative approach, and choose beforehand a maximum size $K_{\max}$ for the partition. For reasons that will become clear later on, we then set $u = \log(8K_{\max}/\alpha)$.

\paragraph{DROP as a Classification And Regression Tree algorithm} This greedy partitioning approach can be formulated as a variation of the Classification And Regression Tree (CART) algorithm. Originally developed for classification and regression problems, CART relies on an impurity measure, such as the Gini impurity measure for classification tasks or the variance of responses in regression problems. The goal of CART is to create a partition of the covariate space, or equivalently, a tree structure, where the leaf nodes have low impurity. To achieve this, the algorithm examines every potential split at each step, considering all possible directions and splitting points. It computes the reduction in impurity corresponding to each split and selects the one with the maximum reduction. After constructing the tree, various heuristics can be applied to prune it and prevent overfitting.

To adapt this algorithm to our problem, we introduce an artificial labeling scheme: we assign the label $Y=0$ (respectively, $Y=1$) to the points in the dataset $\mathbb{X}^0_{part}$  (respectively, $\mathbb{X}^1_{part}$ ). These labeled datasets are then concatenated. Finding a split with maximum signal reduces to discovering a split with low impurity measure, where the impurity measure is given by $-\Delta$, and $\Delta$ is defined in Equation \eqref{eq:impurity}.

\paragraph{Choice of the partition size}The aforementioned partitioning scheme allows us to build a sequence of nested partitions with increasing size $(\mathcal{P}_K)_{1\leq K \leq K_{max}}$. Next, we estimate the signal of each partition using the samples $\mathbb{X}^0_{est}$ and $\mathbb{X}^1_{est}$. We argue that choosing $u = \log(4 \times 2K_{\max}/\alpha)$ allows to control the error of $\widehat{h}_k^0$ and $\widehat{h}_k^1$ obtained using the sample $\mathbb{X}^0_{est}$ and $\mathbb{X}^1_{est}$ uniformly over all bins of all partitions in the sequence, since they are at most $2K_{max}$ such bins. Consequently, we select the partition $\mathcal{P}_{K^*}$, where
\begin{align}\label{eq:choice_K}
    K^* \in \underset{K\leq K_{max}}{\text{argmin}}\ \left(\sqrt{\frac{t}{n\widehat{\sigma}_{\mathcal{P}_K}^2}}\left(1 + K^0\sqrt{\frac{t}{n\widehat{\sigma}_{\mathcal{P}_K}^2}}\right) + \frac{\sqrt{u}K^1}{\widehat{\sigma}_{\mathcal{P}_K}^2\sqrt{n^1}} + \sqrt{\frac{u}{n^0\widehat{\sigma}_{\mathcal{P}_K}^2}}\left(\sqrt{K} + K^0 \sqrt{\frac{u}{n^0\widehat{\sigma}_{\mathcal{P}_K}^2}}\right)\right)
\end{align}
where $\widehat{\sigma}_{\mathcal{P}_K}^2$ correspond to the signal estimated on the partition $\mathcal{P}_k$ using the sample $\mathbb{X}^0_{est}$ and $\mathbb{X}^1_{est}$, and $n^0$ is the size of the sample  $\mathbb{X}^0_{est}$. Theorem \ref{thm:type_I_II} implies the following lower bound on the detection rate of our test, when the sequence of partitions is obtained using a sample independent from the sample used to choose the partition and conduct the test.
\begin{corollary}\label{cor:max_signal}
Let $(\mathcal{P}_K)_{K\leq K_{max}}$ be a fixed sequence of nested partitions. For each partition $\mathcal{P}_K$, let EDRT$(\mathcal{P}_{K})$ be the Estimated Density Ratio Test for this partition with parameters  $u = \log(8K_{\max}/\alpha)$ and $t = \log(2/\alpha)$, and let $\widehat{\sigma}^2_{\mathcal{P}_K}$ be the estimated signal for this partition. Then, if $3\epsilon^0 \leq \epsilon^1 \leq 1$ and $n^0\geq n^1$, for the choice $K^*$ given in Equation \eqref{eq:choice_K}, the test EDRT$(\mathcal{P}_{K^*})$ has type I error lower than $\alpha$. Moreover, EDRT$(\mathcal{P}_{K^*})$ has power larger than $1-\alpha$ against alternatives characterized by $\theta$ verifying
\begin{align*}
\theta \geq C\underset{K \leq K_{max}}{\min} \ \ \left(\sqrt{\frac{t}{n\widehat{\sigma}_{\mathcal{P}_kK}^2}}\left(1 + K^0\sqrt{\frac{t}{n\widehat{\sigma}_{\mathcal{P}_K}^2}}\right) + \frac{\sqrt{u}K^1}{\widehat{\sigma}_{\mathcal{P}_K}^2\sqrt{n^1}} + \sqrt{\frac{u}{n^0\widehat{\sigma}_{\mathcal{P}_K}^2}}\left(\sqrt{K} + K^0 \sqrt{\frac{u}{n^0\widehat{\sigma}_{\mathcal{P}_K}^2}}\right)\right)
\end{align*}
where $C$ is an absolute constant.
\end{corollary}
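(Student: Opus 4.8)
The plan is to reduce Corollary~\ref{cor:max_signal} to a direct application of Theorem~\ref{thm:type_I_II} applied to each fixed partition in the sequence, taking care of the uniformity needed because $K^*$ is chosen in a data-dependent way. The key structural point is that the sequence $(\mathcal{P}_K)_{K\leq K_{\max}}$ is fixed in advance (it does not depend on the estimation samples $\mathbb{X}^0_{est}$, $\mathbb{X}^1_{est}$ or on $\mathbb{X}$), so the only randomness that interacts with the choice of $K^*$ is through the estimated signals $\widehat{\sigma}^2_{\mathcal{P}_K}$, which are measurable functions of the estimation samples. First I would observe that the union of all bins over all partitions in the sequence has cardinality at most $\sum_{K=1}^{K_{\max}}K$, but because the partitions are nested, the total number of \emph{distinct} bins that ever appear is at most $2K_{\max}-1$; this is exactly why the excerpt prescribes $u = \log(8K_{\max}/\alpha)$, so that the concentration bounds underlying Theorem~\ref{thm:type_I_II} (which are stated for $u=\log(4K/\alpha)$ with $K$ the number of bins) hold simultaneously for every bin of every partition in the sequence after a single union bound. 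I would make this ``uniform over the sequence'' version of the concentration estimates explicit as the first lemma-level step.

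Next I would establish the Type~I error claim: for each fixed $K$, EDRT$(\mathcal{P}_K)$ with parameters $u=\log(8K_{\max}/\alpha)$ (which is at least the value $\log(4K/\alpha)$ required by Theorem~\ref{thm:type_I_II}, hence only makes the test more conservative) and $t=\log(2/\alpha)$ has Type~I error at most $\alpha$. Because $K^*$ is selected using only the estimation samples and the test is then run with this $K^*$, under $H_0$ I would condition on the estimation samples: given them, $K^*$ is deterministic, so the rejection event has conditional probability at most $\alpha$ by Theorem~\ref{thm:type_I_II} applied with partition $\mathcal{P}_{K^*}$; integrating over the estimation samples gives unconditional Type~I error at most $\alpha$. (One must check that the threshold in Definition~\ref{defi:T} uses the correct $K$ for the chosen partition, which is $K^*$, and that $3\epsilon^0\le\epsilon^1\le 1$ and $n^0\ge n^1$ are assumed.)

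For the power bound, the argument is symmetric. Condition on the estimation samples; then $K^*$ and the estimated signal $\widehat{\sigma}^2_{\mathcal{P}_{K^*}}$ are fixed, and by Theorem~\ref{thm:type_I_II} applied to the partition $\mathcal{P}_{K^*}$, the test rejects $H_0$ with (conditional) probability at least $1-\alpha$ as soon as $\theta$ exceeds $C$ times the bound in the theorem statement evaluated at $K=K^*$, namely
\begin{align*}
C\left(\sqrt{\tfrac{t}{n\widehat{\sigma}_{\mathcal{P}_{K^*}}^2}}\Bigl(1 + K^0\sqrt{\tfrac{t}{n\widehat{\sigma}_{\mathcal{P}_{K^*}}^2}}\Bigr) + \tfrac{\sqrt{u}K^1}{\widehat{\sigma}_{\mathcal{P}_{K^*}}^2\sqrt{n^1}} + \sqrt{\tfrac{u}{n^0\widehat{\sigma}_{\mathcal{P}_{K^*}}^2}}\Bigl(\sqrt{K^*} + K^0\sqrt{\tfrac{u}{n^0\widehat{\sigma}_{\mathcal{P}_{K^*}}^2}}\Bigr)\right).
\end{align*}
But by the very definition of $K^*$ in Equation~\eqref{eq:choice_K}, this quantity equals the minimum over $K\le K_{\max}$ of the same expression, so the right-hand side above is exactly the stated bound; hence whenever $\theta$ exceeds it, rejection occurs with conditional probability at least $1-\alpha$, and integrating over the estimation samples preserves this. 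The main technical obstacle, and the place where I would be most careful, is the bookkeeping in the uniformity step: one has to verify that replacing $u=\log(4K/\alpha)$ by the larger $u=\log(8K_{\max}/\alpha)$ does not break the structural inequality $3\epsilon^0\le\epsilon^1\le 1$ (it changes both $\epsilon^0$ and $\epsilon^1$, but monotonically), and that Theorem~\ref{thm:type_I_II}'s proof genuinely only needs the concentration events to hold on the bins of the single partition being used — so that a union bound over the $\le 2K_{\max}$ distinct bins in the whole nested family, rather than a fresh union bound per partition, suffices to validate the data-dependent choice. Everything else is a conditioning argument that transfers the fixed-partition guarantees of Theorem~\ref{thm:type_I_II} through the estimation-sample-measurable selection rule.
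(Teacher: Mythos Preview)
Your proposal is correct and follows essentially the same approach as the paper: observe that the nested sequence contains at most $2K_{\max}$ distinct bins, take $u=\log(8K_{\max}/\alpha)$ so that the concentration event $\mathcal{E}$ of Lemma~\ref{lem:concentration_histogram} holds simultaneously for all of them, and then rerun the proof of Theorem~\ref{thm:type_I_II} on whichever partition $\mathcal{P}_{K^*}$ gets selected. The paper's own proof is a two-sentence sketch of exactly this; your version is more explicit about the conditioning and about the definition of $K^*$ delivering the minimum, and you correctly flag the one point that needs checking (that the proof of Theorem~\ref{thm:type_I_II} only uses $\mathcal{E}$ on the bins of the partition actually in play, so a single union bound over the $\le 2K_{\max}$ bins suffices). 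One small caveat on presentation: the statement ``the rejection event has conditional probability at most $\alpha$ by Theorem~\ref{thm:type_I_II}'' is slightly imprecise, since Theorem~\ref{thm:type_I_II} is an unconditional bound that already absorbs $\mathbb{P}(\mathcal{E}^c)$; the clean way is to work directly on the uniform event $\mathcal{E}$ (probability $\ge 1-\alpha/2$) and bound the conditional rejection probability by $e^{-t}=\alpha/2$ there, which is what the paper's ``follows similarly'' is pointing at and what your final paragraph effectively recognizes.
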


The DROP algorithm, combined with the Estimated Density Ratio Test, provides an easily interpretable test for detecting the presence of a specific type of cells. Indeed, the computation of the signal $\widehat{\sigma}_{\mathcal{P}_{K^*}}$ yields, as a byproduct, the contribution of different bins to the signal, indicating their significance in the test. A region with a substantial contribution corresponds necessarily to an area with a high probability under $f^1$ and a low probability under $f^0$. Intuitively, the test assigns significant weight to observations in the test sample falling into these regions, and it rejects the null hypothesis if they are too numerous. The partitioning of the space using a tree is akin to the process of ``manual gating", and the obtained regions can easily be manually inspected to assess the good behavior of the test.

\section{Experiments}\label{sec:expe}

In this Section, we investigate the empirical performance of our test. We begin by showcasing in Section \ref{sec:exp_part} how the partition selected by DROP captures the high-signal regions of the sample space. In Section \ref{sec:exp_simul}, we investigate the power of the Estimated Density Ratio Test on simulated datasets. Our experiment reveal that this test is overly conservative. To circumvent this problem, we propose in Section \ref{sec:exp_MMD} a Boostraped Estimated Density Ratio Test, and we compare its performance to that of the Estimated Density Ratio Test and of a benchmark algorithm for two-sample testing based on Maximum Mean Discrepancy. Finally, we illustrate our method by applying it to real flow cytometry data in Section \ref{sec:exp_HIPC}. Additional simulations illustrating the robustness of EDRT against variations of the density $f^0$ can be found in Appendix \ref{app:simu}. The code for all experiments can be found at \url{https://github.com/SolenneGaucher/DensityRatioTest}.

    \paragraph{Implementation} As mentioned earlier, the partition-sequence-building component of the DROP algorithm can be considered a variation of the CART algorithm, where the impurity measure is defined as $-\Delta$. Building on this remark, we make use of the {\texttt{R}} \citep{R} package {\texttt{rpart}} \citep{rpart} to implement our algorithm. This package enables users to specify the impurity measure for splitting criteria and handles the required bookkeeping efficiently through an optimized \texttt{C} implementation. Moreover, we implement a constraint in our algorithm, preventing the splitting of a bin if it contains fewer than $3n^1_{part}\epsilon^1$ points. Note that in most settings, $n^1_{part} \leq n^0_{part}$. In this case, if a bin contains less than $3n^1_{part}\epsilon^1$ points, this bin belongs to $\Omega^{01}$, and its contribution to the estimated signal is null. In practice, when selecting the optimal size for the partition, we observe that in our experiments, the leading terms in criteria \eqref{eq:choice_K} are $\sqrt{\frac{t}{n\widehat{\sigma}_{\mathcal{P}_K}^2}}$ and $\sqrt{\frac{u}{n^0\widehat{\sigma}_{\mathcal{P}_K}^2}}$. We therefore simplify our selection criteria, and choose the partition with size 
    \begin{align}\label{eq:true_K}
    K^* \in \underset{K\leq K_{max}}{\text{argmin}}\ \sqrt{\frac{1}{n\widehat{\sigma}_{\mathcal{P}_K}^2}} \vee \sqrt{\frac{K}{n^0\widehat{\sigma}_{\mathcal{P}_K}^2}}.
\end{align}

\paragraph{Simulation settings}  Unless otherwise mentioned, we consider the following three experimental settings. We consider a two-dimensional problem for visualization purposes, where  $f^0$ and $f^1$ are two truncated Gaussians in $[0,1]^2$ with covariance $\left({1/100 \atop 0} {0 \atop 1/100}\right)$. In Setting A, their means are respectively $(3/10, 3/10)$ and $(7/10, 7/10)$.  In Setting B, their means are respectively $(4/10, 4/10)$ and $(6/10, 6/10)$.  In Setting C, their means are respectively $(4/10, 4/10)$ and $(5/10, 5/10)$. Note that these settings are ordered by increasing difficulty of the contamination detection problem, or equivalently by decreasing signal $ \widehat{\sigma}^2$. We denote by $n^{train}$ the total number of training samples, and we assume that $n^0 = 0.7 \times n^{train}$ and $n^1 = 0.3 \times n^{train}$. In our experiments, we let $n^{train}$ vary from $1000$ to $1000000$. 

\subsection{Choice of the partition}\label{sec:exp_part}
In our first experiments, we assess the ability of the DROP algorithm to identify a high-signal partition of the sample space. Throughout this section, we contrast our results with a benchmark CART algorithm using the popular Gini Index. 

\paragraph{CART with Gini Index} When optimizing the partition with regard to the Gini Index, the split is determined to maximize
$$p_L(1-p_L) + p_R(1-p_R)$$
where $p_L$ (and $p_R$) denotes the proportion of points with label 1 in the left (and right) bin, respectively. It is important to note that the Gini index criterion is influenced by the proportion of points in the bin with label 0, leading to potentially distinct partitions based on different values of the sample size ratio $n^0/n^1$. In our simulations, we consider $n^0/n^1 = 7/3$, and we compare the partitions obtained when the partition sequence is constructed using the Gini index splitting criteria, to that obtained using the impurity measure $-\Delta$. The size of the partition is chosen in both cases according to the criterion in Equation \eqref{eq:true_K}.

\paragraph{Experiment} We use half of the observations to grow the partition sequence, while the other half is used to estimate the parameters $\widehat{h}_k^0$, $\widehat{h}_k^1$, $\widehat{r}_k$ and $\widehat{\sigma}$, and choose the partition. Note that in the regimes considered in Experiment 1, Experiment 2 and Experiment 3a, we assume that $n \approx n_0$, so that $K^*$ simply maximizes $\nicefrac{\widehat{\sigma}_{\mathcal{P}_K}^2}{K}$. In Experiment 3b, we investigate the impact of choosing a larger partition size on the signal $\widehat{\sigma}^2$. More precisely, we assume that $\nicefrac{n^0}{n} \geq \log(n^{train})$, so that $K^* \geq \log(n^{train})$.

\begin{figure}[!h]
\subcaptionbox{\small Experiment 1\label{fig:1}}
{\includegraphics[width=0.43\textwidth]{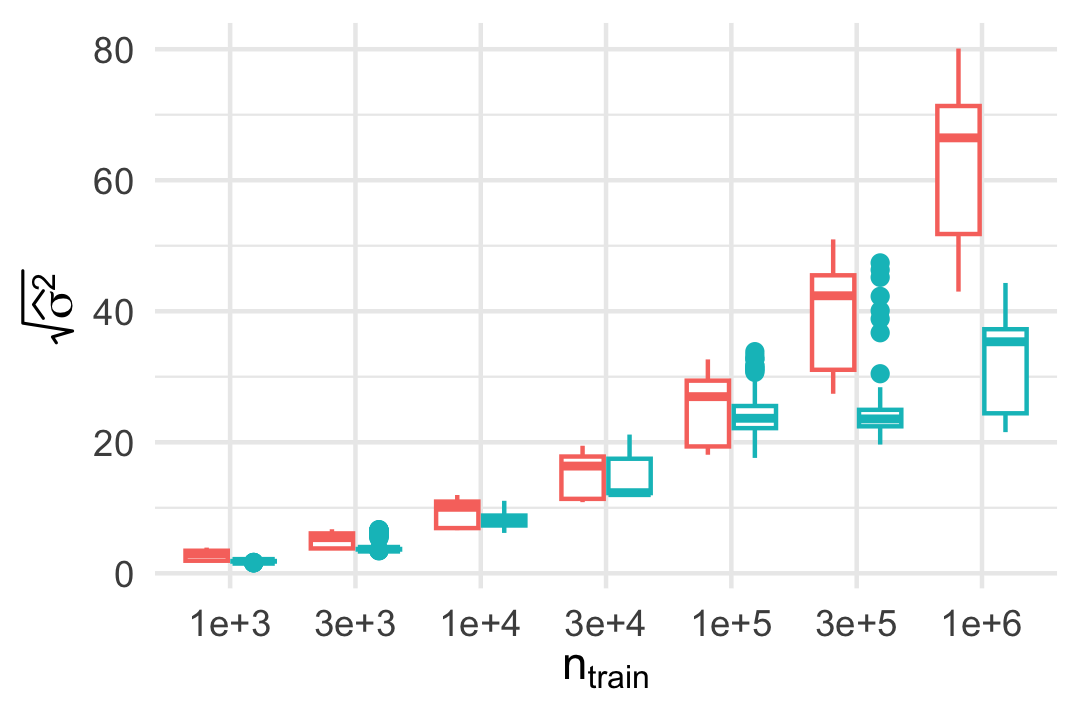}}
\subcaptionbox{\small Experiment 2\label{fig:2}}
{\includegraphics[width=0.43\textwidth]{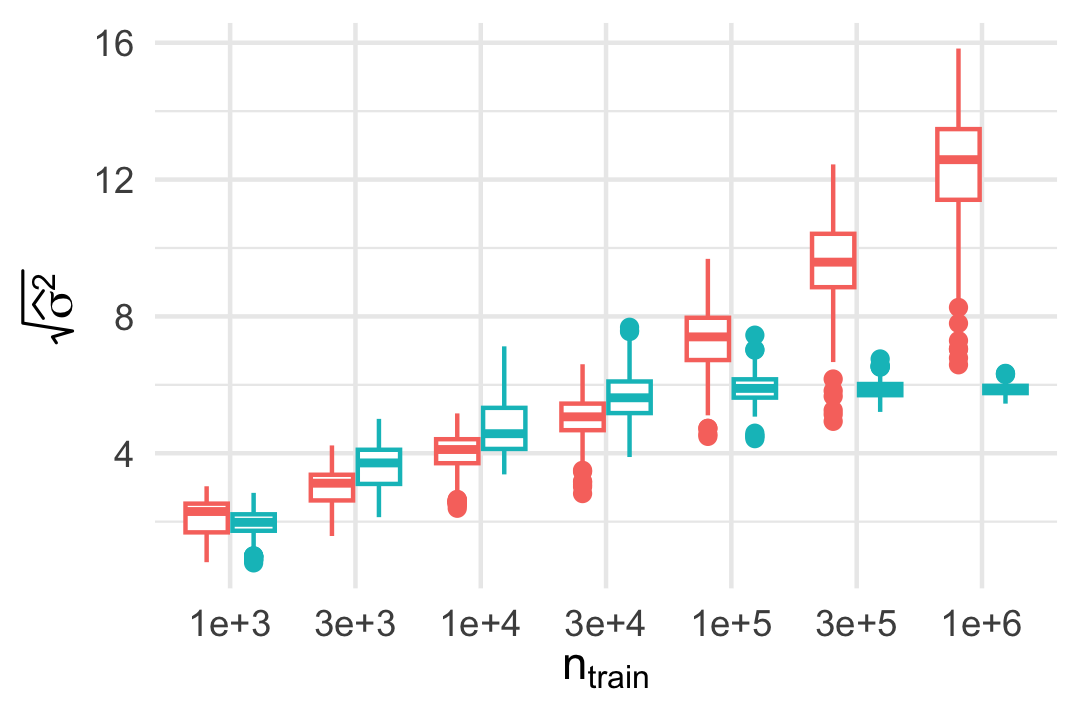}}\\
\subcaptionbox{\small Experiment 3a\label{fig:3a}}
{\includegraphics[width=0.43\textwidth]{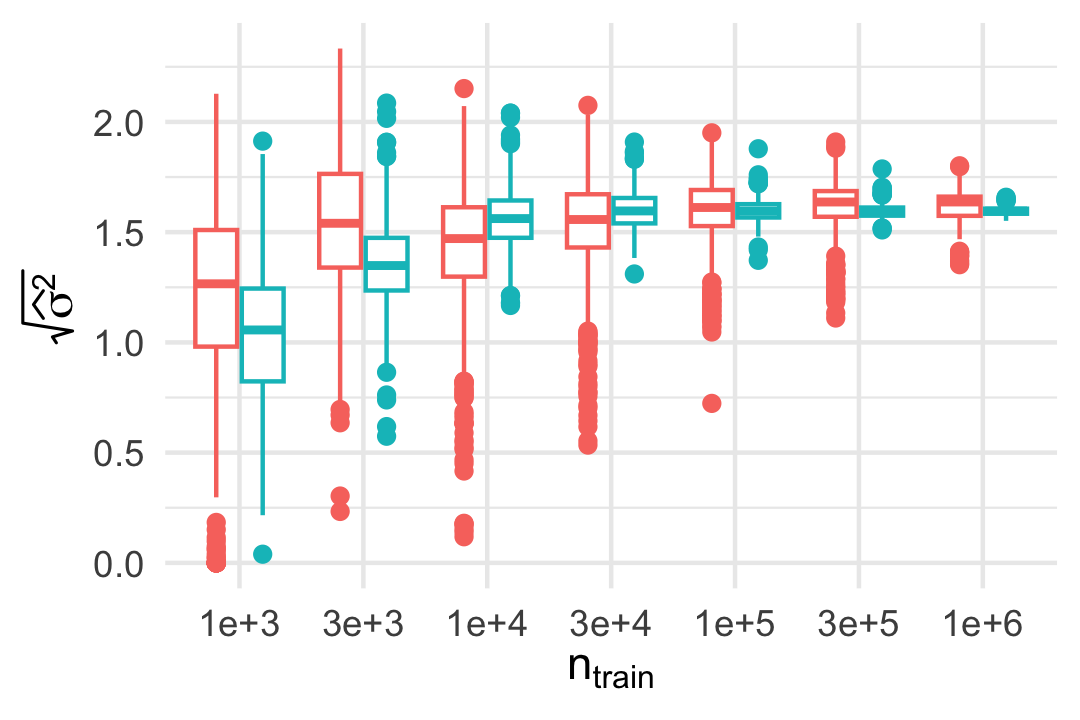}}
\subcaptionbox{\small Experiment 3b\label{fig:3b}}
{\includegraphics[width=0.56\textwidth]{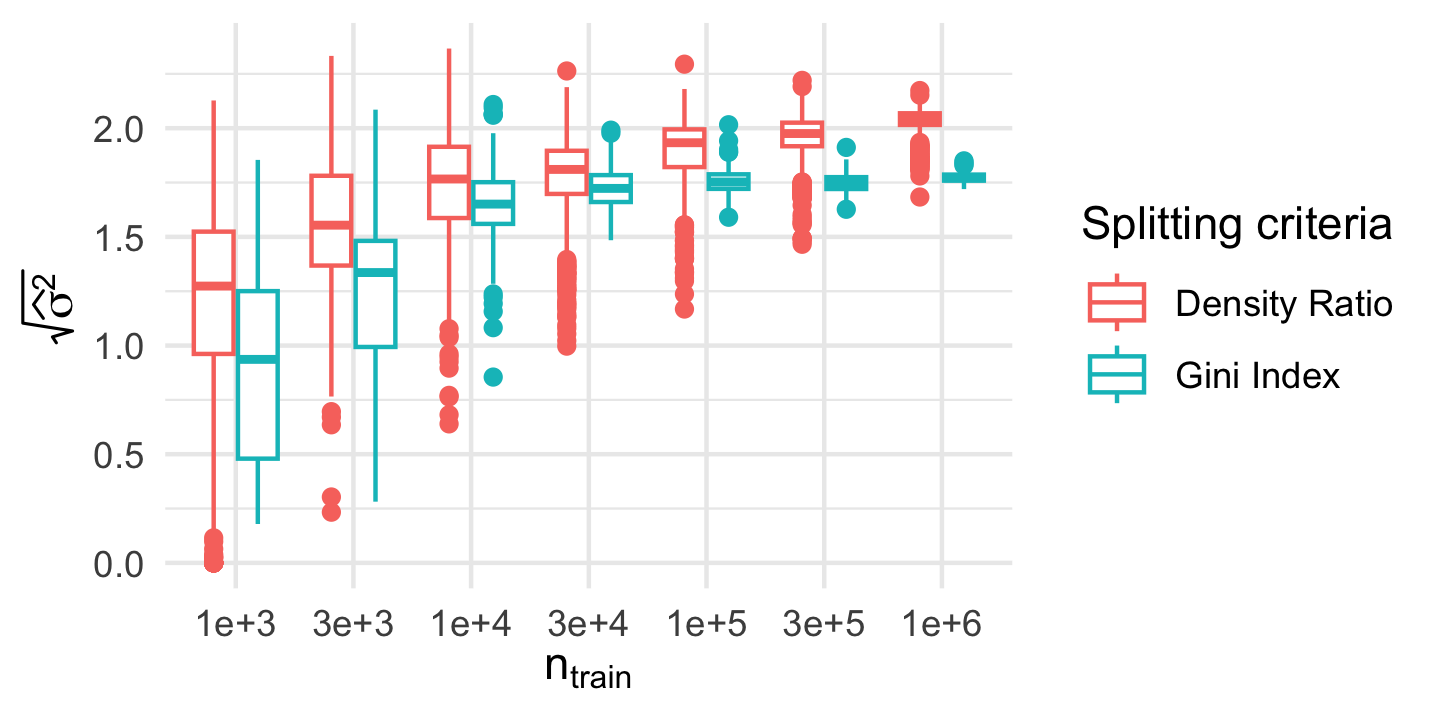}}
\caption{Estimated signal $\sqrt{\widehat{\sigma}^2}$ for different size of training samples $n^{train}$ corresponding to the partition obtained using the density ratio criteria (\texttt{red}) and the Gini index (\texttt{blue}). The distributions $f^0$ and $f^1$ correspond to Setting A (\texttt{top left}), Setting B (\texttt{top right}), and Setting C (\texttt{bottom}). We consider respectively $n \approx n^0$ (\texttt{top} and \texttt{bottom left}) and $\nicefrac{n^0}{n} \approx \log(n^{train})$ (\texttt{bottom right}) when choosing $K^*$ according to criterion \eqref{eq:choice_K}. Each experiment is reproduced 100 times.}
\label{fig:signal_variying_means}
\end{figure}

As anticipated, a comparison of Figures \ref{fig:1}, \ref{fig:2}, and \ref{fig:3a} reveals that as the separation between the means of the two distributions widens, the signal for testing $H_0$ against $H_1$ increases for both choices of partitions. Nevertheless, in some regimes we observe a strong contrast in the signal strength captured by those partitions.

\paragraph{Large signal setting} In Settings A and B, certain regions of the sample space $[0,1]^d$ exhibit high density under $f^1$ and low density under $f^0$. These settings allow for the possibility of achieving detection rates faster than the typical $1/\sqrt{n}$, as demonstrated in Lemma \ref{lem:known_f_somewhat_fast_rates}. Our simulations in Figure \ref{fig:1} and \ref{fig:2} reveal that the partition selected by DROP can adapt to identify these high-signal regions. Furthermore, the estimated signal for these partitions sharply increases with the size of the training sample. Indeed, the density ratio criterion seeks to identify regions with low density under $f^0$ and high density under $f^1$. With an increasing number of points, the partition can be refined in regions of the space where too few points were previously available. 

In contrast, the partition generated using CART with the Gini index appears unable to capture this rise in signal. To understand this phenomenon, note that CART aims to distinguish between areas where the majority of points originates from distribution $f^0$ and regions dominated by points from distribution $f^1$. Consequently, increasing the number of points allows for a more precise estimation of the boundary that separates these two regions without significantly altering its position. Since the bins remain unchanged, the estimated signal within these bins remains relatively constant. This phenomenon is illustrated in Figure \ref{fig:partition}, underlining that the partition resulting from the Gini index remains almost constant as the sample size increase. In contrast, the boundaries delineated by the partition obtained through the density ratio criterion shift towards regions characterized by very low density $f^0$ and high signal $(f^1)^2/f^0$.

\begin{figure}[h!]
\centering
\subcaptionbox{\small Partition in Setting B, $n^0+n^1 = 10 000$\label{fig:part1}}
{\includegraphics[width=0.44\textwidth]{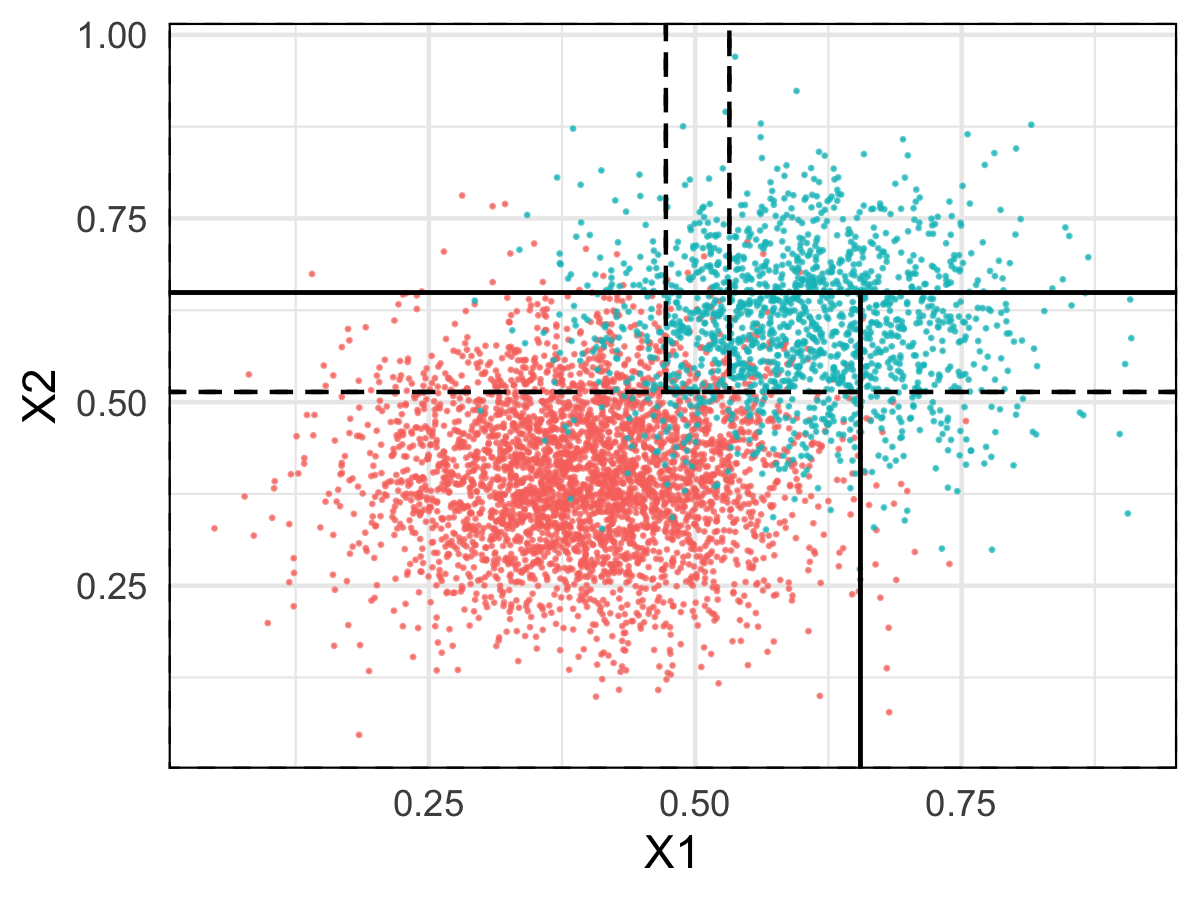}}
\subcaptionbox{\small Partition in Setting B, $n^0+n^1 = 1 000 000$\label{fig:part2}}
{\includegraphics[width=0.55\textwidth]{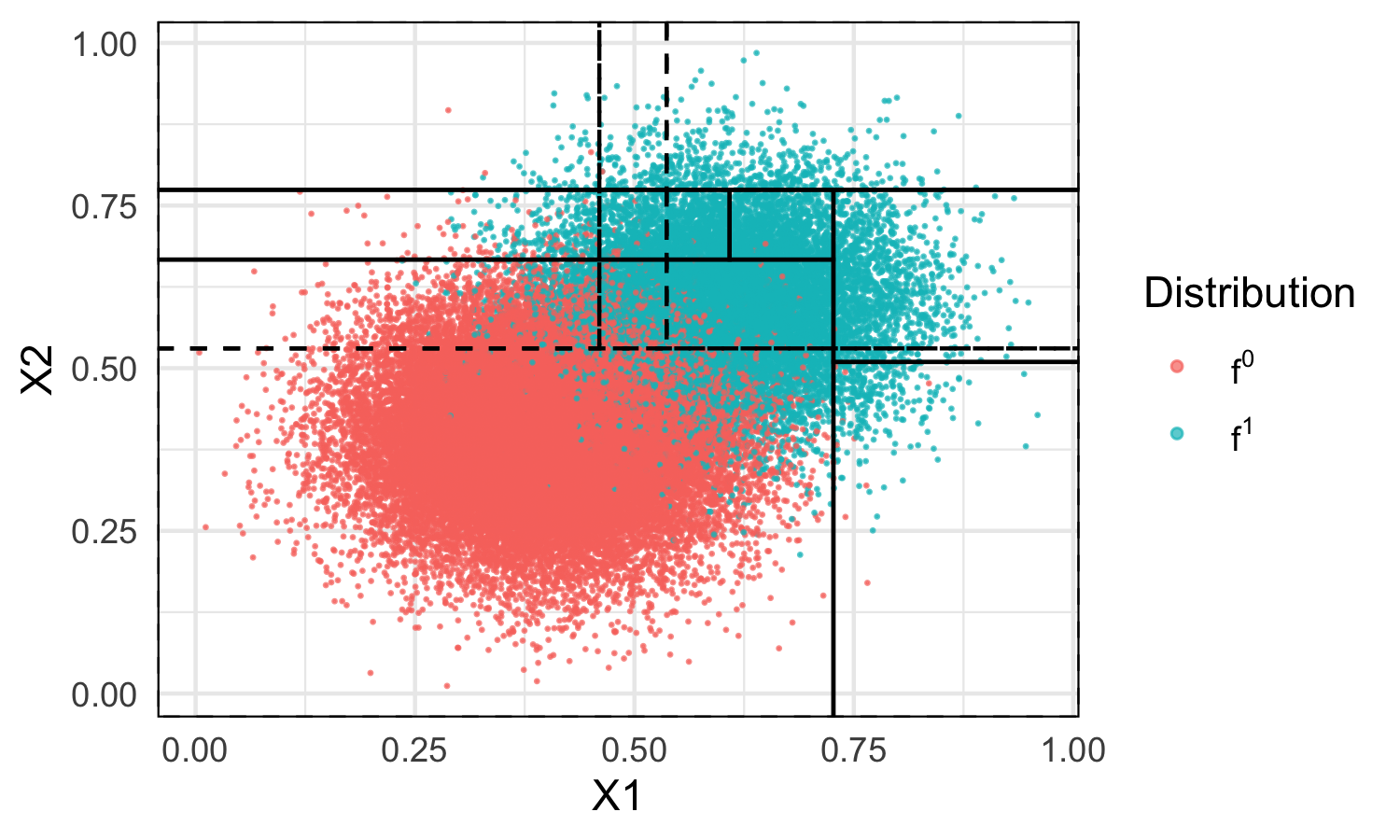}}\\
\caption{Partition chosen in Setting B for $n^{train} = 10 000$ (\texttt{left}) and $n^{train} = 1 000 000$ (\texttt{right}), using the density ratio criteria (full lines) and the Gini index (dotted lines). The points drawn from $f^0$ are in red, the points drawn from $f^1$ are in blue.}
\label{fig:partition}
\end{figure}

\paragraph{Low signal setting}
Experiments 3a and 3b correspond to scenarios where the distributions $f^0$ and $f^1$ are highly similar, making the test problem inherently challenging. Unlike the previous scenarios, the estimated signal does not show significant increase even as the number of points exceeds 1000. It is noteworthy that unless $n^0/n \rightarrow \infty$, and we compel $K^*$ to increase, the partition chosen based on the Gini index and the partition selected using the density ratio criterion yield equivalent signal terms $\widehat{\sigma}^2$. However, if $K^*$ increases with the number of training samples, the signal term appears to exhibit gradual growth with the size of the partition chosen according to the density ratio criterion, while it seems to remain unchanged when using the Gini index. Once again, this observation can be attributed to the fact that the Gini index aims to separate the two densities, while the density ratio criterion seeks regions characterized by high signal $(\nicefrac{f^1}{f^0} - 1)^2 \times f^0$.

\subsection{Type I and II error for simulated data}\label{sec:exp_simul}

\paragraph{Detection threshold for different signal strengths} We now investigate the empirical type I and type II error of the Estimated Density Ratio Test. We consider the distributions $f^0$ and $f^1$ of Settings A, B and C. We simulate mixtures sampled from $(1-\theta)\times f^0 + \theta\times f^1$ for $\theta$ varying between 0.0003 and 0.3.

\begin{figure}[!ht]
\subcaptionbox{\small Setting A.\label{fig:exp9_low}}
{\includegraphics[width=0.3\textwidth]{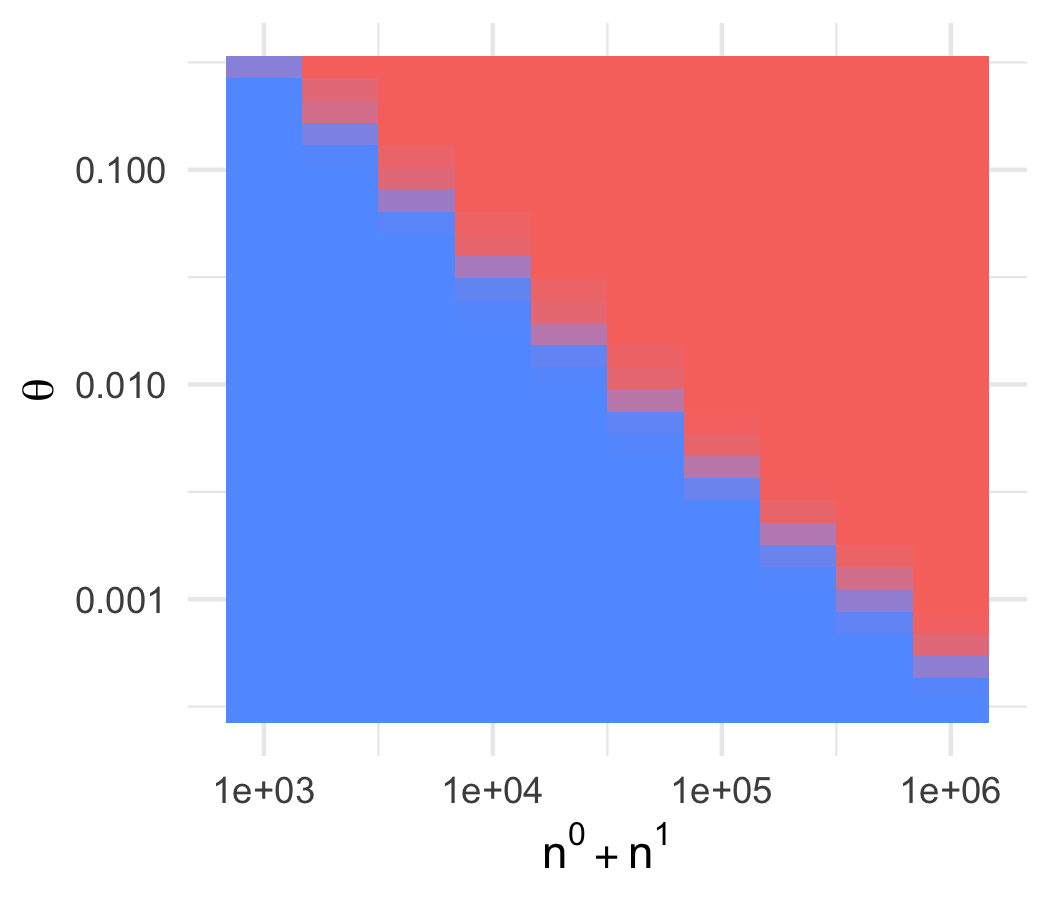}}
\subcaptionbox{\small Setting B.\label{fig:exp9_low_2}}
{\includegraphics[width=0.3\textwidth]{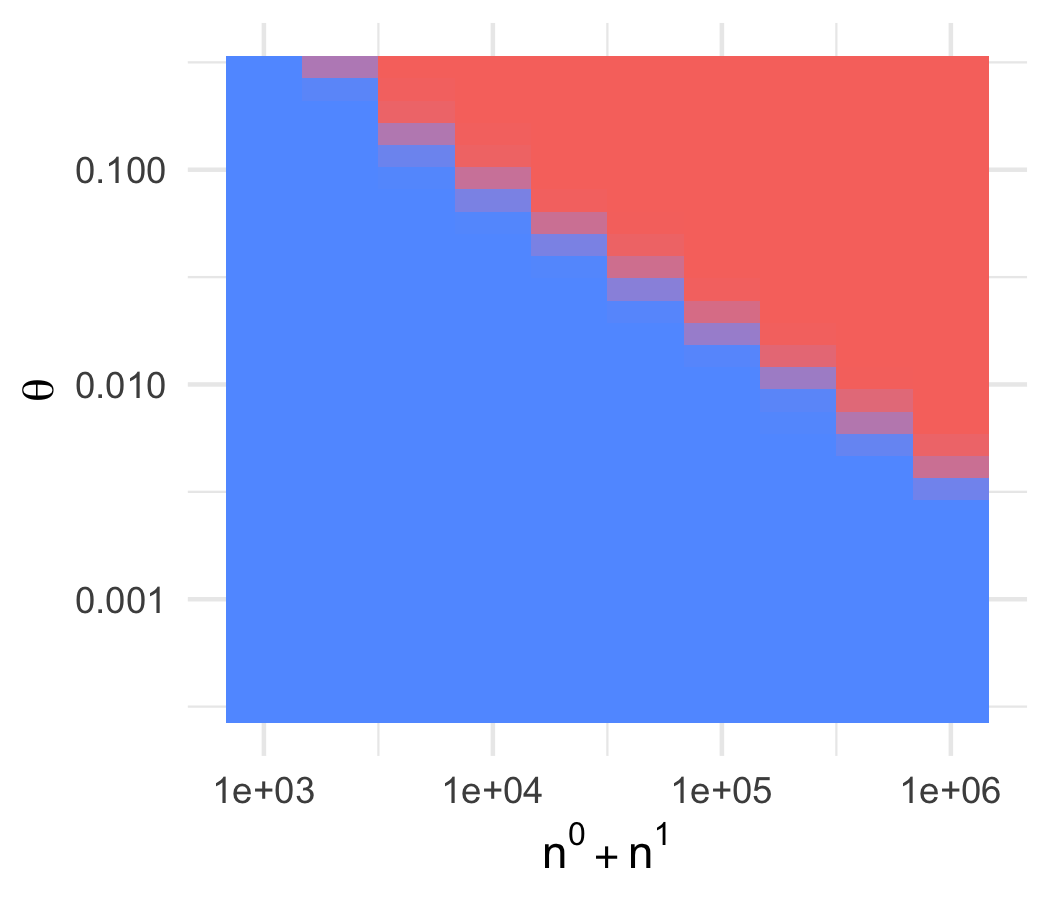}}
\subcaptionbox{\small Setting C. \label{fig:exp9_high}}
{\includegraphics[width=0.34\textwidth]{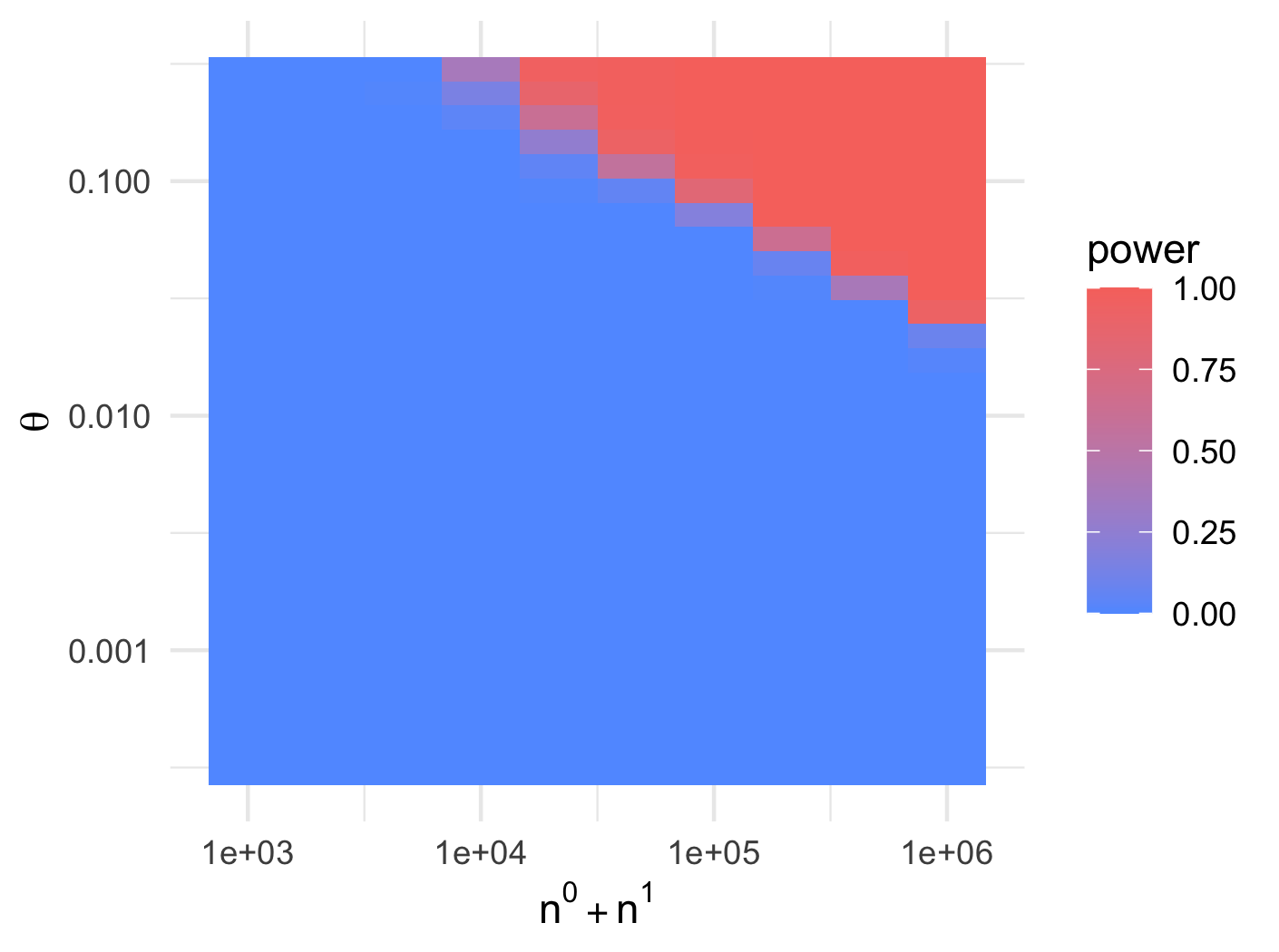}}\\
\caption{Proportion of runs for which the Estimated Density Ratio Test rejected the null hypothesis, for varying values of $\theta$ and of training sample size $n^{train}$. Each experiment is reproduced 100 times.}
\label{fig:reject}
\end{figure}

Figure \ref{fig:reject} presents the power in Settings A, B, and C, for a grid of 30 values of $\theta$. As expected, we observe that the detection threshold tends to $0$ as the sample size increases, as expected, and that is smaller when the signal is strong. Additional experiments in Appendix \ref{app:simu_detection_rate} reveal that the detection rate is of order $n^{-0.97}$ in Setting A, $n^{-0.76}$ in Setting B, and $n^{-0.65}$ in Setting C.

Our experiment also reveals that the Estimated Density Ratio Test has a type I error lower than the desired level $\alpha = 0.05$. In fact, across all tested values of $n^{train}$ and all repetitions carried out for $\theta = 0$, the test never rejected $H_0$. The theoretical bounds on the fluctuations of $\sum_k(\widehat{r}_k - 1)N_k$ under $H_0$, which were used to design the Estimated Density Ratio Test, appear to be overly conservative in practice. To address this issue, we propose in Section \ref{sec:exp_MMD} a new test based on the empirical estimation of this statistic using the bootstrap method.

\subsection{Comparison with two bootstraped tests: Boootstraped Density Ratio and Maximum Mean Riscrepancy} \label{sec:exp_MMD}

\paragraph{Bootstrap Estimated Density Ratio Test} The simulations above suggest that the theoretical bounds on the fluctuations of the statistic $\widehat{S}(\mathbb{X})$ under $H_0$ are overly conservative. In fact, the empirical type I error consistently remains at 0 for all values of the training sample size, whereas a well-calibrated test should exhibit a type I error of approximately 0.05 in these simulations. The excessively conservative nature of the test can result in a reduction in power. To address this issue, we propose a new test that estimates the fluctuations of $\widehat{S}(\mathbb{X})$ using bootstrapping methods. To alleviate the computational burden, the partition $\mathcal{P}$ is chosen once and for all using $\mathbb{X}^0_{part}$ and $\mathbb{X}^1_{part}$. Subsequently, we estimate the distribution of $\widehat{S}(\mathbb{X})$ under $H_0$ on this partition with bootstrap methods, using the samples in $\mathbb{X}^0_{est}$ and $\mathbb{X}^1_{est}$.

More precisely, we partition the training sample $\mathbb{X}^0_{est}$, which is used for the estimation of $h_k^0$, into two samples. We begin by sampling (without replacement) a subset of size $n$ from the training sample $\mathbb{X}^0_{est}$, denoted as $(\mathbb{X}^0_{est})^{I}$. Following this, we sample (with replacement) $n^0$ points from the remaining points of $\mathbb{X}^0_{est}$, and we denote this sample as $(\mathbb{X}^0_{est})^{II}$. Lastly, we sample $n^1$ points from $\mathbb{X}^1_{est}$ (with replacement) and denote this sample as $(\mathbb{X}^1_{est})^{I}$. We estimate the density ratio $\widehat{r}_k$ for the bins $B_k \in \mathcal{P}$ using the samples $(\mathbb{X}^0_{est})^{II}$ and $(\mathbb{X}^1_{est})^{I}$, and we compute the statistic $\widehat{S}((\mathbb{X}^0_{est})^{I})$. This process is repeated numerous times to obtain a threshold $\tau$, corresponding to the empirical quantile of level 0.95 of the statistic  $\widehat{S}((\mathbb{X}^0_{est})^{I})$ for these bootstrapped samples. We then re-estimate the quantities $\widehat{r}_k$ using the full samples $\mathbb{X}^0_{est}$ and $\mathbb{X}^1_{est}$, along with the statistic $\widehat{S}(\mathbb{X})$ corresponding to the true test sample. We reject the hypothesis $H_0$ if $\widehat{S}(\mathbb{X}) > \tau$. This test is hereafter referred to as the Bootstrap Estimated Density Ratio Test.

In this Section, we compare the performance of the Estimated Density Ratio Test with that of the Bootstrap Estimated Density Ratio Test, as well as a classical two-sample test based on Maximum Mean Discrepancy (MMD), due to \cite{gretton}. Before presenting our simulations, we provide a brief reminder on MMD. 

\paragraph{Maximum Mean Discrepancy} Let $(\mathcal{X}, d)$ be a metric space, and $p,q$ be two Borel probability measures defined on $\mathcal{X}$. Consider a class of functions $\mathcal{F}$, where $f: \mathcal{X} \rightarrow \mathbb{R}$. The Maximum Mean Discrepancy between $p$ and $q$ is defined as:
$$\MMD[\mathcal{F}, p, q] = \sup_{f \in \mathcal{F}}\left\{\mathbb{E}_{X \sim p}f(X) - \mathbb{E}_{Y \sim q}f(Y)\right\}.$$
When the function class $\mathcal{F}$ is rich enough, $p=q$ if and only if $\MMD[\mathcal{F}, p, q] = 0$. To compute the MMD of $p$ and $q$, we can rely on the following property. When the class of functions $\mathcal{F}$ is the unit ball in a reproducing kernel Hilbert space $\mathcal{H}$ with kernel $k$, and when $\mathbb{E}_{X \sim p}\sqrt{k(X,X)} < \infty$ and $\mathbb{E}_{X \sim q}\sqrt{k(X,X)} < \infty$, we have:
$$\MMD[\mathcal{F}, p, q] = \mathbb{E}_{X,X' \sim p}\left[k(X,X')\right] + \mathbb{E}_{Y,Y' \sim q}\left[k(Y,Y')\right] - 2 \mathbb{E}_{X \sim p, Y \sim q}\left[k(X,Y)\right].$$
This formula can be used to estimate the MMD between $p$ and $q$, assuming samples $(X_1, ..., X_{n})\sim p$ and $(Y_1, ..., Y_{n'})\sim q$ are available. To do so, we could replace the unknown expectations in the formula with their empirical counterparts. However, estimating the MMD between two samples of size $n$ would require $n^2$ computations, which can be prohibitive in practice. When sample sizes are large, \cite{gretton} suggest approximating the MMD statistic using the following statistic, computed in linear time (assuming, for simplicity, that $n = n'$ is even):
$$\widehat{\MMD}_l = \frac{1}{n/2}\sum_{i \leq n/2}k(X_{2i}, X_{2i+1}) + k(Y_{2i}, Y_{2i+1}) - k(X_{2i}, Y_{2i+1}) - k(X_{2i+1}, Y_{2i}).$$
In practice, its distribution is estimated by bootstrapping, and the hypothesis $p = q$ is rejected if $\widehat{\MMD}_l$ is larger than the corresponding quantile of level $1-\alpha$.

\paragraph{Detection rate of MMD and Bootstrap Estimated Density Ratio Test} We compare the power of MMD with that of the Estimated Density Ratio Test and its bootstrap equivalent. As before, we simulate samples corresponding to Settings A, B, and C. We simulate mixtures sampled from $(1-\theta)\times f^0 + \theta f^1$ for $\theta$ varying between 0.0003 and 0.3. The results are reported in Figure \ref{fig:exp123_power}. Note that the computation time of MMD is much larger than that of the Estimated Density Ratio Test and its bootstraped version, and becomes prohibitive for test sample sizes above $100 000$. For example, MMD is approximately 30 times slower than BEDRT for $n^{train} = 100 000$.
\begin{figure}[h!]
\centering
\subcaptionbox{\small Setting A.\label{fig:exp1_power}}{\includegraphics[width=0.45\textwidth]{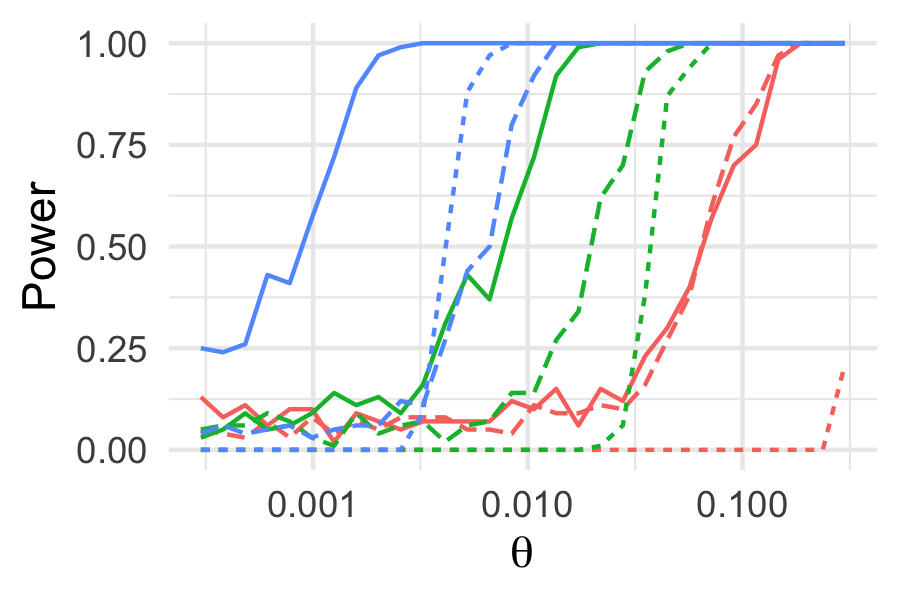}}
\subcaptionbox{\small Setting B.\label{fig:exp2_power}}{\includegraphics[width=0.45\textwidth]{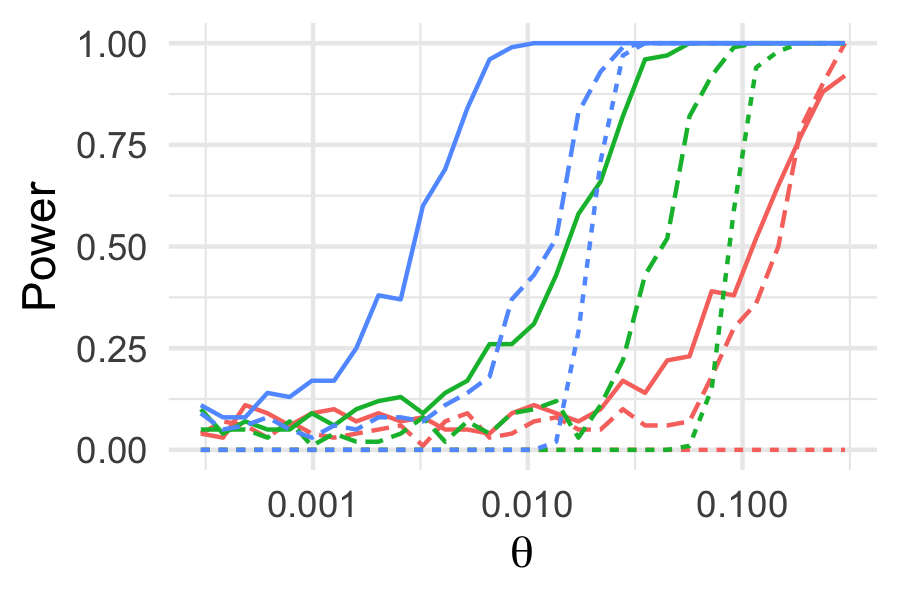}}
\\

\vspace{0.5cm}
\subcaptionbox{\small Setting C. \label{fig:exp3_power}}{\includegraphics[width=0.5\textwidth]{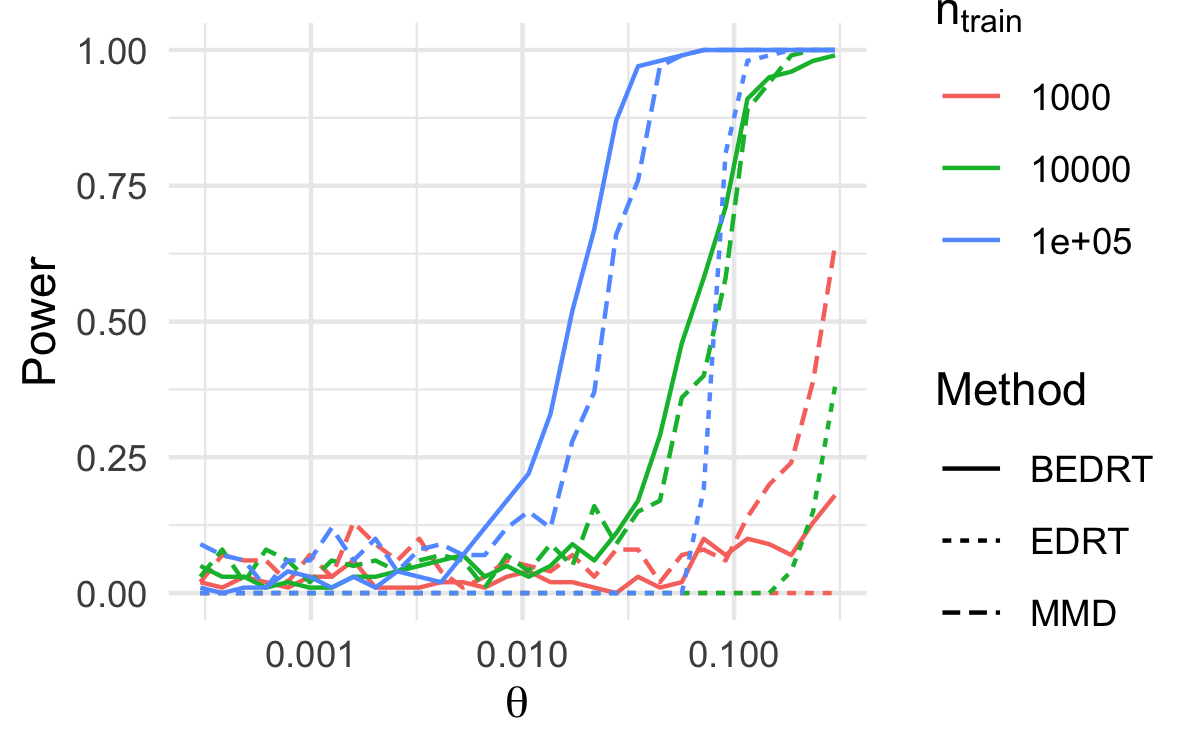}}\\
\caption{Proportion of runs for which the Estimated Density Ratio Test (EDRT), Bootstrap Estimated Density Ratio Test (BEDRT) and MMD test (MMD) rejected the null hypothesis for varying values of $\theta$ and of training sample size $n^{train}$. Each experiment is reproduced 300 times.}
\label{fig:exp123_power}
\end{figure}

Figure \ref{fig:exp123_power} reveals that in settings corresponding to large and intermediate signal strength, the power of the Bootstrap Estimated Density Ratio Test is comparable to that of MMD for small sample sizes, and significantly better as the sample size increases. Additional experiments, presented in the Appendix, confirm this phenomenon by showing that the detection rate for MMD decreases as $n^{-0.5}$, while the detection rate of the Bootstrap Estimated Density Ratio Test approaches $n^{-1}$ in these settings. In contrast, when the sample size is small, and the signal is low, MMD outperforms the Bootstrap Estimated Density Ratio Test. It is important to note that both the Estimated Density Ratio Test and the Bootstrap Estimated Density Ratio Test utilize the sample $\mathbb{X}^1$ to estimate the density $f^1$. When the sample size is small, the information provided by this sample becomes less significant. Moreover, histogram-based tests, such as the Estimated Density Ratio Test and the Bootstrap Estimated Density Ratio Test, retain less information than MMD. These factors may explain the relative loss of performance of these tests for small sample sizes in low signal settings.

\subsection{Application to flow cytometry data}\label{sec:exp_HIPC}
\paragraph{Cytotrol FlowCAP III dataset} We illustrate our test on flow cytometry data by applying it to the T-cell panel of the Cytotrol FlowCAP III dataset provided by the Human Immune Phenotyping Consortium (HIPC), available on the ImmuneSpace website. This dataset consists of 62 samples, representing 3 replicates of analyses conducted on 3 patients across 7 different laboratories (one dataset has been duplicated, so we exclude one of the copies). For each cell, seven biological markers have been measured. These samples have been manually annotated, and the cells have been categorized into the following 10 types: CD4 Effector (CD4 E), CD4 Naive (CD4 N), CD4 Central Memory (CD4 CM), CD4 Effector Memory (CD4 EM), CD4 Activated (CD4 A), CD8 Effector (CD8 E), CD8 Naive (CD8 N), CD8 Central Memory (CD8 CM), CD8 Effector Memory (CD8 EM), and CD8 Activated (CD8 A). Each sample contains a varying number of cells, ranging from 15,554 to 112,318 cells. Additionally, the cell proportions vary significantly, with proportions typically ranging from 1\% to 25\% of the sample. There is notable variability between samples, with standard deviations as high as 94\% of the mean proportion for certain cell populations. 

\paragraph{Experiment} We focus our study on 4 types of cells, each with varying prevalence in the sample. More precisely, we focus on detecting the cells CD4 N, CD8 E, CD8 EM, and CD8 CM whose average prevalence in the sample represent respectively 17\%, 1\%, 11\% and 24\% of the sample cells. We investigate the ability of our test to detect the presence of those cells. For each cell type $c \in \{$CD4 N, CD8 E, CD8 EM, CD8 CM$\}$, we treat cells of type $c$ as being drawn from distribution $f_1$, while the remaining nine cell types are considered drawn from $f_0$. We assess the power of our test against alternatives characterized by different prevalence levels $\theta$. Because of the large computation time required by MMD, we limit the test sample size to 10000 cells. Details of the implementation can be found in the Appendix. The result of this experiment is presented In Figure \ref{fig:exp_HIPC}.

\begin{figure}[h!]
\centering
\includegraphics[width=\textwidth]{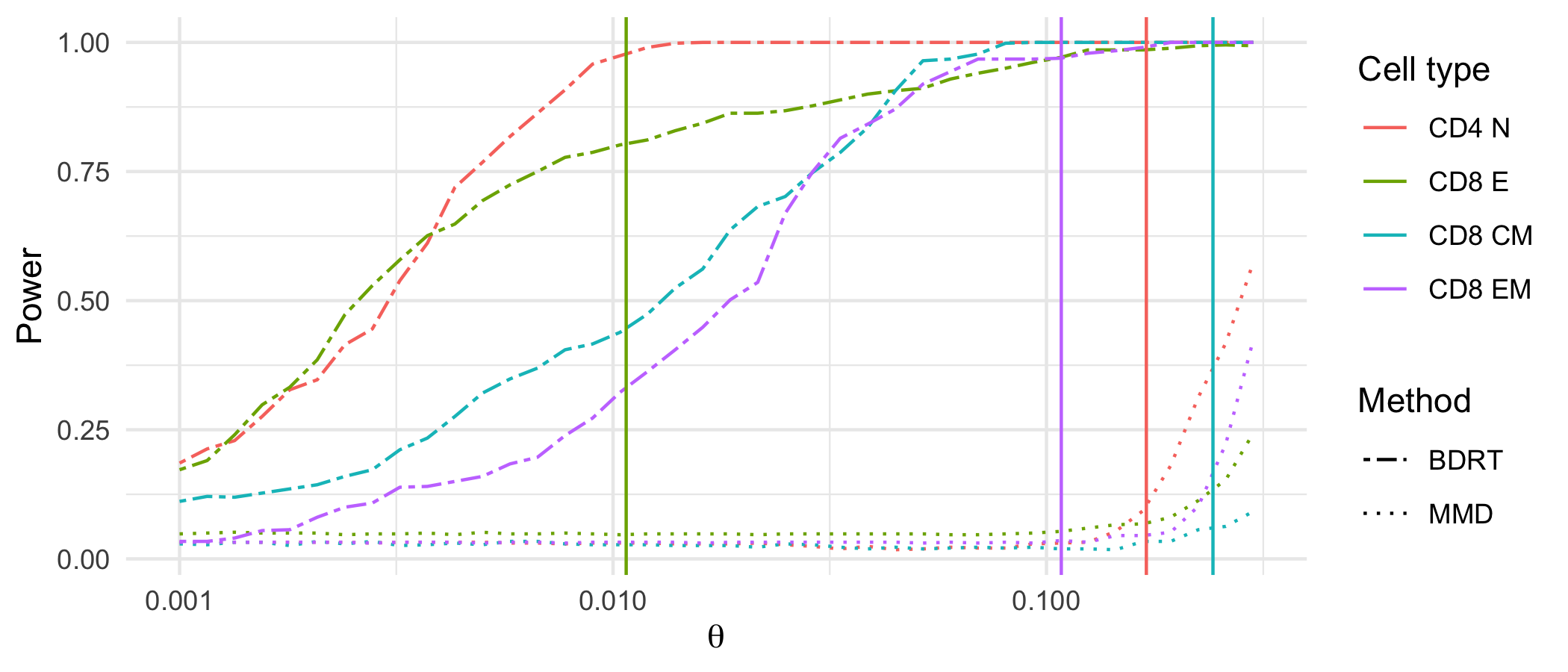}\\
\caption{Detection rates for different types of cells and different prevalence. The full line indicate the true average in-sample cell prevalence. The dashed (resp. dotted) line indicate the power of BEDRT (resp. MMD) against the alternative for varying levels of prevalence $\theta$.}
\label{fig:exp_HIPC}
\end{figure}

\paragraph{Results} The average type I error is approximately $0.040$ for the BEDRT, and $0.038$ for the MMD test. The results illustrate the good performances of BEDRT, demonstrating its ability to accurately identify cells such as ``CD4 N," ``CD8 CM", and ``CD8 EM" with probabilities nearly reaching 1 at their actual prevalence levels. Additionally, BEDRT successfully detects the rare cell ``CD8 E" with a probability exceeding 0.75 at its true prevalence level. In stark contrast, MMD consistently falls short, achieving low power for the specified prevalence levels. Furthermore, the substantial computational time demanded by MMD prohibits the analysis of large samples. In contrast, BEDRT, being significantly faster, can efficiently analyze larger datasets.
\section{Conclusions}
In this paper, we study a problem of supervised contamination detection. We design a test for this problem, and we establish non-asymptotic bounds on its type I error, and on its power. The efficacy of this test depends heavily on the judicious selection of a partition for estimating the densities of inliers and outliers. We introduce a partitioning algorithm designed to efficiently capture the signal for the test. Both the resultant test and partitioning algorithms yield results that are easily interpretable.

Implementation of this test is straightforward using available packages for Classification And Regression Trees. We demonstrate the good performance of the test through extensive simulations. These simulations indicate that the test tends to be excessively conservative. To address this, we propose a calibration method employing a bootstrapping approach. Simulations and experiments conducted on flow cytometry data showcase the superior performance of the bootstrapped test, surpassing benchmark tests for two-sample testing by a significant margin.

\section*{Acknowledgments and Disclosure of Funding} 
The authors thank Bastien Dussap for fruitful discussions. G. B. gratefully acknowledges funding from the grants ANR-21-CE23-0035 (ASCAI), ANR-19-CHIA-0021-01 (BISCOTTE) and ANR-23-CE40-0018-01 (BACKUP) of the French National Research Agency ANR. Part of this research was done when G. B. received support from DFG SFB1294 ``Data Assimilation'', as ``Mercator fellow'' at Universit{\"a}t Potsdam, Germany. F.C. and S.G. thank the ANR TopAI chair (ANR–19–CHIA–0001) for financial support. S.G. gratefully acknowledges funding from the Hadamard Doctoral School of Mathematics (EDMH).

\bibliography{ref_cyto}

\newcommand{\etalchar}[1]{$^{#1}$}
\begin{thebibliography}{VPACLG{\etalchar{+}}21}

\bibitem[Agg17]{Aggarwal2017}
Charu~C. Aggarwal.
\newblock {\em An Introduction to Outlier Analysis}, pages 1--34.
\newblock Springer International Publishing, Cham, 2017.

\bibitem[AKS20]{10.5555/3524938.3524960}
Amr~M. Alexandari, Anshul Kundaje, and Avanti Shrikumar.
\newblock Maximum likelihood with bias-corrected calibration is hard-to-beat at
  label shift adaptation.
\newblock In {\em Proceedings of the 37th International Conference on Machine
  Learning}, ICML'20. JMLR.org, 2020.

\bibitem[BLS10]{BlanchardNovelty}
Gilles Blanchard, Gyemin Lee, and Clayton Scott.
\newblock Semi-supervised novelty detection.
\newblock {\em Journal of Machine Learning Research}, 11:2973--3009, 03 2010.

\bibitem[BV10]{Bordes10AOS}
L.~Bordes and P.~Vandekerkhove.
\newblock Semiparametric two-component mixture model with a known component: An
  asymptotically normal estimator.
\newblock {\em Mathematical Methods of Statistics}, 19(1):22--41, 2010.

\bibitem[CJ10]{CAI10AOS}
T.~Tony Cai and Jiashun Jin.
\newblock {Optimal rates of convergence for estimating the null density and
  proportion of nonnull effects in large-scale multiple testing}.
\newblock {\em The Annals of Statistics}, 38(1):100 -- 145, 2010.

\bibitem[CR10]{Celisse10JSIP}
Alain Celisse and Stéphane Robin.
\newblock A cross-validation based estimation of the proportion of true null
  hypotheses.
\newblock {\em Journal of Statistical Planning and Inference},
  140(11):3132--3147, 2010.

\bibitem[DBCA23]{Dussap}
Bastien Dussap, Gilles Blanchard, and Badr-Eddine Ch\'{e}rief-Abdellatif.
\newblock Label shift quantification with robustness guarantees via
  distribution feature matching.
\newblock In {\em Machine Learning and Knowledge Discovery in Databases:
  Research Track: European Conference, ECML PKDD 2023, Turin, Italy, September
  18–22, 2023, Proceedings, Part V}, page 69–85, Berlin, Heidelberg, 2023.
  Springer-Verlag.

\bibitem[End87]{Enderlein1987HawkinsDM}
G.~Enderlein.
\newblock Hawkins, d. m.: Identification of outliers. chapman and hall, london
  – new york 1980, 188 s., £ 14, 50.
\newblock {\em Biometrical Journal}, 29:198--198, 1987.

\bibitem[EYN06]{6287347}
Ran El-Yaniv and Mordechai Nisenson.
\newblock Optimal single-class classification strategies.
\newblock In B.~Sch\"{o}lkopf, J.~Platt, and T.~Hoffman, editors, {\em Advances
  in Neural Information Processing Systems}, volume~19. MIT Press, 2006.

\bibitem[For08]{count_class}
George Forman.
\newblock Quantifying counts and costs via classification.
\newblock {\em Data Mining and Knowledge Discovery}, 17(2):164--206, 2008.

\bibitem[GBR{\etalchar{+}}12]{gretton}
Arthur Gretton, Karsten~M. Borgwardt, Malte~J. Rasch, Bernhard Sch{{\"o}}lkopf,
  and Alexander Smola.
\newblock A kernel two-sample test.
\newblock {\em Journal of Machine Learning Research}, 13(25):723--773, 2012.

\bibitem[GCARA13]{GONZALEZCASTRO2013146}
Víctor González-Castro, Rocío Alaiz-Rodríguez, and Enrique Alegre.
\newblock Class distribution estimation based on the hellinger distance.
\newblock {\em Information Sciences}, 218:146--164, 2013.

\bibitem[GCnCC17]{10.1145/3117807}
Pablo Gonz\'{a}lez, Alberto Casta\~{n}o, Nitesh~V. Chawla, and Juan
  Jos\'{e}~Del Coz.
\newblock A review on quantification learning.
\newblock {\em ACM Comput. Surv.}, 50(5), sep 2017.

\bibitem[GN15]{giné_nickl_2015}
Evarist Giné and Richard Nickl.
\newblock {\em Mathematical Foundations of Infinite-Dimensional Statistical
  Models}.
\newblock Cambridge Series in Statistical and Probabilistic Mathematics.
  Cambridge University Press, 2015.

\bibitem[GWBL20]{10.5555/3495724.3496001}
Saurabh Garg, Yifan Wu, Sivaraman Balakrishnan, and Zachary~C. Lipton.
\newblock A unified view of label shift estimation.
\newblock In {\em Proceedings of the 34th International Conference on Neural
  Information Processing Systems}, NIPS'20, Red Hook, NY, USA, 2020. Curran
  Associates Inc.

\bibitem[INS14]{pmlr-v32-iyer14}
Arun Iyer, Saketha Nath, and Sunita Sarawagi.
\newblock Maximum mean discrepancy for class ratio estimation: Convergence
  bounds and kernel selection.
\newblock In Eric~P. Xing and Tony Jebara, editors, {\em Proceedings of the
  31st International Conference on Machine Learning}, volume~32 of {\em
  Proceedings of Machine Learning Research}, pages 530--538, Bejing, China,
  22--24 Jun 2014. PMLR.

\bibitem[LSW{\etalchar{+}}19]{cytoComp}
Xiao Liu, Weichen Song, Brandon~Y. Wong, Ting Zhang, Shunying Yu, Guan~Ning
  Lin, and Xianting Ding.
\newblock A comparison framework and guideline of clustering methods for mass
  cytometry data.
\newblock {\em Genome Biology}, 20(1):297, 2019.

\bibitem[LWS18]{DBLP:conf/icml/LiptonWS18}
Zachary~C. Lipton, Yu{-}Xiang Wang, and Alexander~J. Smola.
\newblock Detecting and correcting for label shift with black box predictors.
\newblock In Jennifer~G. Dy and Andreas Krause, editors, {\em Proceedings of
  the 35th International Conference on Machine Learning, {ICML} 2018,
  Stockholmsm{\"{a}}ssan, Stockholm, Sweden, July 10-15, 2018}, volume~80 of
  {\em Proceedings of Machine Learning Research}, pages 3128--3136. {PMLR},
  2018.

\bibitem[MPSV24]{Milhaud24}
Xavier Milhaud, Denys Pommeret, Yahia Salhi, and Pierre Vandekerkhove.
\newblock {Two-sample contamination model test}.
\newblock {\em Bernoulli}, 30(1):170 -- 197, 2024.

\bibitem[NM14]{Nguyen14}
Van~Hanh Nguyen and Catherine Matias.
\newblock On efficient estimators of the proportion of true null hypotheses in
  a multiple testing setup.
\newblock {\em Scandinavian Journal of Statistics}, 41(4):1167--1194, 2014.

\bibitem[{R C}21]{R}
{R Core Team}.
\newblock {\em R: A Language and Environment for Statistical Computing}.
\newblock R Foundation for Statistical Computing, Vienna, Austria, 2021.

\bibitem[TA22]{rpart}
Terry Therneau and Beth Atkinson.
\newblock {\em rpart: Recursive Partitioning and Regression Trees}, 2022.
\newblock R package version 4.1.19.

\bibitem[TPB{\etalchar{+}}19]{ctc}
Xuefei Tan, Roshani Patil, Peter Bartosik, Judith~M. Runnels, Charles~P. Lin,
  and Mark Niedre.
\newblock In vivo flow cytometry of extremely rare circulating cells.
\newblock {\em Scientific Reports}, 9(1):3366, 2019.

\bibitem[Vaa98]{vaart_1998}
A.~W. van~der Vaart.
\newblock {\em Asymptotic Statistics}.
\newblock Cambridge Series in Statistical and Probabilistic Mathematics.
  Cambridge University Press, 1998.

\bibitem[VKV{\etalchar{+}}20]{ijms21072323}
Denis~V. Voronin, Anastasiia~A. Kozlova, Roman~A. Verkhovskii, Alexey~V.
  Ermakov, Mikhail~A. Makarkin, Olga~A. Inozemtseva, and Daniil~N. Bratashov.
\newblock Detection of rare objects by flow cytometry: Imaging, cell sorting,
  and deep learning approaches.
\newblock {\em International Journal of Molecular Sciences}, 21(7), 2020.

\bibitem[VPACLG{\etalchar{+}}21]{VILLAPEREZ2021106878}
Miryam~Elizabeth Villa-Pérez, Miguel~A. Alvarez-Carmona, Octavio
  Loyola-González, Miguel~Angel Medina-Pérez, Juan~Carlos Velazco-Rossell,
  and Kim-Kwang~Raymond Choo.
\newblock Semi-supervised anomaly detection algorithms: A comparative summary
  and future research directions.
\newblock {\em Knowledge-Based Systems}, 218:106878, 2021.

\bibitem[VV06]{JMLR:v7:vert06a}
R{{\'e}}gis Vert and Jean-Philippe Vert.
\newblock Consistency and convergence rates of one-class svms and related
  algorithms.
\newblock {\em Journal of Machine Learning Research}, 7(29):817--854, 2006.

\end{thebibliography}
\bibliographystyle{alpha}

\newpage
\appendix
\section{Proofs}

\subsection{Proof of Theorem \ref{thm:type_I_II}} \label{app:proof_main_th}

To make the proof easier to read, we divide Theorem \ref{thm:type_I_II} into two results, concerning the type I error and type II error respectively. These results are then proved in sections \ref{subsubsec:proof_lem_type_I} and \ref{subsubsec:proof_lem_type_II} respectively. Then, we characterize alternatives corresponding to power $1-\alpha$.

\begin{lemma}\label{lem:type_I}
Assume that $n^0 \geq n^1$, and that $3\epsilon^0\leq \epsilon^1\leq 1$.    Then the Density Ratio Test has type I error lower than $1-\alpha$.
\end{lemma}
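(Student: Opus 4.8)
The plan is to \emph{condition on the two training samples} $\mathbb{X}^0$ and $\mathbb{X}^1$, so that $\widehat h_k^0,\widehat h_k^1,\widehat r_k,\widehat\sigma^2$, the index sets $\Omega^0,\Omega^1,\Omega^{01}$, the integer $K^1$ and $M:=\max_k|\widehat r_k-1|$ are all deterministic, and hence so is the rejection threshold of Definition~\ref{defi:T}. Under $H_0$ the test sample is i.i.d.\ from $f^0$ and $\widehat S(\mathbb{X})=\tfrac1n\sum_{i\le n}g(X_i)$ with $g(x)=\widehat r_{k(x)}-1\in[-1,M]$, where $k(x)$ is the bin containing $x$; its conditional mean is the ``bias'' $B:=\sum_k(\widehat r_k-1)h_k^0$ and its conditional second moment is $V:=\sum_k(\widehat r_k-1)^2h_k^0$, where $h_k^0:=\int_{B_k}f^0$. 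Applying a one–sided Bernstein inequality conditionally on the training data (with $t=\log(2/\alpha)$, using that under $H_0$ the centering $B$ is small so the effective range is essentially $M$) gives, with conditional probability at least $1-\alpha/2$,
\[
\widehat S(\mathbb{X})\;\le\; B+\sqrt{\tfrac{2Vt}{n}}+\tfrac{t}{3n}M .
\]
It then suffices to show that on a training event $\mathcal E$ of probability at least $1-\alpha/2$ the right–hand side is at most the threshold of Definition~\ref{defi:T}; a union bound over $\mathcal E$ and the conditional Bernstein event yields $P_{H_0}(\text{reject})\le\alpha$.

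The event $\mathcal E$ is built from binomial concentration (Bernstein/Chernoff) for the counts $N_k^0,N_k^1$ together with a union bound over the $\le 2K$ relevant events — this is exactly where $u=\log(4K/\alpha)$ enters. The key consequence I will extract from $\mathcal E$ is the multiplicative comparison $h_k^0\le 2\widehat h_k^0$ for every $k$: for a bin with $N_k^0/n^0>\epsilon^0$ the relative binomial fluctuation is controlled precisely because $\epsilon^0\ge 3u/n^0$, while for a thresholded bin ($k\in\Omega^0$, resp.\ $\Omega^{01}$) one has $h_k^0\lesssim\epsilon^0$, resp.\ $\lesssim\epsilon^1$, which is comparable to $\widehat h_k^0=3\epsilon^0$, resp.\ $3\epsilon^1$ (this uses $3\epsilon^0\le\epsilon^1\le1$ and $n^1\le n^0$). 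From $h_k^0\le 2\widehat h_k^0$ and the fact that bins in $\Omega^{01}$ satisfy $\widehat r_k=1$, one gets $V=\sum_k(\widehat r_k-1)^2h_k^0\le 2\widehat\sigma^2$, so the Bernstein fluctuation term is at most $\sqrt{\widehat\sigma^2}\,\sqrt{6t/n}+\tfrac{t}{3n}M$, which is accounted for by two of the terms in the threshold.

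The technical heart is bounding the bias $B$; it is small because $\sum_k\widehat h_k^1$ and $\sum_k\widehat h_k^0$ are both close to $1$, so the first–order bias cancels. Writing $\widehat r_k-1=(\widehat h_k^1-\widehat h_k^0)/\widehat h_k^0$ and $h_k^0/\widehat h_k^0=1+(h_k^0-\widehat h_k^0)/\widehat h_k^0$ gives
\[
B=\sum_k(\widehat h_k^1-\widehat h_k^0)+\sum_k(\widehat h_k^1-\widehat h_k^0)\,\frac{h_k^0-\widehat h_k^0}{\widehat h_k^0}=:B_1+B_2 .
\]
For $B_1$, substituting the definitions of $\widehat h_k^0$ and $\widehat h_k^1$ expresses it as $3\epsilon^1(|\Omega^1|+|\Omega^{01}|)-\sum_{k\in\Omega^1\cup\Omega^{01}}\tfrac{N_k^1}{n^1}+\sum_{k\in\Omega^0\cup\Omega^{01}}\tfrac{N_k^0}{n^0}-3\epsilon^0|\Omega^0|-3\epsilon^1|\Omega^{01}|$; dropping the non-positive terms and using the defining inequalities of $\Omega^0,\Omega^{01}$ bounds $B_1$ by $3\epsilon^1K^1$ plus terms of order $\epsilon^0 K^0$ (and smaller) that are absorbed into the threshold. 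For $B_2$, I would first note that the $\Omega^0$ summands are non-positive on $\mathcal E$ — there $\widehat h_k^1=N_k^1/n^1>\epsilon^1\ge 3\epsilon^0=\widehat h_k^0$ while $h_k^0-\widehat h_k^0\le 2\epsilon^0-3\epsilon^0<0$ — and the $\Omega^{01}$ summands vanish, so only non-thresholded bins survive; a Cauchy–Schwarz step gives $B_2\le\sqrt{\widehat\sigma^2}\,\bigl(\sum_k(h_k^0-\widehat h_k^0)^2/\widehat h_k^0\bigr)^{1/2}$, and on $\mathcal E$ each non-thresholded bin contributes $O(u/n^0)$ to the last sum, so $B_2\lesssim\sqrt{\widehat\sigma^2\,uK/n^0}$, matching the first threshold term.

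Finally I would assemble $B_1$, $B_2$, and the Bernstein fluctuation, tracking numerical constants so they fit under the coefficients $\sqrt{10uK/n^0}$, $\sqrt{6t/n}$, $1/3$, $3$ of Definition~\ref{defi:T}; on $\mathcal E$ intersected with the conditional Bernstein event this gives $\widehat S(\mathbb{X})\le(\text{threshold})$, and the union bound closes the argument. The main obstacle is the bias control: obtaining a bound that scales with the small quantities $\sqrt{\widehat\sigma^2 uK/n^0}$ and $\epsilon^1K^1$ rather than with $\sqrt{\widehat\sigma^2}$ itself relies on the cancellation $\sum_k\widehat h_k^1\approx\sum_k\widehat h_k^0\approx1$ and on the favourable sign of the $\Omega^0$ contribution to $B_2$, both of which depend delicately on the choice of $\epsilon^0,\epsilon^1$ and on the hypothesis $3\epsilon^0\le\epsilon^1\le1$.
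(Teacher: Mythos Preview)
Your approach is essentially identical to the paper's: the same conditional Bernstein argument, the same bias decomposition $B=\sum_k(\widehat h_k^1-\widehat h_k^0)+\sum_k(\widehat r_k-1)(h_k^0-\widehat h_k^0)$, the same sign analysis on $\Omega^0$ and $\Omega^{01}$, and the same Cauchy--Schwarz step producing $\sqrt{10uK\widehat\sigma^2/n^0}$. Two small corrections: the comparison constant is $h_k^0\le 3\widehat h_k^0$ (not $2$), which is precisely what yields the $\sqrt{6t/n}$ coefficient in the threshold; and a clean $B_1$ bound leaves no residual $\epsilon^0K^0$ term---simply use $\sum_k\widehat h_k^1\le 1+3\epsilon^1K^1$ and $\sum_k\widehat h_k^0\ge 1$ directly rather than expanding and then dropping non-positive pieces.
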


The next lemma establishes a lower bound in $\widehat{S}(\mathbb{X})$ that holds with high probability under the alternative.
\begin{lemma}\label{lem:type_II}
Assume that $n^0 \geq n^1$, and that $3\epsilon^0\leq \epsilon^1\leq 1$.    Under Hypothesis $H_1$, with probability larger than $1-\alpha$,
\begin{align*}
    \widehat{S}(\mathbb{X}) > &  \left(\left(1 - \left(\frac{u}{n^1}\right)^{\frac{1}{4}}\right)\widehat{\sigma}^2 - \left(\frac{u}{n^1}\right)^{\frac{1}{4}}\right)\theta - \left(\sqrt{\frac{10uK\widehat{\sigma}^2}{n^0}} + \left(13\epsilon^1\vert \Omega^1\vert  + 4\epsilon^1\vert \Omega^{01}\vert  + 3\epsilon^0 \vert\Omega^0\vert\right)\right)\\
    & - \sqrt{\widehat{\sigma}^2}\sqrt{\frac{6t}{n}\left(1 + \theta + \theta \max_{k}\left \vert \widehat{r}_k - 1\right \vert\right)}  - \frac{t}{3n}\max_k\vert \widehat{r}_k - 1\vert.
\end{align*}
\end{lemma}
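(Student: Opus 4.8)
The plan is to decompose $\widehat S(\mathbb X)$ into a "signal" part that grows linearly in $\theta$ and a collection of fluctuation terms, each of which I control by a concentration inequality. First I would write $\widehat S(\mathbb X) = \frac1n\sum_{k\le K}(\widehat r_k - 1)N_k$ and note that, conditionally on the training samples (hence on the $\widehat r_k$ and the sets $\Omega^0,\Omega^1,\Omega^{01}$), the counts $(N_k)_{k\le K}$ are multinomial with cell probabilities $\int_{B_k} f$. Under $H_1$ we have $f = (1-\theta)f^0 + \theta f^1$, so $\mathbb E[N_k/n] = (1-\theta)h_k^0\mathrm{vol}(B_k) + \theta h_k^1\mathrm{vol}(B_k)$; writing $\mu_k := \int_{B_k}f^0$ and $\nu_k := \int_{B_k}f^1$, the conditional mean of $\widehat S(\mathbb X)$ is $\sum_k(\widehat r_k-1)\big(\mu_k + \theta(\nu_k-\mu_k)\big)$. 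The term $\sum_k(\widehat r_k-1)\mu_k$ is exactly the quantity that vanishes in expectation under $H_0$ (it is the "bias" already handled by the Type I analysis / Lemma \ref{lem:type_I}), so I would lower-bound it by $-\big(\sqrt{10uK\widehat\sigma^2/n^0} + (\text{the }\epsilon\text{-boundary terms})\big)$ exactly as in the proof of Lemma \ref{lem:type_I}, reusing that bound verbatim. The genuinely new piece is $\theta\sum_k(\widehat r_k-1)(\nu_k-\mu_k)$, which I want to show is at least $\big((1-(u/n^1)^{1/4})\widehat\sigma^2 - (u/n^1)^{1/4}\big)\theta$.

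For that new piece, the idea is that $\sum_k(\widehat r_k-1)\nu_k$ should be close to $\sum_k(\widehat r_k-1)\widehat h_k^1\mathrm{vol}(B_k)$, and since $\widehat r_k\widehat h_k^0\mathrm{vol}(B_k)$-type algebra gives $\sum_k(\widehat r_k-1)\widehat h_k^1\mathrm{vol}(B_k) = \sum_k(\widehat r_k-1)\widehat r_k\widehat h_k^0\mathrm{vol}(B_k)$, while $\sum_k(\widehat r_k-1)\mu_k$ is close to $\sum_k(\widehat r_k-1)\widehat h_k^0\mathrm{vol}(B_k)$; subtracting, $\sum_k(\widehat r_k-1)(\nu_k-\mu_k)\approx\sum_k(\widehat r_k-1)^2\widehat h_k^0\mathrm{vol}(B_k) = \widehat\sigma^2$. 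The error in these approximations comes from (i) $\nu_k$ vs $\widehat h_k^1\mathrm{vol}(B_k)$, i.e. the deviation of the empirical histogram of $\mathbb X^1$ from its mean, controlled on the event of Bernstein/Chernoff type used throughout with parameter $u=\log(4K/\alpha)$; the threshold $\epsilon^1 = \sqrt{3u/n^1}$ is precisely calibrated so that the relative error on the estimated $h_k^1$ is $O((u/n^1)^{1/4})$ on the bins that actually contribute, which is where the factor $(1-(u/n^1)^{1/4})$ and the additive $(u/n^1)^{1/4}$ come from; and (ii) $\mu_k$ vs $\widehat h_k^0\mathrm{vol}(B_k)$, again controlled with parameter $u$ on $\mathbb X^0$ and absorbed into the $\sqrt{uK\widehat\sigma^2/n^0}$ and $\epsilon$-boundary terms, exactly as in the Type I bound. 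I would be careful to split the sum over $k$ according to whether $k\in\Omega^0,\Omega^1,\Omega^{01}$ or $k$ is "good", since on the thresholded bins $\widehat r_k$ is a ratio of constants $3\epsilon^0,3\epsilon^1$ and the bookkeeping differs — this is the source of the various $\epsilon^1|\Omega^1|$, $\epsilon^1|\Omega^{01}|$, $\epsilon^0|\Omega^0|$ contributions with their specific constants.

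Finally, I would handle the stochastic fluctuation of $\widehat S(\mathbb X)$ around its conditional mean: conditionally on the training data, $\widehat S(\mathbb X) - \mathbb E[\widehat S(\mathbb X)\mid\text{train}] = \frac1n\sum_{i\le n} Z_i$ where $Z_i = (\widehat r_{k(X_i)}-1) - \mathbb E[\cdots]$ are i.i.d. bounded by $\max_k|\widehat r_k-1|$ in absolute value, with variance at most $\mathrm{Var}_f\big((\widehat r_{k(X)}-1)\big)\le \sum_k(\widehat r_k-1)^2\big(\mu_k+\theta\nu_k\big)\le \widehat\sigma^2(1+\theta) + \theta\,\text{(small)}$ — leading, via Bernstein's inequality with deviation parameter $t=\log(2/\alpha)$, to the term $-\sqrt{\widehat\sigma^2\cdot\frac{6t}{n}(1+\theta+\theta\max_k|\widehat r_k-1|)} - \frac{t}{3n}\max_k|\widehat r_k-1|$ in the statement. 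Taking a union bound over all the high-probability events used (the $\mathbb X^0$-histogram event, the $\mathbb X^1$-histogram event, and the conditional Bernstein event, each at level $\alpha/4$ or so, and noting $u$ already encodes a $K$-union bound) gives the claimed bound with overall probability $1-\alpha$. The main obstacle I anticipate is the careful case analysis over the threshold sets $\Omega^0,\Omega^1,\Omega^{01}$ together with tracking the precise numerical constants (the $13$, the $4$, the $3$, and the $(u/n^1)^{1/4}$ exponents): getting the relative-error bound on the thresholded $h_k^1$ estimates to come out as $(u/n^1)^{1/4}$ rather than something weaker requires using the specific form $\epsilon^1=\sqrt{3u/n^1}$ and a two-sided Chernoff bound, and the boundary terms must be matched exactly to what Lemma \ref{lem:type_I}'s proof produces so that they can be quoted rather than re-derived.
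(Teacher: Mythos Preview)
Your plan is essentially the paper's proof. The paper applies Bernstein conditionally on the training samples to get $\widehat S(\mathbb X) > B_1 - \sqrt{2tV_1/n} - \tfrac{t}{3n}\max_k|\widehat r_k-1|$, then lower-bounds the bias $B_1$ and upper-bounds the variance $V_1$ exactly as you describe; the signal lower bound $\sum_k(\widehat r_k-1)h_k^1 \ge (1-(u/n^1)^{1/4})\widehat\sigma^2 - (u/n^1)^{1/4} - (\text{boundary})$ is packaged as a separate Lemma~\ref{lem:lower_bound_B1}, and the $(u/n^1)^{1/4}$ indeed comes from $N_k^1 \ge \sqrt{3un^1}$ on non-thresholded bins, as you guessed.

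Two small caveats. First, you cannot reuse the Type~I bound on $\sum_k(\widehat r_k-1)h_k^0$ ``verbatim'': Lemma~\ref{lem:type_I} gives an \emph{upper} bound, whereas here you need a \emph{lower} bound. The paper re-derives this lower bound by a symmetric argument (using $\sum_k\widehat h_k^0 \ge 1-\epsilon^0|\Omega^0|-\epsilon^1|\Omega^{01}|$ and $\sum_k\widehat h_k^1 \le 1+3\epsilon^1|\Omega^1|$). Second, the paper decomposes the mean as $(1-\theta)\sum_k(\widehat r_k-1)h_k^0 + \theta\sum_k(\widehat r_k-1)h_k^1$ rather than your $\sum_k(\widehat r_k-1)h_k^0 + \theta\sum_k(\widehat r_k-1)(h_k^1-h_k^0)$; the two are algebraically equivalent, but with your decomposition the $h_k^0$-error enters twice (once with weight $1$, once with weight $\theta$), so you would obtain slightly larger numerical constants than the $13,4,3$ in the statement unless you regroup. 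The paper's split avoids this double-counting.
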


Finally, next lemma characterizes alternative such that the test is rejected with probablity larger than $1-\alpha$.
\begin{lemma}\label{lem:power}
Assume that $n^0 \geq n^1$, and that $3\epsilon^0\leq \epsilon^1\leq 1$. Under the alternative hypothesis $H_1$, the Density Ratio Test rejects the hypothesis $H_0$ with probability larger than $1-\alpha$ if $\theta$ verifies
\begin{align*}
\theta \geq& \left(353\sqrt{\frac{t}{n\widehat{\sigma}^2}} + \frac{400\sqrt{u}K^1}{\widehat{\sigma}^2\sqrt{n^1}} + \frac{30\epsilon^0K^0}{\widehat{\sigma}^2}+64\sqrt{\frac{uK}{n^0\widehat{\sigma}^2}}\right).
\end{align*}
where $K^1 = \vert \Omega^1\vert + \vert \Omega^{0}\vert$ and $K^0 = \vert \Omega^0\vert$.
\end{lemma}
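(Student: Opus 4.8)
The plan is to deduce Lemma~\ref{lem:power} directly from the high‑probability lower bound on the test statistic $\widehat{S}(\mathbb{X})$ established in Lemma~\ref{lem:type_II} together with the explicit rejection rule of Definition~\ref{defi:T}. Place ourselves on the event of probability at least $1-\alpha$ on which the conclusion of Lemma~\ref{lem:type_II} holds. On this event the Estimated Density Ratio Test rejects $H_0$ as soon as the right‑hand side of the Lemma~\ref{lem:type_II} bound is at least the rejection threshold
\[
\sqrt{\widehat{\sigma}^2}\Bigl(\sqrt{\tfrac{10uK}{n^0}}+\sqrt{\tfrac{6t}{n}}\Bigr)+\tfrac{t}{3n}\max_k|\widehat r_k-1|+3\epsilon^1 K^1 .
\]
All quantities appearing here ($\widehat{\sigma}^2$, $\max_k|\widehat r_k-1|$, the cardinalities of $\Omega^0,\Omega^1,\Omega^{01}$, $\epsilon^0$, $\epsilon^1$) are measurable with respect to the training samples only, so comparing the two bounds reduces to a deterministic inequality in $\theta$. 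Writing $\delta=(u/n^1)^{1/4}$, $M=\max_k|\widehat r_k-1|$, and using $\sqrt{\widehat{\sigma}^2}\sqrt{10uK/n^0}=\sqrt{10uK\widehat{\sigma}^2/n^0}$, this inequality takes the form
\[
\bigl((1-\delta)\widehat{\sigma}^2-\delta\bigr)\,\theta \;\ge\; 2\sqrt{\tfrac{10uK\widehat{\sigma}^2}{n^0}}+\bigl(13\epsilon^1|\Omega^1|+4\epsilon^1|\Omega^{01}|+3\epsilon^1 K^1+3\epsilon^0|\Omega^0|\bigr)+\sqrt{\tfrac{6t\widehat{\sigma}^2}{n}}+\sqrt{\tfrac{6t\widehat{\sigma}^2}{n}\bigl(1+\theta(1+M)\bigr)}+\tfrac{2t}{3n}M .
\]

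Two features need care: the hidden $\sqrt\theta$ contribution in $\sqrt{\tfrac{6t\widehat{\sigma}^2}{n}(1+\theta(1+M))}$, and the quantity $M$, which a priori is as large as $1/\epsilon^0$. For the former I would use $\sqrt{a+b}\le\sqrt a+\sqrt b$ to isolate $\sqrt{\tfrac{6t\widehat{\sigma}^2}{n}\theta(1+M)}$ and then the elementary inequality $\sqrt{xy}\le\tfrac12(x+y)$ with $x=(1-\delta)\widehat{\sigma}^2\theta$ and $y=\tfrac{6t(1+M)}{n(1-\delta)}$: this consumes exactly half of the coefficient of $\theta$ on the left and leaves an additive residue of order $\tfrac{t}{n}(1+M)$. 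For $M$ itself I would use the deterministic bound $M\le\sqrt{\widehat{\sigma}^2/\epsilon^0}$, which follows from $\widehat{\sigma}^2=\sum_k(\widehat r_k-1)^2\widehat h_k^0\ge(\widehat r_k-1)^2\epsilon^0$ for every $k$, since each thresholded value satisfies $\widehat h_k^0\ge\epsilon^0$; because $\epsilon^0\ge t/n$, every term of the form $\tfrac{t}{n}M$ is then at most $\sqrt{t\widehat{\sigma}^2/n}$, i.e.\ of the same order as the $\sqrt{6t\widehat{\sigma}^2/n}$ term already present. After these substitutions (and $\tfrac{t}{n}\le\sqrt{t\widehat{\sigma}^2/n}$, which holds whenever the assertion is not vacuous), and after bounding $\epsilon^1|\Omega^1|,\epsilon^1|\Omega^{01}|\le\epsilon^1 K^1$ with $K^1=|\Omega^1|+|\Omega^{01}|$, the inequality collapses to
\[
c_1\bigl((1-\delta)\widehat{\sigma}^2-c_2\delta\bigr)\,\theta \;\ge\; c_3\Bigl(\sqrt{\tfrac{t\widehat{\sigma}^2}{n}}+\sqrt{\tfrac{uK\widehat{\sigma}^2}{n^0}}+\epsilon^1 K^1+\epsilon^0|\Omega^0|\Bigr)
\]
for absolute constants $c_1,c_2,c_3>0$.

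Finally, in the relevant regime $\delta=(u/n^1)^{1/4}$ is small compared with $\widehat{\sigma}^2$ (equivalently $n^1$ is not pathologically small; if this fails, the claimed lower bound on $\theta$ exceeds $1\ge\theta$, or the degenerate case is treated separately, so there is nothing to prove), so $(1-\delta)\widehat{\sigma}^2-c_2\delta\ge c\,\widehat{\sigma}^2$ for an absolute constant $c$. Dividing through by this coefficient and substituting $\epsilon^0=\tfrac{3u}{n^0}\vee\tfrac{t}{n}$ and $\epsilon^1=\sqrt{\tfrac{3u}{n^1}}$ yields
\[
\theta \;\ge\; C\Bigl(\sqrt{\tfrac{t}{n\widehat{\sigma}^2}}+\sqrt{\tfrac{uK}{n^0\widehat{\sigma}^2}}+\tfrac{\sqrt u\,K^1}{\widehat{\sigma}^2\sqrt{n^1}}+\tfrac{\epsilon^0 K^0}{\widehat{\sigma}^2}\Bigr),\qquad K^0=|\Omega^0|,
\]
which is exactly the asserted shape. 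The only remaining labour is to propagate the numerical constants faithfully through the triangle/AM–GM steps and the splitting of $K^1$ so as to land on the explicit values $353,\,400,\,30,\,64$, and to verify the small-$n^1$ corner case; I expect this bookkeeping to be the sole real obstacle, the logical skeleton being immediate once Lemma~\ref{lem:type_II} and Definition~\ref{defi:T} are granted.
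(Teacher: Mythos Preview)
Your proposal is correct and follows essentially the same route as the paper: combine Lemma~\ref{lem:type_II} with the rejection threshold of Definition~\ref{defi:T}, reduce to a quadratic-type inequality in $\sqrt\theta$, control $M=\max_k|\widehat r_k-1|$ via the key observation $M^2\epsilon^0\le\widehat\sigma^2$ (since $\widehat h_k^0\ge\epsilon^0$ for all $k$), and absorb the $(u/n^1)^{1/4}$ prefactor into constants. The only cosmetic differences are that the paper solves the quadratic directly via $(\sqrt a+\sqrt b)^2\le 2a+2b$ instead of your AM--GM splitting, and that it makes the absorption of $\delta=(u/n^1)^{1/4}$ explicit by using $\epsilon^1\le 1\Rightarrow u/n^1\le 1/3$ to obtain a concrete factor of~$5$ rather than invoking a ``relevant regime''.
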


\subsubsection{Concentration results on the estimated probabilities $\widehat{h}_k^0$ and $\widehat{h}_k^1$}
 Before proving Lemmas \ref{lem:type_I}, \ref{lem:type_II}, and \ref{lem:power}, we provide a list of concentration results relating the estimators $\widehat{h}^0_k$ and $\widehat{h}^1_k$ to their population counterpart $h_k^0$ and $h_k^1$. The concentration results bellow rely on Bernstein's theorem, which we recall here for completeness:

\begin{theorem}[Bernstein inequality, see, e.g.,  Theorem 3.1.7 in 
\cite{giné_nickl_2015}] \label{thm:Bernstein}
   Let $X_i$, $1 \leq i \leq n$ be independent centered random variables a.s. bounded by $c < \infty$ in absolute value. Set $\sigma^2 = 1/n \sum_{i\leq n}\mathbb{E}\left[X_i^2\right]$ and $S_n = \sum_{i\leq n}X_i$. Then, for all $u >0$,
   $$ \mathbb{P}\left(S_n \geq \sqrt{2n\sigma^2u} + \frac{cu}{3} \right)\leq e^{-u}.$$
\end{theorem}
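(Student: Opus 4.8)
The plan is to prove this via the classical Cramér--Chernoff method. First I would apply the exponential Markov inequality: for any $\lambda>0$,
\[
\mathbb{P}(S_n \geq t) \leq e^{-\lambda t}\,\mathbb{E}\big[e^{\lambda S_n}\big] = e^{-\lambda t}\prod_{i\leq n}\mathbb{E}\big[e^{\lambda X_i}\big],
\]
the last equality using independence. The crux is a sharp bound on the individual moment generating functions. For a centered $X$ with $|X|\leq c$ a.s., writing $e^{\lambda x}-1-\lambda x=\sum_{k\geq 2}\lambda^k x^k/k!$ and using $|x|^{k-2}\leq c^{k-2}$ together with $k!\geq 2\cdot 3^{k-2}$ yields, for $0<\lambda<3/c$,
\[
\mathbb{E}\big[e^{\lambda X}\big]\;\leq\; 1+\frac{\lambda^2\,\mathbb{E}[X^2]}{2}\sum_{k\geq 2}\Big(\frac{\lambda c}{3}\Big)^{k-2}\;=\;1+\frac{\lambda^2\,\mathbb{E}[X^2]/2}{1-\lambda c/3}\;\leq\;\exp\!\Big(\frac{\lambda^2\,\mathbb{E}[X^2]/2}{1-\lambda c/3}\Big).
\]
I expect this moment-generating-function estimate (the geometric-series comparison) to be the main technical obstacle; everything downstream is bookkeeping.

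Multiplying over $i$ and abbreviating $v=n\sigma^2=\sum_i\mathbb{E}[X_i^2]$ and $b=c/3$, the first display becomes $\mathbb{P}(S_n\geq t)\leq \exp\big(-\lambda t+\psi(\lambda)\big)$ with $\psi(\lambda)=\lambda^2 v/\big(2(1-\lambda b)\big)$ on $[0,1/b)$. The next step is to optimize over $\lambda$, i.e.\ to compute the Legendre transform $\psi^*(t)=\sup_{\lambda}\big(\lambda t-\psi(\lambda)\big)$. Differentiating, the supremum is attained at $\lambda=b^{-1}\big(1-(1+2bt/v)^{-1/2}\big)$, and a short simplification gives
\[
\psi^*(t)=\frac{v}{b^2}\Big(1+\frac{bt}{v}-\sqrt{1+\frac{2bt}{v}}\Big),
\]
so that $\mathbb{P}(S_n\geq t)\leq e^{-\psi^*(t)}$ for every $t\geq 0$.

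Finally I would specialize to $t=\sqrt{2vu}+bu$. Setting $s=b\sqrt{2u/v}$, one checks $bt/v=s+s^2/2$ and $1+2bt/v=(1+s)^2$, so the square root in $\psi^*(t)$ equals $1+s$ and the bracket collapses to $s^2/2$; hence $\psi^*(t)=(v/b^2)(s^2/2)=u$ exactly. Substituting $v=n\sigma^2$, $b=c/3$ back in yields $\mathbb{P}\big(S_n\geq\sqrt{2n\sigma^2 u}+cu/3\big)\leq e^{-u}$, which is the claim. (Equivalently, one may bypass the Legendre transform entirely by plugging the explicit choice $\lambda=\sqrt{2u/v}\,/(1+s)$ directly into $-\lambda t+\psi(\lambda)$ and simplifying.)
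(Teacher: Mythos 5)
Your proof is correct: this is the standard Cramér--Chernoff argument with the Bernstein moment-generating-function bound (the series comparison via $k!\geq 2\cdot 3^{k-2}$), followed by the Legendre-transform computation for $\psi(\lambda)=\lambda^2 v/(2(1-\lambda b))$ and the observation that $\psi^{*}(\sqrt{2vu}+bu)=u$. The paper does not prove this statement at all -- it is quoted as Theorem 3.1.7 of Gin\'e and Nickl (2015) -- so there is nothing to compare against; your argument matches the textbook proof, and the only implicit check worth making explicit is that the optimizing $\lambda=b^{-1}\bigl(1-(1+2bt/v)^{-1/2}\bigr)$ indeed satisfies $\lambda b<1$, which is immediate.
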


Lemma \ref{lem:concentration_histogram} shows that on an event $\mathcal{E}$ of large probability, $\widehat{h}^0_k$ and $\widehat{h}^1_k$ are close to $h_k^0$ and $h_k^1$ simultaneously for all $k$.
\begin{lemma}\label{lem:concentration_histogram}
    On an event $\mathcal{E}$ of probability at least $1 - 4Ke^{-u}$, simultaneously for all $k \leq K$ and $a \in \{0,1 \}$, one has
    \begin{align}
    \frac{N^a_k}{n^a} & \in \left [h^a_k - \sqrt{2\frac{uh^a_k}{n^a}} - \frac{u}{3n^a}; h^a_k + \sqrt{2\frac{uh^a_k}{n^a}} + \frac{u}{3n^a}\right] \label{eq:bound_freq}\\
    h^a_k & \in \left [\frac{N_k^a}{n^a} - \frac{\sqrt{2uN_k^a}}{n^a};\frac{N_k^a}{n^a} + \frac{\sqrt{2uN_k^a}}{n^a}+ \frac{3u}{n^a}\right].\label{eq:bound_est}
\end{align}
\end{lemma}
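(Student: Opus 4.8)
The plan is to apply Bernstein's inequality (Theorem~\ref{thm:Bernstein}) to each bin count individually and then close with a union bound. Fix $k\le K$ and $a\in\{0,1\}$, and set $Y_i=\mathds{1}\{X^a_i\in B_k\}-h^a_k$ for $i\le n^a$; these variables are i.i.d., centered, bounded by $1$ in absolute value, with $\mathbb{E}[Y_i^2]=h^a_k(1-h^a_k)\le h^a_k$. Applying Theorem~\ref{thm:Bernstein} to $\sum_{i\le n^a}Y_i$ and to $\sum_{i\le n^a}(-Y_i)$ (with $c=1$ and the variance term bounded by $h^a_k$), then dividing by $n^a$, yields that with probability at least $1-2e^{-u}$,
\[
\left|\frac{N^a_k}{n^a}-h^a_k\right|\le\sqrt{\frac{2uh^a_k}{n^a}}+\frac{u}{3n^a},
\]
which is precisely the two-sided inclusion~\eqref{eq:bound_freq}.

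Next I would derive~\eqref{eq:bound_est} deterministically on the same event. Writing $\widehat h=N^a_k/n^a$, each one-sided form of the displayed inequality, $\pm(h^a_k-\widehat h)\le\sqrt{2uh^a_k/n^a}+u/(3n^a)$, is a quadratic inequality in the unknown $\sqrt{h^a_k}$. Solving it with the quadratic formula, squaring the resulting bound on $\sqrt{h^a_k}$, and then applying $\sqrt{s+t}\le\sqrt s+\sqrt t$ to the remaining square root re-expresses everything in terms of $\widehat h$: the cross term becomes exactly $\sqrt{2u\widehat h/n^a}=\sqrt{2uN^a_k}/n^a$, while the residual $u/n^a$-order terms are at most $3u/n^a$ on the upper side and nonnegative (hence droppable) on the lower side, giving $h^a_k\in[\widehat h-\sqrt{2uN^a_k}/n^a,\ \widehat h+\sqrt{2uN^a_k}/n^a+3u/n^a]$. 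The degenerate case where $N^a_k$ is so small that the relevant root is negative is handled trivially, since then the claimed lower bound has a negative right-hand side.

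Finally I would union-bound over the $2K$ pairs $(k,a)$: each contributes failure probability at most $2e^{-u}$, so on an event $\mathcal{E}$ of probability at least $1-4Ke^{-u}$ all four inequalities hold simultaneously for every $k\le K$ and $a\in\{0,1\}$. The only point that needs genuine care is the constant bookkeeping in the inversion step — checking that Bernstein's additive constant $1/3$, together with the cross terms produced by squaring and by $\sqrt{s+t}\le\sqrt s+\sqrt t$, still fits under the advertised constant $3$, and that the residual term on the lower side keeps a favorable sign; the rest is a routine invocation of Bernstein's inequality and a union bound.
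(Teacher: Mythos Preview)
Your proposal is correct and follows essentially the same route as the paper: apply Bernstein's inequality to each of the $2K$ bin counts (giving~\eqref{eq:bound_freq}), union bound to get the event $\mathcal{E}$, and then invert the bound by treating it as a quadratic in $\sqrt{h^a_k}$ and bounding the positive root to obtain~\eqref{eq:bound_est}. The paper's ``direct calculations'' for the root are precisely the constant bookkeeping you flag as the only delicate point.
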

The proof of Lemma \ref{lem:concentration_histogram} is postponed to Appendix \ref{app:proof_concentration_histogram}.

\bigskip

The following lemma provides sharp bounds on the distance $\left\vert h_k^a - \widehat{h}_k^a \right\vert$ for $a = 0$ and $a=1$, for bins where the density have been thresholded and for bins in which they have not.
\begin{lemma}\label{lem:error_bound}
Assume that $n^0 \geq n^1$ and that $\epsilon^0, \epsilon^1\leq 1$. On the event $\mathcal{E}$ of Lemma \ref{lem:concentration_histogram}, for $a \in \{0, 1\}$ and for all $k \in \Omega^{a}\cup \Omega^{01}$,
\begin{align}\label{eq:bound_h_Omega}
    h_k^a \leq \widehat{h}_k^a.
\end{align}
Moreover, for all $k \notin \Omega^{a}\cup \Omega^{01}$,
\begin{align}
\left\vert h_k^a - \widehat{h}_k^a \right\vert \leq \sqrt{\frac{10u\widehat{h}_k^a}{n^a}}\label{eq:bound_h}.
\end{align}
and
\begin{align}
    \left\vert h_k^a - \widehat{h}_k^a \right\vert \leq 2\widehat{h}_k^a\label{eq:bound_h_abs}.
\end{align}
\end{lemma}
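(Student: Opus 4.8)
The plan is to prove Lemma~\ref{lem:error_bound} by a case analysis over the membership of each bin in the sets $\Omega^0$, $\Omega^1$, $\Omega^{01}$, or none of them, working throughout on the event $\mathcal{E}$ of Lemma~\ref{lem:concentration_histogram}, where the two-sided deviation bounds \eqref{eq:bound_freq} and \eqref{eq:bound_est} hold simultaneously for all $k$ and $a\in\{0,1\}$. The proof of \eqref{eq:bound_h_Omega} is the thresholding argument: when $k\in\Omega^a\cup\Omega^{01}$, the estimator $\widehat h_k^a$ is set to a \emph{fixed} value ($3\epsilon^0$, $3\epsilon^1$, or $3\epsilon^1$ depending on the case) while the membership condition defining these sets guarantees $N_k^a/n^a$ is small (at most $\epsilon^0$ or $\epsilon^1$). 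I would then invoke \eqref{eq:bound_est}: $h_k^a \le N_k^a/n^a + \sqrt{2uN_k^a}/n^a + 3u/n^a$. Since $N_k^a/n^a \le \epsilon^a$ (or $\epsilon^1$), we have $N_k^a \le n^a\epsilon^a$, so $\sqrt{2uN_k^a}/n^a \le \sqrt{2u\epsilon^a/n^a}$; using the definitions $\epsilon^0 = \frac{3u}{n^0}\vee\frac{t}{n}$ and $\epsilon^1 = \sqrt{3u/n^1}$ one checks that each of the three terms is bounded by a constant multiple of the corresponding threshold value, and the sum is at most $3\epsilon^0$ (resp.\ $3\epsilon^1$), i.e.\ at most $\widehat h_k^a$. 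This is the step requiring the most care: I must keep the numerology tight so that the factor genuinely comes out to (at most) $3$ rather than a larger constant, using $n^1\le n^0$ and $\epsilon^0,\epsilon^1\le 1$ where needed.

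For the second part, fix $k\notin\Omega^a\cup\Omega^{01}$; then by definition of the estimator, $\widehat h_k^a = N_k^a/n^a$, so \eqref{eq:bound_h} and \eqref{eq:bound_h_abs} are really statements about the sampling error of the empirical frequency. For \eqref{eq:bound_h}, apply \eqref{eq:bound_est}: $\bigl|h_k^a - N_k^a/n^a\bigr| \le \sqrt{2uN_k^a}/n^a + 3u/n^a = \sqrt{2u\widehat h_k^a/n^a} + 3u/n^a$. The goal is to absorb the additive $3u/n^a$ term into the square-root term to get the clean bound $\sqrt{10u\widehat h_k^a/n^a}$. This works precisely because $k\notin\Omega^a\cup\Omega^{01}$ forces $N_k^a/n^a$ to be bounded \emph{below}: being outside these sets means the frequency exceeds $\epsilon^a$ (for $a=0$, outside $\Omega^0\cup\Omega^{01}$ and recalling the threshold structure, $N_k^0/n^0 > \epsilon^0 \ge 3u/n^0$; for $a=1$, $N_k^1/n^1 > \epsilon^1 = \sqrt{3u/n^1}$, so $N_k^1 \ge 3u$ as well after rearranging). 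Hence $\widehat h_k^a = N_k^a/n^a \ge 3u/n^a$, equivalently $u/n^a \le \widehat h_k^a/3$, which lets me write $3u/n^a = \sqrt{(3u/n^a)(3u/n^a)} \le \sqrt{(3u/n^a)\cdot\widehat h_k^a} = \sqrt{3u\widehat h_k^a/n^a}$; combining, $\bigl|h_k^a-\widehat h_k^a\bigr| \le (\sqrt2 + \sqrt3)\sqrt{u\widehat h_k^a/n^a} \le \sqrt{10u\widehat h_k^a/n^a}$ since $(\sqrt2+\sqrt3)^2 < 10$.

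Finally, \eqref{eq:bound_h_abs} follows from \eqref{eq:bound_h} together with the same lower bound $\widehat h_k^a \ge 3u/n^a$: we have $\sqrt{10u\widehat h_k^a/n^a} = \sqrt{10u/n^a}\cdot\sqrt{\widehat h_k^a} \le \sqrt{10\widehat h_k^a/3}\cdot\sqrt{\widehat h_k^a} = \sqrt{10/3}\,\widehat h_k^a \le 2\widehat h_k^a$ — except $\sqrt{10/3} \approx 1.83 < 2$, so this is fine; alternatively one can bound $\bigl|h_k^a - \widehat h_k^a\bigr| \le h_k^a + \widehat h_k^a$ and use \eqref{eq:bound_h} once more to control $h_k^a$ in terms of $\widehat h_k^a$. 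I would present the cleaner of these. The main obstacle, as noted, is purely bookkeeping: making sure every application of the definitions of $\epsilon^0,\epsilon^1$ and the membership conditions is done with the right inequality direction, and that the absolute constants ($3$, $10$, $2$) are actually achieved and not merely achievable up to a larger constant; there is no conceptual difficulty beyond careful use of Lemma~\ref{lem:concentration_histogram}.
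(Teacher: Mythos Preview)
Your proposal is correct and follows essentially the same route as the paper's proof: the thresholding cases for \eqref{eq:bound_h_Omega} via \eqref{eq:bound_est} and the definitions of $\epsilon^0,\epsilon^1$; the absorption of the additive $3u/n^a$ into the square-root term via the lower bound $\widehat h_k^a \ge 3u/n^a$ (for $a=1$ the paper writes this as $\widehat h_k^1 \ge \epsilon^1 \ge (\epsilon^1)^2 = 3u/n^1$, using $\epsilon^1\le 1$, which is the clean way to phrase your ``$N_k^1 \ge 3u$ after rearranging''); and the final bound $(\sqrt{2}+\sqrt{3})^2 < 10$. The paper derives \eqref{eq:bound_h_abs} directly from $\sqrt{2u\widehat h_k^a/n^a} + 3u/n^a \le (\sqrt{2/3}+1)\widehat h_k^a$ rather than going through \eqref{eq:bound_h}, but your route via $\sqrt{10/3}<2$ is equally valid.
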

The proof of Lemma \ref{lem:error_bound} is postponed to Appendix \ref{app:proof_error_bound}.

\bigskip
Finally, the following Lemma provides a lower bound on the signal term $\sum_k \left(\widehat{r}_k - 1\right)h_k^1$ depending on the estimated signal $\widehat{\sigma}^2$.
\begin{lemma}\label{lem:lower_bound_B1}
    Assume that $n^0 \geq n^1$ and that $\epsilon^0, \epsilon^1\leq 1$. On the event $\mathcal{E}$ of Lemma \ref{lem:concentration_histogram}, 
    $$\sum_k \left(\widehat{r}_k - 1\right)h_k^1 \geq \left(1 - \left(\frac{u}{n^1}\right)^{\frac{1}{4}}\right)\widehat{\sigma}^2 -  \left(\frac{u}{n^1}\right)^{\frac{1}{4}} - \left(\epsilon^1\left(13\vert \Omega^1\vert + 4\vert \Omega^{01}\vert\right) + 3\epsilon^0\vert \Omega^0\vert\right).$$
\end{lemma}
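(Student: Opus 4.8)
The plan is to lower-bound $\sum_k (\widehat{r}_k - 1) h_k^1$ by comparing it term-by-term with the estimated signal $\widehat{\sigma}^2 = \sum_k (\widehat{r}_k-1)^2 \widehat{h}_k^0 = \sum_k (\widehat{r}_k-1)(\widehat{h}_k^1 - \widehat{h}_k^0)$. Writing
\[
\sum_k (\widehat{r}_k-1) h_k^1 = \sum_k (\widehat{r}_k-1)\widehat{h}_k^1 + \sum_k (\widehat{r}_k-1)(h_k^1 - \widehat{h}_k^1),
\]
and noting that $\sum_k (\widehat{r}_k-1)\widehat{h}_k^1 = \widehat{\sigma}^2 + \sum_k (\widehat{r}_k-1)\widehat{h}_k^0 \geq \widehat{\sigma}^2$ only on bins where $\widehat{r}_k \geq 1$, I would split the index set $\{1,\dots,K\}$ according to the sets $\Omega^0$, $\Omega^1$, $\Omega^{01}$ and the "good" bins (where neither density was thresholded), and bound the error term $\sum_k (\widehat{r}_k-1)(h_k^1 - \widehat{h}_k^1)$ on each piece using Lemma \ref{lem:error_bound}. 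The cleanest route is probably to write $\widehat{r}_k - 1 = (\widehat{h}_k^1 - \widehat{h}_k^0)/\widehat{h}_k^0$ and observe $(\widehat{r}_k-1)h_k^1 = (\widehat{r}_k-1)\widehat{h}_k^1 - (\widehat{r}_k-1)(\widehat{h}_k^1 - h_k^1)$, then control the discrepancy term.

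The key steps, in order, are: (i) decompose $\sum_k(\widehat{r}_k-1)h_k^1$ into the "main" contribution $\widehat{\sigma}^2$ plus correction terms, using the identity $\widehat{\sigma}^2 = \sum_k(\widehat{r}_k-1)\widehat{h}_k^0$ and $\widehat{h}_k^1 = (\widehat{r}_k-1)\widehat{h}_k^0 + \widehat{h}_k^0$; (ii) on the good bins $k \notin \Omega^0 \cup \Omega^1 \cup \Omega^{01}$, use \eqref{eq:bound_h} to bound $|h_k^1 - \widehat{h}_k^1| \leq \sqrt{10u\widehat{h}_k^1/n^1}$ and \eqref{eq:bound_h} for $a=0$, then apply Cauchy–Schwarz across bins to turn $\sum_k |\widehat{r}_k - 1|\sqrt{\widehat{h}_k^1/n^1}$ into something like $\sqrt{(u/n^1)}\cdot\sqrt{\widehat{\sigma}^2}$ up to constants (here one needs $\widehat{h}_k^1 = \widehat{r}_k \widehat{h}_k^0$ to relate $(\widehat{r}_k-1)^2 \widehat{h}_k^1$ to $(\widehat{r}_k-1)^2\widehat{h}_k^0 \cdot \widehat{r}_k$ and handle the extra $\widehat{r}_k$ factor, likely splitting $\widehat{r}_k \leq 1$ versus $\widehat{r}_k > 1$ or bounding $\widehat{r}_k \leq 1 + |\widehat{r}_k-1|$); (iii) on the thresholded bins in $\Omega^1 \cup \Omega^{01}$, where $\widehat{h}_k^1 = 3\epsilon^1$ and $h_k^1$ is small (of order $\epsilon^1$ by the definition of these sets and \eqref{eq:bound_freq}), bound each term crudely by a constant times $\epsilon^1$ and sum, producing the $13\epsilon^1|\Omega^1| + 4\epsilon^1|\Omega^{01}|$ terms; (iv) on $\Omega^0$, where $\widehat{h}_k^0 = 3\epsilon^0$ and $h_k^0 \leq \epsilon^0$ while $\widehat{h}_k^1 = N_k^1/n^1 > \epsilon^1$, control $(\widehat{r}_k - 1)h_k^1$ and the discrepancy, yielding the $3\epsilon^0|\Omega^0|$ term; (v) collect all error terms and use the elementary inequality $2ab \leq \delta a^2 + b^2/\delta$ (with $\delta$ chosen as a power of $(u/n^1)^{1/4}$) to convert the cross term $\sqrt{(u/n^1)\widehat{\sigma}^2}$ into $(u/n^1)^{1/4}\widehat{\sigma}^2 + (u/n^1)^{1/4}$, which is exactly the shape of the stated bound.

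The main obstacle I anticipate is step (ii)/(v): carefully tracking the constants and the extra factor of $\widehat{r}_k$ (or $\widehat{h}_k^1/\widehat{h}_k^0$) that appears when moving between the $L^2(\widehat{h}^0)$ norm defining $\widehat{\sigma}^2$ and the $L^1(\widehat{h}^1)$-type quantity we are bounding, and then calibrating the Cauchy–Schwarz / Young's inequality step so that the multiplicative perturbation of $\widehat{\sigma}^2$ comes out as exactly $1 - (u/n^1)^{1/4}$ rather than some other power. One has to be slightly delicate because $\widehat{\sigma}^2$ could itself be small, so the deflation factor must be multiplicative on $\widehat{\sigma}^2$ and only additively lose $(u/n^1)^{1/4}$; this forces the choice of the splitting parameter in Young's inequality to depend on $\widehat{\sigma}$ in the right way (or, alternatively, to bound $\sqrt{(u/n^1)}\sqrt{\widehat{\sigma}^2} \le (u/n^1)^{1/4}\widehat{\sigma}^2$ when $\widehat{\sigma}^2 \ge \sqrt{u/n^1}$ and absorb the complementary case into the additive $(u/n^1)^{1/4}$ term). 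The thresholded-bin bounds (iii)–(iv) are essentially bookkeeping once one uses the membership conditions defining $\Omega^0,\Omega^1,\Omega^{01}$ together with \eqref{eq:bound_freq} and \eqref{eq:bound_h_Omega}.
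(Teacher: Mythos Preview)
Your overall decomposition and bookkeeping on the thresholded sets $\Omega^0,\Omega^1,\Omega^{01}$ is essentially the same as the paper's, and the identity $\sum_k(\widehat r_k-1)\widehat h_k^1=\widehat\sigma^2+\sum_k\widehat h_k^1-\sum_k\widehat h_k^0$ is exactly what is used at the end. The gap is in your step (ii). The Cauchy--Schwarz route you sketch will not produce the claimed $\sqrt{(u/n^1)\,\widehat\sigma^2}$: after writing $|h_k^1-\widehat h_k^1|\le\sqrt{10u\widehat h_k^1/n^1}$ you are left with $\sum_k|\widehat r_k-1|\sqrt{\widehat h_k^1}$, and since $\widehat h_k^1=\widehat r_k\widehat h_k^0$ any Cauchy--Schwarz step drags in either a $\sqrt K$ factor or the uncontrolled quantity $\sum_k(\widehat r_k-1)^2\widehat r_k\widehat h_k^0$, which can be as large as $\max_k\widehat r_k\cdot\widehat\sigma^2$. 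Your proposed fixes (splitting on $\widehat r_k\lessgtr 1$, or $\widehat r_k\le 1+|\widehat r_k-1|$) do not close this; at best they yield a bound like $C\delta(\widehat\sigma^2+1)$ with worse constants, and the subsequent Young step in (v) is then being applied to the wrong object.

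The paper avoids this entirely by exploiting the specific choice $\epsilon^1=\sqrt{3u/n^1}$. On good bins $k\notin\Omega^1\cup\Omega^{01}$ one has $N_k^1\ge n^1\epsilon^1=\sqrt{3un^1}$, so the relative error satisfies $\sqrt{2u/N_k^1}\le (u/n^1)^{1/4}$, giving the \emph{multiplicative} bound $h_k^1\ge\bigl(1-(u/n^1)^{1/4}\bigr)\widehat h_k^1$ directly. One then multiplies by $\widehat r_k\ge 0$ (not $\widehat r_k-1$) to get $\sum\widehat r_k h_k^1\ge(1-\delta)\sum\widehat r_k\widehat h_k^1$, subtracts $\sum h_k^1$, and uses only total-mass bounds $\sum\widehat h_k^1\le 1$, $\sum h_k^1\le 1$ to produce both the factor $(1-\delta)$ on $\widehat\sigma^2$ and the additive $-\delta$ in one stroke---no Cauchy--Schwarz, no Young, no sign issues with $\widehat r_k-1$. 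This is the missing idea: work with $\widehat r_k h_k^1$ rather than $(\widehat r_k-1)(h_k^1-\widehat h_k^1)$, and recognise that the square-root scaling of $\epsilon^1$ is precisely what converts the additive concentration error into a multiplicative one of order $(u/n^1)^{1/4}$.
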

The proof of Lemma \ref{lem:lower_bound_B1} is postponed to Appendix \ref{app:proof_lower_bound_B1}.


\subsubsection{Proof of Lemma \ref{lem:type_I}} \label{subsubsec:proof_lem_type_I}
    We apply Bernstein's inequality \eqref{thm:Bernstein} conditionally on the samples $\mathbb{X}^0$ and $\mathbb{X}^1$ to the random variables $A_i = \sum_{k\leq K}\left(\widehat{r}_k - 1\right)\left(\mathds{1}\{X_i \in B_k\} - h_k^0\right)$. Note that the random variables $A_i$ are centered, that $A_i \leq \max_{k \leq K}\vert \widehat{r}_k - 1\vert$ a.s., and that $\frac{1}{n}\sum_{i\leq n}\mathbb{E}[A_i^2] \leq \sum_{k \leq K}\left(\widehat{r}_k - 1\right)^2h_k^0$. Then, under $H_0$, for all $t>0$, 
\begin{align}\label{eq:Bernstein_H0}
    \mathbb{P}^{H_0}_{\vert \mathbb{X}^0, \mathbb{X}^1}\left( \widehat{S}(\mathbb{X}) \geq  \sum_{k \leq K}\left(\widehat{r}_k - 1\right)h_k^0 + \sqrt{\frac{2t}{n}\sum_{k \leq K}\left(\widehat{r}_k - 1\right)^2h_k^0} + \frac{t}{3n}\max_{k\leq K}\vert \widehat{r}_k - 1\vert \right)\leq& e^{-t}.
\end{align}
To prove Lemma \ref{lem:type_I}, we provide upper bounds on the bias 
$$B_0 = \sum_{k \leq K}\left(\widehat{r}_k - 1\right) h_k^0,$$
and on the variance 
$$V_0 = \sum_{k \leq K}\left(\widehat{r}_k - 1\right)^2h_k^0.$$

\paragraph{Bound on the bias $\mathbf{B_0}$}
We begin by bounding the bias term. We decompose the $B_0$ as follows:
\begin{align*}
    B_0 &= \sum_{k \leq K}\left(\widehat{r}_k - 1\right)\widehat{h}_k^0 + \sum_{k \leq K}\left(\widehat{r}_k - 1\right)\left(h_k^0 - \widehat{h}_k^0 \right).
\end{align*}
Using the definition of $\widehat{r}_k = \frac{\widehat{h}_k^1}{\widehat{h}_k^0}$ yields
\begin{align*}
    B_0 &= \sum_{k \leq K}\widehat{h}_k^1 - \sum_{k \leq K}\widehat{h}_k^0 + \sum_{k \leq K}\left( \widehat{r}_k - 1\right) \left( h_k^0 - \widehat{h}_k^0 \right).
\end{align*}
By definition of $\widehat{h}_k^1$, $\sum_{k \leq K}\widehat{h}_k^1 \leq 1 + 3\epsilon^1 \left(\vert \Omega^1\vert + \vert \Omega^{01}\vert\right)$, and $\sum_{k \leq K}\widehat{h}_k^0 \geq 1$. Thus, 
\begin{align*}
    B_0 &\leq  3\epsilon^1 \left(\vert \Omega^1\vert + \vert \Omega^{01}\vert\right) + \sum_{k \leq K}\left( \widehat{r}_k - 1\right) \left( h_k^0 - \widehat{h}_k^0 \right).
\end{align*}
To bound $\sum_{k \leq K}\left( \widehat{r}_k - 1\right) \left( h_k^0 - \widehat{h}_k^0 \right)$, we consider separately the cases where $k \in \Omega^{0}\cup \Omega^{01}$ and $k \notin \Omega^{0}\cup \Omega^{01}$. On the one hand, for all $k \in \Omega^{01}$, $\widehat{h}^0_k = 3\epsilon^1 = \widehat{h}^1_k$, so $\widehat{r}_k=1$ and
\begin{align*}
    \sum_{k \in \Omega^{01}}\left( \widehat{r}_k - 1\right)\left(h_k^0 - \widehat{h}_k^0 \right)= 0.
\end{align*}
Similarly, for all $k \in \Omega^{0}$, $\widehat{h}^0_k = 3\epsilon^0$. Moreover, $\widehat{h}^1_k > \epsilon^1 \geq 3\epsilon^0$. Thus, for all $k \in \Omega^{0}$, we see that $\widehat{r}_k \geq 1$. Moreover, Equation \eqref{eq:bound_h_Omega} in Lemma \ref{lem:error_bound} shows that for $k \in \Omega^{0}\cup\Omega^{01}$, $h_k^0 \leq \widehat{h}_k^0$. Thus,
\begin{align*}
    \sum_{k \in \Omega^{0}}\left( \widehat{r}_k - 1\right)\left(h_k^0 - \widehat{h}_k^0 \right)\leq 0.
\end{align*}
It remains to bound $\sum_{k\notin \Omega^0 \cup \Omega^{01}}\left( \widehat{r}_k - 1\right)\left( h_k^0 - \widehat{h}_k^0 \right)$. Equation \eqref{eq:bound_h} in Lemma \ref{lem:error_bound} implies that on the event $\mathcal{E}$,
\begin{align*}
    \sum_{k\notin \Omega^0 \cup \Omega^{01}}\left\vert \widehat{r}_k - 1\right\vert \left\vert h_k^0 - \widehat{h}_k^0 \right\vert &\leq \sqrt{\frac{10u}{n^0}}\sum_{k\notin \Omega^0}\sqrt{\left(\widehat{r}_k - 1\right)^2\widehat{h}_k^0}\\
    &\leq \sqrt{\frac{10u}{n^0}}\sum_{k\leq K}\sqrt{\left(\widehat{r}_k - 1\right)^2\widehat{h}_k^0}
\end{align*}
Now, Jensen's inequality implies
\begin{align*}
    \sum_{k\leq K}\sqrt{\left( \widehat{r}_k - 1\right)^2\widehat{h}_k^0} &\leq \sqrt{K\widehat{\sigma}^2}.
\end{align*}
Thus,
\begin{align}\label{eq:bound_B0}
    B_0 &\leq  3\epsilon^1 \left(\vert \Omega^1\vert + \vert \Omega^{01}\vert\right) + \sqrt{\frac{10uK\widehat{\sigma}^2}{n^0}}.
\end{align}

\paragraph{Bound on $\mathbf{V_0}$}
Next, we control the variance term under $H_0$, defined as 
$$V_0 = \sum_{k \leq K}\left(\widehat{r}_k - 1\right)^2h_k^0.$$
On the event $\mathcal{E}$, Lemma \ref{lem:error_bound} implies

\begin{align}\label{eq:bound_V0}
V_0 &\leq 3\sum_{k \leq K}\left(\widehat{r}_k - 1\right)^2\widehat{h}_k^0 = 3\widehat{\sigma}^2.
\end{align}

\paragraph{Conclusion}
Combing Equations \eqref{eq:Bernstein_H0}, \eqref{eq:bound_B0}, and \eqref{eq:bound_B0}, we find that under $H_0$, on the event $\mathcal{E}$, with probability $1 - e^{-t}$,
\begin{align*}
    \widehat{S}(\mathbb{X}) < &  3\epsilon^1 \left(\vert \Omega^1\vert + \vert \Omega^{01}\vert\right)  + \sqrt{\frac{10uK\widehat{\sigma}^2}{n^0}} + \sqrt{\frac{6t\widehat{\sigma}^2}{n}} + \frac{t}{3n}\max_{k \leq K}\vert \widehat{r}_k - 1\vert.
\end{align*}
Note that for $u = \log(\frac{4K}{\alpha})$, the event $\mathcal{E}$ holds with probability $1-\frac{\alpha}{2}$. Choosing $t = \log(\frac{2}{\alpha})$, we see that the test described in Lemma has type one error lower than $\alpha$.


\subsubsection{Proof of Lemma \ref{lem:type_II}}\label{subsubsec:proof_lem_type_II}

Recall that under $H_1$, the sample $X$ has a density given by $\theta \times f_0 + (1-\theta) \times f_1$. Applying Bernstein's inequality conditionally on the samples $\mathbb{X}^0$ and $\mathbb{X}^1$, we see that under $H_1$, for all $t>0$, 
\begin{align}\label{eq:Bernstein_H1}
    \mathbb{P}^{H_1}_{\vert \mathbb{X}^0, \mathbb{X}^1}\Bigg(& \widehat{S}(\mathbb{X}) \leq  (1-\theta)\sum_{k \leq K}\left(\widehat{r}_k - 1\right)h_k^0 + \theta \sum_{k \leq K}\left(\widehat{r}_k - 1\right)h_k^1 \nonumber\\
    &- \sqrt{\frac{2t}{n}\left(\sum_{k \leq K}\left(\widehat{r}_k - 1\right)^2h_k^0 + \theta\sum_{k \leq K}\left(\widehat{r}_k - 1\right)^2(h_k^1 - h_k^0)\right)} - \frac{t}{3n}\max_{k\leq K}\vert \widehat{r}_k - 1\vert \Bigg)\leq& e^{-t}.
\end{align}
To prove Lemma \ref{lem:type_II}, we provide lower bounds on the bias under $H_1$
$$B_1 = (1-\theta)\sum_{k \leq K}\left(\widehat{r}_k - 1\right)h_k^0 + \theta \sum_{k \leq K}\left(\widehat{r}_k - 1\right)h_k^1,$$
as well as upper bounds on the variance
$$V_1 = \sum_{k \leq K}\left(\widehat{r}_k - 1\right)^2h_k^0 + \theta\sum_{k \leq K}\left(\widehat{r}_k - 1\right)^2(h_k^1 - h_k^0).$$

\paragraph{Bound on $\mathbf{B_1}$} We begin by providing a lower bound on
$$B_1 = (1-\theta)\sum_{k \leq K}\left(\widehat{r}_k - 1\right)h_k^0 + \theta \sum_{k \leq K}\left(\widehat{r}_k - 1\right)h_k^1.$$
To do so, we first provide a lower bound on $\sum_{k \leq K}\left(\widehat{r}_k - 1\right)h_k^0$. Following the proof of Lemma \ref{lem:type_I}, we note that for all $k \in \Omega^0 \cup \Omega^{01}$, $\widehat{r}_k \geq 1$. Thus, 
\begin{align*}
    \sum_{k}\left(\widehat{r}_k - 1\right)h_k^0 & 
    \geq \sum_{k\notin \Omega^0\cup  \Omega^{01}}\left(\widehat{r}_k - 1\right)h_k^0\\
    & \geq \sum_{k\notin \Omega^0\cup  \Omega^{01}}\widehat{h}_k^0 - \widehat{h}_k^1
    - \sum_{k\notin \Omega^0\cup  \Omega^{01}}\left\vert\widehat{r}_k - 1\right\vert \left\vert h_k^0 - \widehat{h}_k^0\right\vert.
\end{align*}
Now, $\sum_{k\notin \Omega^0\cup  \Omega^{01}}\widehat{h}_k^0 \geq 1 - \left(\epsilon^0\vert \Omega^0 \vert +\epsilon^1\vert \Omega^{01}\vert\right)$, and  $\sum_{k\notin \Omega^0\cup  \Omega^{01}}\widehat{h}_k^1 \leq 1 + 3\epsilon^1\vert \Omega^1 \vert$. Thus, 
\begin{align*}
    \sum_{k}\left(\widehat{r}_k - 1\right)h_k^0 & \geq -  \left(3\epsilon^{1}\vert \Omega^1 \vert + \epsilon^{0}\vert \Omega^0 \vert + \epsilon^{1}\vert \Omega^{01}\vert\right)
    - \sum_{k \notin \Omega^0\cup\Omega^{01}}\left\vert\widehat{r}_k - 1\right\vert \left\vert h_k^0 - \widehat{h}_k^0\right\vert.
\end{align*}
Following the proof of Lemma \ref{lem:type_I}, we find that on the event $\mathcal{E}$,
\begin{align*}
    \sum_{k \notin \Omega^0\cup \Omega^{01}}\left\vert\widehat{r}_k - 1\right\vert \left\vert h_k^0 - \widehat{h}_k^0\right\vert &\leq \sqrt{\frac{10u}{n^0}}\sum_{k\notin \Omega^0\cup \Omega^{01}}\sqrt{\left(\widehat{r}_k - 1\right)^2\widehat{h}_k^0}\\ &\leq \sqrt{\frac{10u}{n^0}} \sqrt{K\widehat{\sigma}^2}
\end{align*}
Combining these results, we find that
\begin{align*}
    (1-\theta)\sum_{k \leq K}\left(\widehat{r}_k - 1\right)h_k^0 \geq - (1-\theta)\left(3\epsilon^{1}\vert \Omega^1 \vert + \epsilon^{0}\vert \Omega^0 \vert + \epsilon^{1}\vert \Omega^{01}\vert\right) - \sqrt{\frac{10uK\widehat{\sigma}^2}{n^0}}.
\end{align*}
To conclude the lower bound on $B_1$, we rely on Lemma \ref{lem:lower_bound_B1}. We find that
\begin{align}\label{eq:bound_B1}
    B_1 \geq  \left(\left(1 - \left(\frac{u}{n^1}\right)^{\frac{1}{4}}\right)\widehat{\sigma}^2 - \left(\frac{u}{n^1}\right)^{\frac{1}{4}}\right)\theta - \left(\sqrt{\frac{10uK\widehat{\sigma}^2}{n^0}} +  \left(13\epsilon^1\vert \Omega^1\vert  + 4\epsilon^1\vert \Omega^{01}\vert  + 3\epsilon^0 \vert\Omega^0\vert\right)\right).
\end{align}

\paragraph{Bound on $\mathbf{V_1}$} Next, we provide an upper bound on the variance
$$V_1 = \sum_{k \leq K}\left(\widehat{r}_k - 1\right)^2h_k^0 + \theta\sum_{k \leq K}\left(\widehat{r}_k - 1\right)^2(h_k^1 - h_k^0).$$
On the one hand, as shown previously, on $\mathcal{E}$,
\begin{align*}
    \sum_{k \leq K}\left(\widehat{r}_k - 1\right)^2h_k^0 \leq 3 \widehat{\sigma}^2.
\end{align*}
On the other hand,
\begin{align*}
    \sum_{k \leq K}\left(\widehat{r}_k - 1\right)^2(h_k^1 - h_k^0) &\leq \max_{k}\frac{\left \vert h_k^1 - h_k^0\right\vert }{\widehat{h}_k^0}\sum_{k \leq K}\left(\widehat{r}_k - 1\right)^2\widehat{h}_k^0\\
    &\leq \max_{k}\frac{\left \vert h_k^1 - h_k^0\right \vert }{\widehat{h}_k^0} \times \widehat{\sigma}^2
\end{align*}
To bound $\max_{k}\frac{\left \vert h_k^1 - h_k^0\right \vert }{\widehat{h}_k^0}$, note that on the $\mathcal{E}$, Lemma \ref{lem:concentration_histogram} implies that $h_k^1 \leq 3 \widehat{h}_k^1$ and $h_k^0 \leq 3 \widehat{h}_k^0$. Thus,
\begin{align*}
    \max_{k}\frac{\left \vert h_k^1 - h_k^0\right \vert }{\widehat{h}_k^0} &\leq \max_{k}\frac{h_k^1}{\widehat{h}_k^0} \lor\max_{k}\frac{h_k^0}{\widehat{h}_k^0}\\
    &\leq 3 (\max_{k}\widehat{r}_k \lor 1).
\end{align*}
Moreover, necessarily $\max_k \widehat{r}_k \geq 1$.
Finally, this implies that
\begin{align*}
    \sum_{k \leq K}\left(\widehat{r}_k - 1\right)^2(h_k^1 - h_k^0)  \leq 3\left(\max_{k} \left \vert \widehat{r}_k - 1\right \vert + 1\right)\times \widehat{\sigma}^2
\end{align*}
\begin{align}\label{eq:bound_V1}
    V_1 \leq 3\widehat{\sigma}^2\left(1 + \theta + \theta \max_{k}\left \vert \widehat{r}_k - 1\right \vert\right).
\end{align}

\paragraph{Conclusion} 
Equation \eqref{eq:Bernstein_H1} implies that for all $t>0$, with probability at least $1-e^{-t}$,

\begin{align*}
    \widehat{S}(\mathbb{X}) >&  B_1 - \sqrt{\frac{2t}{n}V_1} - \frac{t}{3n}\max_{k\leq K}\vert \widehat{r}_k - 1\vert.
\end{align*}
Plugging the bounds on $B_1$ and $V_1$ given respectively in Equations \eqref{eq:bound_B1} and \eqref{eq:bound_V1}, we find that under $H_1$, on the event $\mathcal{E}$, with probability at least $1-e^{-t}$,
\begin{align*}
    \widehat{S}(\mathbb{X}) > &  \left(\left(1 - \left(\frac{u}{n^1}\right)^{\frac{1}{4}}\right)\widehat{\sigma}^2 - \left(\frac{u}{n^1}\right)^{\frac{1}{4}}\right)\theta - \left(\sqrt{\frac{10uK\widehat{\sigma}^2}{n^0}} + \left(13\epsilon^1\vert \Omega^1\vert  + 4\epsilon^1\vert \Omega^{01}\vert  + 3\epsilon^0 \vert\Omega^0\vert\right)\right)\\
    & - \sqrt{\widehat{\sigma}^2}\sqrt{\frac{6t}{n}\left(1 + \theta + \theta \max_{k}\left \vert \widehat{r}_k - 1\right \vert\right)}  - \frac{t}{3n}\max_k\vert \widehat{r}_k - 1\vert.
\end{align*}


\subsubsection{Proof of Lemma \ref{lem:power}}

To prove Lemma \ref{lem:power}, note that the test presented above rejects the null hypothesis if 
$$\widehat{S}(\mathbb{X}) \geq \sqrt{\widehat{\sigma}^2}\left(\sqrt{\frac{10uK}{n^0}} +\sqrt{\frac{6t}{n}}\right)+ \frac{t}{3n}\max_{k}\vert \widehat{r}_k - 1\vert +3 \epsilon^1 \left(\vert \Omega^{01}\vert + \vert \Omega^1\vert \right).$$

Thus, according to Lemma \ref{lem:type_II}, the test rejects the hypotheses $H_0$ with probability larger than $1-\alpha$ under the alternative hypothesis $H_1$ if $\theta$ is such that 
\begin{align*}
\sqrt{\widehat{\sigma}^2}\left(\sqrt{\frac{10uK}{n^0}} +\sqrt{\frac{6t}{n}}\right)&+ \frac{t}{3n}\max_{k}\vert \widehat{r}_k - 1\vert +3 \epsilon^1 \left(\vert \Omega^{01}\vert + \vert \Omega^1\vert \right)\\
 \leq & \left(\left(1 - \left(\frac{u}{n^1}\right)^{\frac{1}{4}}\right)\widehat{\sigma}^2 -  \left(\frac{u}{n^1}\right)^{\frac{1}{4}}\right)\theta  - \left(13\epsilon^1\vert \Omega^1\vert  + 4\epsilon^1\vert \Omega^{01}\vert  + 3\epsilon^0 \vert\Omega^0\vert\right) \\
 & - \sqrt{\widehat{\sigma}^2} \left(\sqrt{\frac{6t}{n}\left(1 + \theta + \theta \max_{k}\left \vert \widehat{r}_k - 1\right \vert\right)}+\sqrt{\frac{10uK}{n^0}} \right)  - \frac{t}{3n}\max_k\vert \widehat{r}_k - 1\vert 
\end{align*}
or equivalently if
\begin{align*}
  &\left(\left(1 - \left(\frac{u}{n^1}\right)^{\frac{1}{4}}\right)\widehat{\sigma}^2 - \left(\frac{u}{n^1}\right)^{\frac{1}{4}}\right)\theta  - \left(16\epsilon^1\vert \Omega^1\vert  + 7\epsilon^1\vert \Omega^{01}\vert  + 3\epsilon^0 \vert\Omega^0\vert\right)\\ 
  &\qquad -\sqrt{\widehat{\sigma}^2} \left(\sqrt{\frac{6t}{n}}\left(\sqrt{1 + \theta + \theta \max_{k}\left \vert \widehat{r}_k - 1\right \vert } + 1\right)+\sqrt{\frac{40uK}{n^0}} \right)
  - \frac{2t}{3n}\max_k\vert \widehat{r}_k - 1\vert \geq 0
\end{align*}
To simplify this expression, note that it is verified if
\begin{align*}
  &\left(\left(1 - \left(\frac{u}{n^1}\right)^{\frac{1}{4}}\right)\widehat{\sigma}^2 - \left(\frac{u}{n^1}\right)^{\frac{1}{4}}\right)\theta  - \sqrt{\frac{6t\widehat{\sigma}^2}{n}\left(1 +\max_{k}\left \vert \widehat{r}_k - 1\right \vert \right)} \sqrt{\theta} \\
   &\qquad - \left(\left(16\epsilon^1\vert \Omega^1\vert  + 7\epsilon^1\vert \Omega^{01}\vert  + 3\epsilon^0 \vert\Omega^0\vert\right)+\sqrt{\widehat{\sigma}^2} \left(\sqrt{\frac{24t}{n}} +\sqrt{\frac{40uK}{n^0}} \right)
  + \frac{2t}{3n}\max_k\vert \widehat{r}_k - 1\vert\right) \geq 0
\end{align*}
Using $(\sqrt{a}+\sqrt{b})^2\leq 2a + 2b$, we see that this is verified if
\begin{align*}
  \theta \geq& \frac{\frac{6t\widehat{\sigma}^2}{n}\left(1 +\max_{k}\left \vert \widehat{r}_k - 1\right \vert \right)}{\left(\left(1 - \left(\frac{u}{n^1}\right)^{\frac{1}{4}}\right)\widehat{\sigma}^2 - \left(\frac{u}{n^1}\right)^{\frac{1}{4}}\right)^2} \\
  &+ 2\frac{\left(16\epsilon^1\vert \Omega^1\vert  + 7\epsilon^1\vert \Omega^{01}\vert  + 3\epsilon^0 \vert\Omega^0\vert\right)+\sqrt{\widehat{\sigma}^2} \left(\sqrt{\frac{24t}{n}} +\sqrt{\frac{40uK}{n^0}} \right)
  + \frac{2t}{3n}\max_k\vert \widehat{r}_k - 1\vert}{\left(1 - \left(\frac{u}{n^1}\right)^{\frac{1}{4}}\right)\widehat{\sigma}^2 - \left(\frac{u}{n^1}\right)^{\frac{1}{4}}}
\end{align*}
or equivalently if 
\begin{align}\label{eq:condition_puissance}
  \theta \geq& \frac{1}{\left(1 - \left(\frac{u}{n^1}\right)^{\frac{1}{4}}\left( 1 + \widehat{\sigma}^2\right)\right)^2} \times  \left(\frac{6t}{n\widehat{\sigma}^2}\left(1 +\max_{k}\left \vert \widehat{r}_k - 1\right \vert \right) \right)\nonumber\\
  &+ \frac{1}{\left(1 - \left(\frac{u}{n^1}\right)^{\frac{1}{4}}\left( 1 + \widehat{\sigma}^2\right)\right)} \times \left(\frac{32\epsilon^1\vert \Omega^1\vert  + 14\epsilon^1\vert \Omega^{01}\vert  + 6\epsilon^0 \vert\Omega^0\vert}{\widehat{\sigma}^2}+ \sqrt{\frac{96t}{n\widehat{\sigma}^2}} +\sqrt{\frac{160uK}{n^0\widehat{\sigma}^2}}
  + \frac{4t}{3n\widehat{\sigma}^2}\max_k\vert \widehat{r}_k - 1\vert\right)
\end{align}
To simplify this expression, note that if $\epsilon^{1} \leq 1$, $\frac{u}{n^1} \leq \frac{1}{3}$, so $\left(1 - \left(\frac{u}{n^1}\right)^{\frac{1}{4}}\left( 1 + \widehat{\sigma}^2\right)\right)^{-1} \leq 5.$ Thus, Equation \eqref{eq:condition_puissance} is verified if
\begin{align*}
  \theta \geq& \frac{150t}{n\widehat{\sigma}^2}\left(1 +\max_{k}\left \vert \widehat{r}_k - 1\right \vert \right) + \frac{160\epsilon^1\vert \Omega^1\vert  + 70\epsilon^1\vert \Omega^{01}\vert  + 30\epsilon^0 \vert\Omega^0\vert}{\widehat{\sigma}^2}+ 5\sqrt{\frac{96t}{n\widehat{\sigma}^2}} \\&+5\sqrt{\frac{160uK}{n^0\widehat{\sigma}^2}}
  + \frac{10t}{3n\widehat{\sigma}^2}\max_k\vert \widehat{r}_k - 1\vert.
\end{align*}
Moreover, setting $k^* \in \argmax_k\vert\widehat{r}_k - 1\vert$, we have
\begin{align*}
    \max_k\vert \widehat{r}_k - 1\vert^2\epsilon^0 &\leq \vert \widehat{r}_{k^*} - 1\vert^2 \times \widehat{h}_{k^*}^0 \leq \widehat{\sigma}^2,
\end{align*}
so $$\frac{t}{n}\max_k\vert \widehat{r}_k - 1\vert \leq \sqrt{\frac{t}{n}}\sqrt{\epsilon^0\max_k\vert \widehat{r}_k - 1\vert^2} \leq \sqrt{\frac{t\widehat{\sigma}^2}{n}}.$$
Thus, Equation \eqref{eq:condition_puissance} is verified if
\begin{align*}
\theta \geq& \left(353\sqrt{\frac{t}{n\widehat{\sigma}^2}} + \frac{400\sqrt{u}K^1}{\widehat{\sigma}^2\sqrt{n^1}} + \frac{30\epsilon^0K^0}{\widehat{\sigma}^2}+64\sqrt{\frac{uK}{n^0\widehat{\sigma}^2}}\right).
\end{align*}
where $K^1 = \vert \Omega^1\vert + \vert \Omega^{0}\vert$ and $K^0 = \vert \Omega^0\vert$.

\subsubsection{Proof of Lemma \ref{lem:concentration_histogram}}\label{app:proof_concentration_histogram}

For $a\in \{0,1\}$ and $k \leq K$, applying Bernstein's inequality \ref{thm:Bernstein} to the random variables $\left(\mathds{1}\{X_i^a \in B_k\} - h^a_k\right)_{i \leq n^a}$, we see that 
\begin{align}
    \mathbb{P}\left(\left \vert \frac{N^a_k}{n^a} -  h^a_k \right \vert\geq \sqrt{2\frac{uh^a_k}{n^a}} + \frac{u}{3n^a} \right)\leq& 2e^{-u}.
\end{align}

Thus, the event $\mathcal{E} = \left \{\left \vert \frac{N^a_k}{n^a} -  h^a_k \right \vert\leq \sqrt{2\frac{uh^a_k}{n^a}} + \frac{u}{3n^a} \text{ for all }  k\leq K, a \in \{0,1\}\right \}$ occurs with probability larger than $1 - 4Ke^{-u}$. Note that on this event, Equation \eqref{eq:bound_freq} is verified for all $a\in \{0,1\}$ and $k \leq K$. Moreover,
\begin{align*}
    h^a_k - \frac{N^a_k}{n^a} - \sqrt{2\frac{uh^a_k}{n^a}} - \frac{u}{3n^a} &< 0.
\end{align*}
Thus, $h^a_k < r^2$, where $r$ is the positive root of the polynomial $x^2 - \sqrt{2\frac{u}{n^a}}x - \left(\frac{N^a_k}{n^a} + \frac{u}{3n^a}\right)$. Direct calculations show that
\begin{align}\label{eq:r_plus}
r^2 &\leq \frac{N_k^a}{n^a} + \frac{\sqrt{2uN_k^a}}{n^a}+ \frac{3u}{n^a}.
\end{align}
Similarly, on $\mathcal{E}$, 
\begin{align*}
    h^a_k - \frac{N^a_k}{n^a} + \sqrt{2\frac{uh^a_k}{n^a}} + \frac{u}{3n^a} &>  0.
\end{align*}
Thus, $h^a_k > r'^2$, where $r'$ is the positive root of the polynomial $x^2 + \sqrt{2\frac{u}{n^a}}x - \left(\frac{N^a_k}{n^a} - \frac{u}{3n^a}\right)$. Straightforward calculations show that
\begin{align}\label{eq:r_minus}
r'^2 &\geq \frac{N_k^a}{n^a} - \frac{\sqrt{2uN_k^a}}{n^a}.
\end{align}
Combining Equations \eqref{eq:r_plus} and \eqref{eq:r_minus} yields Equation \eqref{eq:bound_est}.

\subsubsection{Proof of Lemma \ref{lem:error_bound}}
\label{app:proof_error_bound}
We begin by proving Equation \eqref{eq:bound_h_Omega}. For $a = 1$ and $k \in \Omega^1\cup \Omega^{01}$, we have that $N_k^1 \leq \epsilon^1 n^1$. On the event $\mathcal{E}$, Lemma \ref{lem:concentration_histogram} implies that $$h^1_k \leq \epsilon^1 + \sqrt{\frac{2u\epsilon^1}{n^1}} + \frac{3u}{n^1} = \epsilon^1\left(1 + \sqrt{\frac{2\epsilon_1}{3}}\right) + \left(\epsilon^1\right)^2.$$
Since $\epsilon^1 \leq 1$, we see that $h^1_k \leq 3\epsilon^1 = \widehat{h}_k^1,$
which proves Equation \eqref{eq:bound_h_Omega} for $a = 1$.

For $a = 0$ and $k \in \Omega^0$, we have that $N_k^0 \leq \epsilon^0 n^0$. On the event $\mathcal{E}$, Lemma \ref{lem:concentration_histogram} implies that $$h^0_k \leq \epsilon^0 + \sqrt{\frac{2u\epsilon^0}{n^0}} + \frac{3u}{n^0} \leq 3\epsilon^0 = \widehat{h}_k^0.$$ 

Finally, for $a = 0$ and $k \in \Omega^{01}$, we have that $N_k^0 \leq \epsilon^1 n^0$. On the event $\mathcal{E}$, Lemma \ref{lem:concentration_histogram} implies that $$h^0_k \leq \epsilon^1 + \sqrt{\frac{2u\epsilon^1}{n^0}} + \frac{3u}{n^0}.$$ Since $n^1 \leq n^0$, this implies that 
$$h^0_k \leq \epsilon^1 + \sqrt{\frac{2u\epsilon^1}{n^1}} + \frac{3u}{n^1}.$$ The same arguments as previously show that this implies that $$h^0_k \leq 3\epsilon^1 = \widehat{h}^1_k.$$

\bigskip

We now turn to the proof of Equation \eqref{eq:bound_h}. For $a = 0$ and $k \notin \Omega^0\cup \Omega^{01}$, we have that $\widehat{h}_k^0 = \frac{N_k^0}{n^0}$, so Lemma \ref{lem:concentration_histogram} implies that on the event $\mathcal{E}$, $\left\vert h_k^0 - \widehat{h}_k^0 \right\vert \leq \sqrt{\frac{2u\widehat{h}_k^0}{n^0}} + \frac{3u}{n^0}.$ Moreover, since $k \notin \Omega^0\cup \Omega^{01}$, $\widehat{h}_k^0 \geq \epsilon^0 = \frac{3u}{n^0}$. Thus, $$\left\vert h_k^0 - \widehat{h}_k^0 \right\vert \leq \sqrt{\frac{2u\widehat{h}_k^0}{n^0}} + \sqrt{\frac{3u\widehat{h}_k^0}{n^0}} \leq \sqrt{\frac{10u\widehat{h}_k^0}{n^0}}.$$
Using $\frac{3u}{n^0} \leq \widehat{h}_k^0$, we also find that 
$$\left\vert h_k^0 - \widehat{h}_k^0 \right\vert \leq \sqrt{\frac{2}{3}} \widehat{h}_k^0 + \widehat{h}_k^0\leq 2\widehat{h}_k^0.$$

We then prove Equation \eqref{eq:bound_h} for $a = 1$ and $k \notin \Omega^1\cup \Omega^{01}$ : in this case, we have that $\widehat{h}_k^1 = \frac{N_k^1}{n^1}$, so Lemma \ref{lem:concentration_histogram} implies that on the event $\mathcal{E}$, $\left\vert h_k^1 - \widehat{h}_k^1 \right\vert \leq \sqrt{\frac{2u\widehat{h}_k^1}{n^1}} + \frac{3u}{n^1}.$ Moreover, since $k \notin \Omega^0\cup \Omega^{01}$, $\widehat{h}_k^1 \geq \epsilon^1 \geq \left(\epsilon^1\right)^2 = \frac{3u}{n^1}$. Thus, $$\left\vert h_k^1 - \widehat{h}_k^1 \right\vert \leq \sqrt{\frac{2u\widehat{h}_k^1}{n^1}} + \sqrt{\frac{3u\widehat{h}_k^1}{n^1}} \leq \sqrt{\frac{10u\widehat{h}_k^1}{n^1}}.$$
Using $\frac{3u}{n^1} \leq \sqrt{\frac{3u}{n^1}} \leq \widehat{h}_k^1$, we also find that 
$$\left\vert h_k^1 - \widehat{h}_k^1 \right\vert \leq \sqrt{\frac{2}{3}} \widehat{h}_k^1 + \widehat{h}_k^1\leq 2\widehat{h}_k^1.$$

\subsubsection{Proof of Lemma \ref{lem:lower_bound_B1}}\label{app:proof_lower_bound_B1}

We begin by decomposing $\sum_{k} (\widehat{r}_k - 1)h_k^1$ as
\begin{align*}
    \sum_{k} (\widehat{r}_k - 1)h_k^1 = \sum_{k \in \Omega^1} (\widehat{r}_k - 1)h_k^1 + \sum_{k \in \Omega^{01}} (\widehat{r}_k - 1)h_k^1 + \sum_{k \notin \Omega^1\cup \Omega^{01}} (\widehat{r}_k - 1)h_k^1.
\end{align*}
We obtain a lower bound on the first term by noticing that $\widehat{r}_k - 1 \geq -1$. Thus,
\begin{align*}
    \sum_{k \in \Omega^1} (\widehat{r}_k - 1)h_k^1 \geq - \sum_{k \in \Omega^1}h_k^1.
\end{align*}
Using Equation \eqref{eq:bound_h_Omega} in Lemma \ref{lem:error_bound}, we find that for all $k \in \Omega^1$, $h_k^1 \leq \widehat{h_k}^1 = 3\epsilon^1$. Thus, 
\begin{align*}
    \sum_{k \in \Omega^1} (\widehat{r}_k - 1)h_k^1 \geq - 3\epsilon^1\vert\Omega^1\vert.
\end{align*}
On the other hand, for $k \in \Omega^1$, $\widehat{r}_k \leq 3$. Thus, 
\begin{align*}
    \sum_{k \in \Omega^1} (\widehat{r}_k - 1)\widehat{h}_k^1 \leq 6\epsilon^1 \vert\Omega^1 \vert.
\end{align*}
This shows that 
\begin{align}\label{eq:borne_omega_1}
    \sum_{k \in \Omega^1} (\widehat{r}_k - 1)h_k^1 \geq \sum_{k \in \Omega^1} (\widehat{r}_k - 1)\widehat{h}_k^1 - 9\epsilon^1 \vert\Omega^1 \vert.
\end{align}

To obtain a lower bound on the second term, we note that for $k \in \Omega^{01}$,  $\widehat{h}_k^0 = \widehat{h}_k^1$. Thus, $\widehat{r}_k - 1 = 0$, and
\begin{align}\label{eq:borne_omega_0}
    \sum_{k \in \Omega^{01}} (\widehat{r}_k - 1)h_k^1 = \sum_{k \in \Omega^{01}} (\widehat{r}_k - 1)\widehat{h}_k^1.
\end{align}

Finally, we derive a lower bound on $\sum_{k \notin \Omega^1\cup \Omega^{01}} (\widehat{r}_k - 1)h_k^1$. We note that by Lemma \ref{lem:concentration_histogram}, on the event $\mathcal{E}$, for all $k\leq K$,
$$h^1_k  \geq  \frac{N^1_k}{n^1} \left(1 -  \sqrt{\frac{2u}{N^1_k}}\right).$$
Moreover, for all $k \notin \Omega^1 \cup \Omega^{01}$, $\widehat{h}_k^1 = \frac{N^1_k}{n^1} \geq \sqrt{\frac{3u}{n^1}}$, so $N^1_k \geq \sqrt{3un^1}$. This implies in particular that for all $k \notin \Omega^1 \cup \Omega^{01}$, $h^1_k  \geq  \widehat{h}_k^1\left(1 - \left(\frac{u}{n^1}\right)^{\frac{1}{4}}\right)$. Thus, 
\begin{align*}
    \sum_{k \notin \Omega^1\cup \Omega^{01}} \widehat{r}_kh_k^1 & \geq \left(1 - \left(\frac{u}{n^1}\right)^{\frac{1}{4}}\right)\sum_{k \notin \Omega^1\cup \Omega^{01}} \widehat{r}_k\widehat{h}_k^1.
\end{align*}

This implies that
\begin{align}
    \sum_{k \notin \Omega^1\cup \Omega^{01}} \left(\widehat{r}_k - 1\right) h_k^1 & \geq \left(1 - \left(\frac{u}{n^1}\right)^{\frac{1}{4}}\right)\left(\sum_{k \notin \Omega^1\cup \Omega^{01}} (\widehat{r}_k - 1)\widehat{h}_k^1\right) + \sum_{k \notin \Omega^1\cup \Omega^{01}}(\widehat{h}_k^1 - h_k^1) - \left(\frac{u}{n^1}\right)^{\frac{1}{4}}\sum_{k \notin \Omega^1\cup \Omega^{01}}\widehat{h}_k^1 \nonumber\\
\end{align}

\noindent Now, on the one hand, $\sum_{k \notin \Omega^1\cup \Omega^{01}} h_k^1 \leq 1$. On the other hand, $$\sum_{k \notin \Omega^1\cup \Omega^{01}} \widehat{h}_k^1 = 1 - \sum_{k \in \Omega^1\cup \Omega^{01}} \frac{N_k^1}{n^1} \geq 1 - \epsilon^1\left(\vert \Omega^1\vert + \vert \Omega^{01}\vert \right)$$ and $\sum_{k \notin \Omega^1\cup \Omega^{01}} \widehat{h}_k^1\leq 1$. Thus, 

\begin{align}\label{eq:borne_sigma_reste}
    \sum_{k \notin \Omega^1\cup \Omega^{01}} \left(\widehat{r}_k - 1\right) h_k^1 &\geq \left(1 - \left(\frac{u}{n^1}\right)^{\frac{1}{4}}\right) \sum_{k \notin \Omega^1\cup \Omega^{01}} (\widehat{r}_k- 1)\widehat{h}_k^1 + \left(\frac{u}{n^1}\right)^{\frac{1}{4}} - \epsilon^1\left(\vert \Omega^1\vert + \vert \Omega^{01}\vert \right)
\end{align}
Combining Equations \eqref{eq:borne_omega_1}, \eqref{eq:borne_omega_0}, and \eqref{eq:borne_sigma_reste}, we find that
\begin{align*}
    \sum_k \left(\widehat{r}_k - 1\right)h_k^1 &\geq \left(1 - \left(\frac{u}{n^1}\right)^{\frac{1}{4}}\right)\sum_k \left(\widehat{r}_k - 1\right)\widehat{h}_k^1  -  \left(\frac{u}{n^1}\right)^{\frac{1}{4}} - \epsilon^1\left(10\vert \Omega^1\vert + \vert \Omega^{01}\vert\right).
\end{align*}
To conclude the proof, we note that 
\begin{align*}
    \widehat{\sigma}^2 &= \sum_{k} \left(\widehat{r}_k - 1\right)^2\widehat{h}_k^0\\
    &= \sum_{k} \left(\widehat{r}_k - 1\right)\widehat{h}_k^1 + \sum_{k} \widehat{h}_k^0 - \sum_{k}\widehat{h}_k^1.
\end{align*}
Thus, 
\begin{align*}
    \sum_{k} \left(\widehat{r}_k - 1\right)\widehat{h}_k^1
    &= \widehat{\sigma}^2  + \sum_{k} \widehat{h}_k^1 - \sum_{k}\widehat{h}_k^0.
\end{align*}
As previously, we have that $\sum_{k} \widehat{h}_k^0\leq 1 + 3\epsilon^0\vert \Omega^0 \vert + 3\epsilon^1\vert \Omega^{01} \vert$, and $\sum_{k} \widehat{h}_k^1\geq 1.$ Thus,
\begin{align*}
    \sum_{k} \left(\widehat{r}_k - 1\right)\widehat{h}_k^1 &\geq \widehat{\sigma}^2 - 3\left(\epsilon^0 \vert \Omega^0\vert  + \epsilon^1 \vert \Omega^{01}\vert\right),
\end{align*}
which implies
\begin{align*}
    \sum_k \left(\widehat{r}_k - 1\right)h_k^1 &\geq \left(1 - \left(\frac{u}{n^1}\right)^{\frac{1}{4}}\right)\widehat{\sigma}^2 -  \left(\frac{u}{n^1}\right)^{\frac{1}{4}} - \left(\epsilon^1\left(13\vert \Omega^1\vert + 4\vert \Omega^{01}\vert\right) + 3\epsilon^0\vert \Omega^0\vert\right).
\end{align*}

\subsection{Proof of main lemmas and corollaries} \label{app:proof_main_lemmas}
\subsubsection{Proof of Lemma \ref{lem:known_f}}\label{app:proof_known_f}
Before proving Lemma  \ref{lem:known_f}, we emphasize that when $f^0$ and $f^1$ are positive on $[0,1]^d$, $\sigma^2$ is finite, and $\left(\frac{f^1(X_1)}{f^0(X_1)} - 1\right)$ belongs to $L^2((1-\theta)\times f^0 + \theta\times f^1)$ for all $\theta \in [0,1]$.

Moreover, under $H_0$, $\mathbb{E}\left[\left(\frac{f^1(X_1)}{f^0(X_1)} - 1\right)\right] = 0$, and $\mathbb{V}\left[\left(\frac{f^1(X_1)}{f^0(X_1)} - 1\right)\right] = \sigma^2$. Then, the Central Limit Theorem implies that 
$$\frac{\sqrt{n}S_n(\mathbb{X})}{\sqrt{\sigma^2}} \overset{d}{\rightarrow} \mathcal{N}\left(0,1\right).$$
Thus, the test presented in Lemma \ref{lem:known_f} is asymptotically of level $\alpha$.

To prove the second part of the Lemma, we consider alternatives $\theta_n = \frac{h}{\sqrt{n}}$ for some $h > 0$. Let us denote $\mathbb{P}_n$ the law of the sample $X_1, ..., X_n$ under $H_0$, and $\mathbb{Q}_n$ the law of the sample $X_1, ..., X_n$ under the alternative indexed by $\theta_n$. Under the assumption that $f^0$ and $f^1$ are positive on $[0,1]^d$, it is easy to see that the family of models indexed by $\theta$ is differentiable in quadratic mean (using, e.g., Lemma 7.6 in \cite{vaart_1998}). Then, using Theorem 7.2 in \cite{vaart_1998}, we find that
\begin{align*}
    \log\left(\frac{d\mathbb{Q}_n}{d\mathbb{P}_n}(\mathbb{X})\right) & = \frac{h}{\sqrt{n}}\sum_{i\leq n}\left(\frac{f^1(X_i)}{f^0(X_i)} - 1\right) - \frac{h^2}{2n}\sum_{i\leq n}\left(\frac{f^1(X_i)}{f^0(X_i)} - 1\right)^2+ o_{\mathbb{P}_n}(1).
\end{align*}
The Law of Large Numbers implies that $$\frac{h^2}{2n}\sum_{i\leq n}\left(\frac{f^1(X_i)}{f^0(X_i)} - 1\right)^2 \overset{\mathbb{P}}{\rightarrow} \frac{h^2\sigma^2}{2},$$ and the Central Limit Theorem implies that $$\frac{h}{\sqrt{n}}\sum_{i\leq n}\left(\frac{f^1(X_i)}{f^0(X_i)} - 1\right) \overset{d}{\rightarrow} \mathcal{N}\left(0, h^2\sigma^2\right).$$
Thus, under $\mathbb{P}_n$,
$$\left({\sqrt{n}S_n(\mathbb{X}) \atop \log\left(\frac{d\mathbb{Q}_n}{d\mathbb{P}_n}(\mathbb{X})\right)}\right) \overset{d}{\rightarrow}\mathcal{N}\left(\left({0 \atop -\frac{h^2\sigma^2}{2}}\right), \left({\sigma^2\atop h\sigma^2} \ \ \  {h\sigma^2\atop h^2\sigma^2}\right)\right).$$
Now, Le Cam's third Lemma (see, e.g., Example 6.7 in \cite{vaart_1998}) implies that under $\mathbb{Q}_n$,
$$\sqrt{n}S_n(\mathbb{X}) \overset{d}{\rightarrow} \mathcal{N}\left(h\sigma^2, \sigma^2\right).$$
Applying Theorem 14.7 in \cite{vaart_1998}, we find that the power function $$\pi_n\left(\frac{h}{\sqrt{n}}\right) = \mathbb{Q}_n\left(S_n(\mathbb{X}) \geq \frac{\sqrt{\sigma^2} \Phi(1-\alpha)^{-1}}{\sqrt{n}}\right)$$ converges as follows :
\begin{align*}
    \pi_n\left(\frac{h}{\sqrt{n}}\right) \rightarrow 1 - \Phi(\Phi^{-1}(1-\alpha) - h\sigma).
\end{align*}
This concludes the proof of Lemma \ref{lem:known_f}.

\subsubsection{Proof of Lemma \ref{lem:known_f_somewhat_fast_rates}} \label{app:proof_known_f_fast}
Let us define 
$$p^0_{R_n} = \int_{R_n} f^0(x)dx \quad \quad \text{ and } \quad \quad p^1_{R_n} = \int_{R_n} f^1(x)dx.$$ To prove the first part of the Lemma, note that under $H_0$, 
$$\mathbb{P}\left(\sum_{i\leq n}\mathds{1}\{X_i\in R_n\} = 0\right) = \exp\left(n\log\left(1-p^0_{R_n}\right)\right).$$
Using the bound $\log(1-x)\geq \frac{-x}{2}\times \frac{2-x}{1-x}$ for $x < 1$, we find that 
$$\mathbb{P}\left(\sum_{i\leq n}\mathds{1}\{X_i\in R_n\} = 0\right) \geq \exp\left(-np^0_{R_n}\times \frac{2-p^0_{R_n}}{2-2p^0_{R_n}} \right).$$
Thus, this test has type I error lower than $\alpha$ if
$$\alpha \geq 1-\exp\left(-A\times \frac{2n-A}{2(n-A)}\right).$$
This proves the first part of Lemma \ref{lem:known_f_somewhat_fast_rates}.

\bigskip

To prove the second part of Lemma \ref{lem:known_f_somewhat_fast_rates}, note that under the alternative $H_1$,
$$\mathbb{P}_{\theta_n}\left(\sum_{i\leq n}\mathds{1}\{X_i \in R\}  = 0\right) = \left(1 - \theta p_R^1\right)^n.$$
Using the bound $\log(1 - x) \leq \frac{-2x}{2-x}$ for $x\in (0,1),$ we find that
\begin{align*}
    \mathbb{P}_{\theta_n}\left(\sum_{i\leq n}\mathds{1}\{X_i \in R\}  = 0\right) &\leq \exp\left(\frac{-2p_R^1n\theta}{2-p_R^1\theta}\right) \\
    &\leq \exp\left(\frac{-2Bn^{1-\gamma}\theta}{2-n^{-\gamma}\theta}\right).
\end{align*}
Thus, to ensure that the test has power $\beta$, it is enough to have
\begin{align*}
    \frac{2Bn^{1-\gamma}\theta}{2-n^{-\gamma}\theta}\geq \log(1/\beta).
\end{align*}
This is in particular verified if $\theta \geq \frac{\log(1/\beta)}{Bn^{1-\gamma}}$.

\subsubsection{Proof of Lemma \ref{lem:control_sigma}} \label{app:proof_control_sigma}

Recall that $\overline{\sigma}^1_{\mathcal{P}}$ is assumed to be finite. Defining $\mathcal{B} = \{k : h_k^0 > 0\}$ and $\overline{\mathcal{B}} = \{k : h_k^0 = 0\}$, we see that $\{k : h_k^1 = 0\} \subset \overline{\mathcal{B}}$. This implies in particular that $\Omega^0 = \emptyset$, and that $\overline{\mathcal{B}} \subset \Omega^{01}$. Now,  for all $k \in \Omega^{01}$, $\widehat{r}_k = 1$, so
\begin{align*}
    \widehat{\sigma}^2 &= \sum_{k\in \mathcal{B}}\left(\frac{\widehat{h}^1_k}{\widehat{h}^0_k} - 1\right)^2\widehat{h}^0_k\\
    &=\sum_{k\in \mathcal{B}}\left( \frac{(\widehat{h}^1_k)^2}{\widehat{h}^0_k} - 2\widehat{h}^1_k + \widehat{h}^0_k\right).
\end{align*}
Using $\left \vert\sum_{k\in \mathcal{B}}\widehat{h}^1_k - 1\right\vert \leq \epsilon^1 \left(\vert \Omega^1\vert + \vert \Omega^{01}\vert\right)$ and $\left \vert\sum_{k\in \mathcal{B}}\widehat{h}^0_k - 1\right\vert \leq \epsilon^1 \vert \Omega^{01}\vert$, we find
\begin{align*}
    \left \vert \widehat{\sigma}^2 - \left(\sum_{k\notin \overline{\mathcal{B}}}\frac{(\widehat{h}^1_k)^2}{\widehat{h}^0_k} - 1\right)\right \vert \leq 5\epsilon^1\vert \Omega\vert.
\end{align*}
Similarly, we have $\overline{\sigma}^2_{\mathcal{P}} =  \sum_{k\in \mathcal{B}}\frac{(h^1_k)^2}{h^0_k} - 1$. Combining these expressions, we find that
\begin{align*}
    \left \vert \widehat{\sigma}^2- \overline{\sigma}^2_{\mathcal{P}} \right \vert &\leq 5\epsilon^1 \vert \Omega \vert + \sum_{k\in \mathcal{B}}\left \vert \frac{(\widehat{h}^1_k)^2}{\widehat{h}^0_k} - \frac{(h^1_k)^2}{h_k^0}\right \vert\\
    &\leq 5\epsilon^1 \vert \Omega \vert + \sum_{k\in \mathcal{B}}\frac{\left\vert \widehat{h}^1_k -h_k^1\right \vert \left(\widehat{h}^1_k+h_k^1\right)}{h_k^0} + \sum_{k\in \mathcal{B}}\left \vert\frac{h^0_k}{\widehat{h}^0_k} - 1\right \vert h^0_k\widehat{h}^1_k
\end{align*}
Now, Lemma \ref{lem:concentration_histogram} shows that on the event $\mathcal{E}$, for all $k\notin \Omega^{1}\cup \Omega^{01}$
\begin{align*}
    \left\vert \widehat{h}^1_k - h_k^1\right \vert \leq \sqrt{\frac{10u}{n^1}}.
\end{align*}
On the other hand, for all $k\in \Omega^{1}\cup \Omega^{01}$, $$\left\vert \widehat{h}^1_k - h_k^1\right \vert \leq 3\epsilon^1 \leq \sqrt{\frac{27u}{n^1}}.$$
Moreover, Lemma \ref{lem:concentration_histogram} shows that on the event $\mathcal{E}$, $h_k^1 \leq 3\widehat{h}_k^1$. Thus,
$$\sum_{k\in \mathcal{B}}\frac{\left\vert \widehat{h}^1_k -h_k^1\right \vert \left(\widehat{h}^1_k+h_k^1\right)}{\widehat{h}^0_k} \leq \sqrt{\frac{27u}{n^1}}\sum_{k\in \mathcal{B}}\frac{4h^1_k}{h^0_k}.$$
Similarly, Lemma \ref{lem:concentration_histogram} reveals that for all $k$, 
\begin{align*}
    \left \vert\frac{h^0_k}{\widehat{h}^0_k} - 1\right \vert \leq \frac{\sqrt{\frac{2u}{n^0h_k^0}} + \frac{u}{3n^0h_k^0}}{1 - \sqrt{\frac{2u}{n^0h_k^0}} - \frac{u}{3n^0h_k^0}}.
\end{align*}
This proves that 
\begin{align*}
    \left \vert \widehat{\sigma}^2- \overline{\sigma}^2_{\mathcal{P}} \right \vert \leq 5\sqrt{\frac{3u}{n^1}}\vert \Omega \vert + \sqrt{\frac{27u}{n^1}}\sum_{k\in \mathcal{B}}\frac{2h^1_k}{h^0_k} + \sum_{k\in \mathcal{B}}\frac{\sqrt{\frac{2uh_k^0}{n^0}} + \frac{uh_k^0}{3n^0}}{1 - \sqrt{\frac{2u}{n^0h_k^0}} - \frac{u}{3n^0h_k^0}}
\end{align*}
Now, for all $k\in \mathcal{B}$, $h_k^0$ is stricly positive. Then, for $n^0$, $n^1$ and $n$ large enough, $\min_{k}(h^0_k - \sqrt{2\frac{uh^0_k}{n^0}} - \frac{u}{3n^0}) > \epsilon^1$, so $\Omega^{01} \cap \mathcal{B} = \emptyset$. When $n^0$ is large enough so that $\sqrt{\frac{2uh_k^0}{n^0}} + \frac{uh_k^0}{3n^0} \leq 1/2$, there exists a constant $C_{\overline{f}_{\mathcal{P}}^0, \overline{f}_{\mathcal{P}}^1, K}$ depending only on $\overline{f}_{\mathcal{P}}^0$ $ \overline{f}_{\mathcal{P}}^1$, and $K$ such that
\begin{align*}
    \left \vert \widehat{\sigma}^2- \overline{\sigma}^2_{\mathcal{P}} \right \vert \leq C_{\overline{f}_{\mathcal{P}}^0, \overline{f}_{\mathcal{P}}^1, K}\left(\sqrt{\frac{u}{n^1}}+\sqrt{\frac{u}{n^0}}\right).
\end{align*}

\subsubsection{Proof of Corollary \ref{cor:power_slow_rate}} \label{app:proof_power_slow_rate}

Let us assume that $n$, $n^0$ and $n^1$ are large enough so that Lemma \ref{lem:control_sigma} holds, and that on the event $\mathcal{E}$,
$$\left \vert \widehat{\sigma}^2- \overline{\sigma}^2_{\mathcal{P}} \right \vert \leq\frac{\overline{\sigma}^2_{\mathcal{P}} }{2}.$$ 
Arguments similar to that of Lemma \ref{lem:control_sigma} show that for $n$, $n^0$ and $n^1$ large enough, $K^0 = K^1 = 0$. Moreover, for $n$, $n^0$ and $n^1$ large enough, $K^0 \sqrt{\frac{u}{n^0\widehat{\sigma}^2}} \leq \sqrt{K}$ and $ K^0\sqrt{\frac{t}{n\widehat{\sigma}^2}} \leq 1$. Then, Theorem \ref{thm:type_I_II} proves that the Density Ratio Test has power $1-\alpha$ against alternatives such that
\begin{align*}
    \theta \geq \frac{C\sqrt{K(u\vee t)}}{\overline{\sigma}^2_{\mathcal{P}}}\left(\sqrt{\frac{1}{n}} + \sqrt{\frac{1}{n^0}}\right).
\end{align*}

\subsubsection{Proof of Lemma \ref{lem:lb_h}} \label{app:proof_lb_h}
The proof of Lemma \ref{lem:power} follows from noticing that if $h_k^1 \geq \sqrt{\frac{5u}{n^1}}$, Lemma \ref{lem:concentration_histogram} implies that on $\mathcal{E}$, $\frac{N_k^1}{n^1} > \sqrt{\frac{3u}{n^1}}$, so $k \notin \Omega^1 \cup \Omega^{01}$. On the other hand, the same reasoning as in the proof of Lemma \ref{lem:lower_bound_B1} shows that this implies that $\widehat{h}_k^1 \geq h_k^1\left(1 - \left(\frac{u}{n^1}\right)^{\frac{1}{4}}\right)$. Moreover, since $ \sqrt{\frac{5u}{n^1}}\leq 1$, $\left(1 - \left(\frac{u}{n^1}\right)^{\frac{1}{4}}\right) \geq \frac{1}{3}$.
Thus,
$$\widehat{\sigma}^2 \geq \frac{(\widehat{h}_k^1)^2}{\widehat{h}_k^0} \geq \left(1 - \left(\frac{u}{n^1}\right)^{\frac{1}{4}}\right)^2 \frac{(h_k^1)^2}{\epsilon^0} \geq \frac{(h_k^1)^2}{3\epsilon^0}.$$
Now, applying Theorem \ref{thm:type_I_II}, we find that the Density Ratio Test has power $1-\alpha$ against alternatives such that
\begin{align*}
\theta \geq& CK\left(\sqrt{\frac{t}{n\widehat{\sigma}^2}} + \frac{\sqrt{u}}{\widehat{\sigma}^2\sqrt{n^1}} + \sqrt{\frac{u}{n^0\widehat{\sigma}^2}}\right).
\end{align*}
Note that 
$$\frac{\sqrt{u}}{\widehat{\sigma}^2\sqrt{n^1}} \leq \sqrt{\frac{5u}{n^1}}\times \frac{3\epsilon^0}{\sqrt{5}h_1^2}.$$ 
Since $\sqrt{n^1}h_k^1\geq 1$,
$$\frac{\sqrt{u}}{\widehat{\sigma}^2\sqrt{n^1}} \leq \frac{2\epsilon^0}{h_1}.$$ 
Moreover, $\epsilon^0 = \frac{t}{n} \vee \frac{3u}{n^0}$, so the Density Ratio Test has power $1-\alpha$ against alternatives such that
\begin{align*}
\theta \geq& \frac{C_{\alpha, K}\epsilon^0}{h_k^1}
\end{align*}
for some constant $C_{\alpha, K}$ depending on $\alpha$ and $K$.
\subsubsection{Proof of Corollary \ref{cor:max_signal}}

First, we note that by construcrion, there are at most $2K_{max}$ different bins in the partition sequence $\left(\mathcal{P}_K\right){k\leq K_{max}}$), and that those nodes correspond to inner and outer node of the tree corresponding to $\mathcal{P}_{K_{max}}$. Therefore, to control the error of $h^0_k$ and $h^1_k$ uniformly for all bins $k$ in all $K_{max}$ partitions $\mathcal{P}_K$, it is enough to control the error of $h^0_k$ and $h^1_k$ for the $2K_{max}$ bins $k$ corresponding to inner and outer nodes of the largest tree. We choose $u = \log(8K_{max}/\alpha)$ and $t = \log(2/\alpha)$, and we follow the same proof as in Lemma \ref{lem:concentration_histogram} to show that the error bounds hold with the same probabilities for all bins in partition $\mathcal{P}_{K^*}$. Then, the control of the type I and type II error follows similarly. 
\subsubsection{Signal decreases after approximation} \label{app:proof_signal_decrease}

In this Section, we show that approximating the densities $f^0$ and $f^1$ using piece-wise constant functions leads to a decrease in the test signal. 

\begin{lemma}\label{app:sigma_decroit}
    For all functions $f^0$, $f^1$ positive on $[0,1]^d$, and for all partition $\mathcal{P}$ of $[0,1]^d$, 
\begin{align*}
    \overline{\sigma}_{\mathcal{P}}^2\leq \sigma^2.\label{eq:sigma_decroit}
\end{align*}
\end{lemma}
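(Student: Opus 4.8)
The plan is to rewrite both signal terms so that the inequality becomes a bin-by-bin application of the Cauchy--Schwarz inequality. First I would expand $\sigma^2$: since $f^0$ and $f^1$ are probability densities, $\int f^0 = \int f^1 = 1$, and hence
\begin{align*}
\sigma^2 = \int \left(\frac{f^1(x)}{f^0(x)} - 1\right)^2 f^0(x)\,dx = \int \frac{f^1(x)^2}{f^0(x)}\,dx - 1.
\end{align*}
Similarly, writing $|B_k| = \int_{B_k} dx$, $p_k^0 = \int_{B_k} f^0$, and $p_k^1 = \int_{B_k} f^1$, so that $h_k^0 = p_k^0/|B_k|$ and $h_k^1 = p_k^1/|B_k|$, a direct computation (using that $\int \overline f^1_{\mathcal{P}} = \int \overline f^0_{\mathcal{P}} = 1$ by construction) gives
\begin{align*}
\overline{\sigma}^2_{\mathcal{P}} = \sum_{k \leq K}\left(\frac{h_k^1}{h_k^0} - 1\right)^2 h_k^0 |B_k| = \sum_{k \leq K}\frac{(p_k^1)^2}{p_k^0} - 1,
\end{align*}
where every $p_k^0 > 0$ because $f^0 > 0$ (bins of Lebesgue measure zero can simply be discarded). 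Thus the claim $\overline{\sigma}^2_{\mathcal{P}} \leq \sigma^2$ reduces to
\begin{align*}
\sum_{k \leq K}\frac{\left(\int_{B_k} f^1\right)^2}{\int_{B_k} f^0} \leq \int \frac{f^1(x)^2}{f^0(x)}\,dx = \sum_{k \leq K}\int_{B_k}\frac{f^1(x)^2}{f^0(x)}\,dx.
\end{align*}

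Next I would prove this inequality termwise. Fixing a bin $B_k$ with $|B_k| > 0$ and applying the Cauchy--Schwarz inequality to the functions $x \mapsto f^1(x)/\sqrt{f^0(x)}$ and $x \mapsto \sqrt{f^0(x)}$ on $B_k$ yields
\begin{align*}
\left(\int_{B_k} f^1(x)\,dx\right)^2 = \left(\int_{B_k}\frac{f^1(x)}{\sqrt{f^0(x)}}\sqrt{f^0(x)}\,dx\right)^2 \leq \left(\int_{B_k}\frac{f^1(x)^2}{f^0(x)}\,dx\right)\left(\int_{B_k} f^0(x)\,dx\right),
\end{align*}
and dividing through by $\int_{B_k} f^0(x)\,dx > 0$ gives exactly the per-bin bound. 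Summing over $k \leq K$ and subtracting $1$ from both sides yields $\overline{\sigma}^2_{\mathcal{P}} \leq \sigma^2$.

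There is no real obstacle: the only care needed is the bookkeeping in the second display (checking that $\overline{\sigma}^2_{\mathcal{P}}$ collapses to $\sum_k (p_k^1)^2/p_k^0 - 1$) and the observation that positivity of $f^0$ makes all denominators strictly positive, so each step is legitimate. Conceptually, the statement is just the data-processing inequality for the $\chi^2$-divergence under the coarsening induced by $\mathcal{P}$, since $\sigma^2 = \chi^2(f^1\,\|\,f^0)$ and $\overline{\sigma}^2_{\mathcal{P}}$ is the $\chi^2$-divergence between the corresponding histogram distributions; the bin-by-bin Cauchy--Schwarz argument is simply the elementary proof of that fact.
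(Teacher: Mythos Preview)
Your proof is correct and follows essentially the same route as the paper: expand both signals, reduce to showing $(p_k^1)^2/p_k^0 \le \int_{B_k} (f^1)^2/f^0$ on each bin, and apply Cauchy--Schwarz to $f^1 = (f^1/\sqrt{f^0})\cdot\sqrt{f^0}$. Your bookkeeping with the $|B_k|$ factors is actually cleaner than the paper's version, and the closing remark identifying the result as the data-processing inequality for the $\chi^2$-divergence is a nice conceptual addition.
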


\begin{proof}

Note that it is enough to show that for all bin $B_k$ of partition $\mathcal{P}$, 
\begin{align*}
    \left(\frac{h^1_k}{h_k^0}-1\right)^2h_k^0 \leq \int_{B_k}\left(\frac{f^1(x)}{f^0(x)}-1\right)^2f^0(x)dx.
\end{align*}
The results will follow by summing over the different intervals. Now,
\begin{align*}
    \left(\frac{h^1_k}{h_k^0}-1\right)^2h_k^0 = \underset{(h^1_k)^2}{h_k^0} + 2 h_k^1 - h_k^0,
\end{align*}
and 
\begin{align*}
    \int_{B_k}\left(\frac{f^1(x)}{f^0(x)}-1\right)^2f^0(x)dx &= \int_{B_k}\frac{f^1(x)^2}{f^0(x)}dx - 2\int_{B_k}f^1(x)dx + \int_{B_k}f^0(x)dx\\
    &=\int_{B_k}\frac{f^1(x)^2}{f^0(x)}dx - 2h_k^1 + h_k^0dx.
\end{align*}
We conclude the proof by showing that $\underset{(h^1_k)^2}{h_k^0} \leq \int_{B_k}\frac{f^1(x)^2}{f^0(x)}dx$. To do so, note that 
\begin{align*}
    (h_k^1)^2 &= \left(\int_{B_k} f^1(x) dx\right)^2\\
    & = \left(\int_{B_k} \frac{f^1(x)}{\sqrt{f^0(x)}}\sqrt{f^0(x)} dx\right)^2.
\end{align*}
Applying Cauchy-Schwarz inequality, we find
\begin{align*}
    (h_k^1)^2 &\leq \int_{B_k}\frac{f^1(x)^2}{f^0(x)} dx \int_{B_k}f^0(x)dx
\end{align*}
so
   \begin{align*}
    \frac{(h_k^1)^2}{h_k^0} &\leq \int_{B_k} \left(\frac{f^1(x)}{\sqrt{f^0(x)}}\right)^2 dx,
\end{align*}
which proves the claim.
\end{proof}
\section{Additional simulations} \label{app:simu}


\subsection{Robustness in the case of mixture model $f^0$}
In some scenarios, there might be reasons to suspect that the distribution $f^0$ could undergo changes between the training and test phases. For instance, in the context of flow cytometry data, $f^0$ may be viewed as a multi-modal distribution, with each mode corresponding to a different type of cell. In contrast, $f^1$ is often a uni-modal distribution representing a specific type of cells of interest. The proportions of the different cell types in $f^0$ can vary between individuals, and $f^0$ may be more accurately modeled as a mixture distribution with varying proportions of its components. In the following experiment, we explore the robustness of the Density Ratio Test in the face of this phenomenon.

More precisely, we consider $f^0$ as a mixture of two truncated Gaussians, denoted as $g_a$ and $g_b$, with covariance matrix $\left({1/100 \atop 0} {0 \atop 1/100}\right)$ and means equal to $(3/10, 6/10)$ and $(6/10, 3/10)$, respectively. The distribution $f^1$ is chosen as in Setting 2: a truncated Gaussian with covariance matrix $\left({1/100 \atop 0} {0 \atop 1/100}\right)$ and mean equal to $(6/10, 6/10)$. We set the training sample size $n^0+n^1$ to $1000000$ with $n^0 = 0.7\times(n^{train})$. We investigate the type I error of the test, as well as its power against alternatives $\theta = 0.015$. In the first scenario, we assume that at training time, $f^0 = 0.5\times g_a + 0.5\times g_b$, while in the test sample, $f^0 = \pi\times g_a + (1-\pi) \times g_b$ for $\pi$ varying from $0.5$ to $0.9$. In the second scenario, we assume that at training time, $f^0 = \pi\times g_a + (1-\pi) \times g_b$ for $\pi$ varying from $0.5$ to $0.9$, while in the test sample, $f^0 = 0.5\times g_a + 0.5\times g_b$. Each experiment is reproduced 300 times. The results are presented in Tables \ref{tab:first_shift} and \ref{tab:second_shift}, respectively.

\begin{table}[h!]
\centering
\begin{tabular}{ c || c | c | c | c |c }
   $\pi$ & 0.5 & 0.6 & 0.7 & 0.8 & 0.9 \\ \hline
   Proportion of rejects under $H_0$ & 0.00 & 0.00 & 0.00 & 0.00 & 0.00  \\
   Proportion of rejects under $H_1$ with $\theta = 0.015$ & 1 & 1 &1 & 1& 1
 \end{tabular}

\caption{\label{tab:first_shift} Proportion of experiments where $H_0$ is rejected. The training sample $\mathbb{X}^0$ has distribution $f^0 = 0.5\times g_a + 0.5\times g_b$, while for the test sample $f^0 = \pi\times g_a + (1-\pi) \times g_b$.
}
\end{table}
\begin{table}[h!]
\centering
\begin{tabular}{ c || c | c | c | c |c }
   $\pi$ & 0.5 & 0.6 & 0.7 & 0.8 & 0.9 \\ \hline
   Proportion of rejects under $H_0$ & 0.00 & 0.00 & 0.00 & 0.00 & 1  \\
   Proportion of rejects under $H_1$ with $\theta = 0.015$ & 1 & 1 &1 & 1& 1
 \end{tabular}
\caption{\label{tab:second_shift} Proportion of experiments where $H_0$ is rejected. The training sample $\mathbb{X}^0$ has distribution $f^0 = \pi\times g_a + (1-\pi) \times g_b$, while for the test sample $f^0 = 0.5\times g_a + 0.5\times g_b$.
}
\end{table}

The results presented in Table \ref{tab:first_shift} indicate a certain robustness of the Density Ratio Test against reasonable variations in the proportions of the components of the mixture distribution $f^0$. Indeed, the type I error remains lower than the desired level $\alpha = 0.05$ for all proportions considered in the first experiment, and as long as $\pi \geq 0.2$ in the second experiment. This should come as no surprise: the partition chosen by our algorithm favors bins where the density under $f^0$ is low, while the density under $f^1$ is relatively large. When such bins exist, they contain the majority of the signal and carry the most weight in the test, leading to rejection of the hypothesis when the number of points falling into these bins is sufficiently large. On the contrary, the relative weight of the different regions with a large weight under $f^0$ can vary between the training and test times without significantly increasing the type I error, as these regions have a limited influence on the test result.

\subsection{Detection rate of Density Ratio Test, Bootstrap Density Ratio Test and Maximum Mean Discrepancy}
\label{app:simu_detection_rate} 

In this section, we investigate the detection rate of the Density Ration Test, the Bootstrap Density Ratio Test, and Maximum Mean Discrepancy. Figure \ref{fig:reject} suggests that the detection threshold decreases more rapidly in the high signal setting than in the low signal setting. To confirm this intuition, we investigate the values of $\theta$ such that the Density Ratio Test has power between $0.15$ and $0.85$ for different training sample sizes. The corresponding values of the pairs $(n^{train}, \theta)$ are plotted on a logarithmic scale in Figure \ref{fig:puissance_add}.

We consider the distributions $f^0$ and $f^1$ of Settings A, B and C, corresponding respectively to an easy, intermediate and hard contamination detection problem. We assume again that $n^0 = 0.7 \times (n^{train})$, and we let the number of training samples vary from $1000$ to $1000000$ (for DRT and BDRT) or to  $100000$ (for MMD, due to larger computation times). We then conduct test with a test sample size given by $n = 0.1\times (n^{train})$. We simulate mixtures sampled from $(1-\theta)\times f^0 + \theta f^1$ for $\theta$ varying between 0.0003 and 0.3.
We investigate the values of $\theta$ such that the Density Ratio Test has power between $0.15$ and $0.85$ (for DRT) and between $0.3$ and $0.7$ (for BDRT and MMD) for different training sample sizes. The corresponding values of the pairs $(n^{train}, \theta)$ are plotted on a logarithmic scale in Figure \ref{fig:puissance_add}.

\begin{figure}[h!]
\centering\subcaptionbox{\small Detection rate for DRT.\label{fig:rate_drt_power}}
{\includegraphics[width=0.45\textwidth]{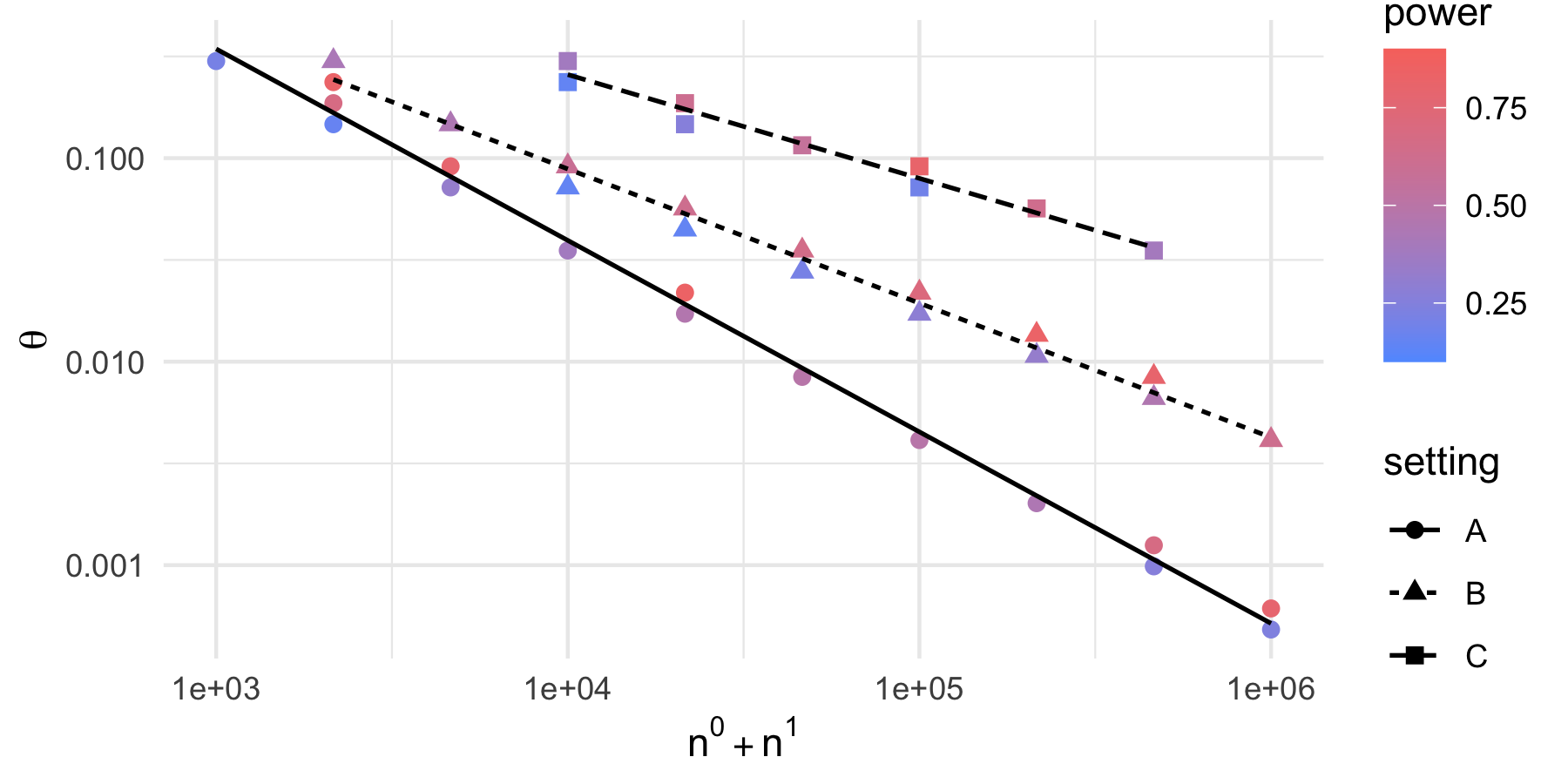}}
\subcaptionbox{\small Detection rate for BDRT.\label{fig:rate_dr_power}}
{\includegraphics[width=0.45\textwidth]{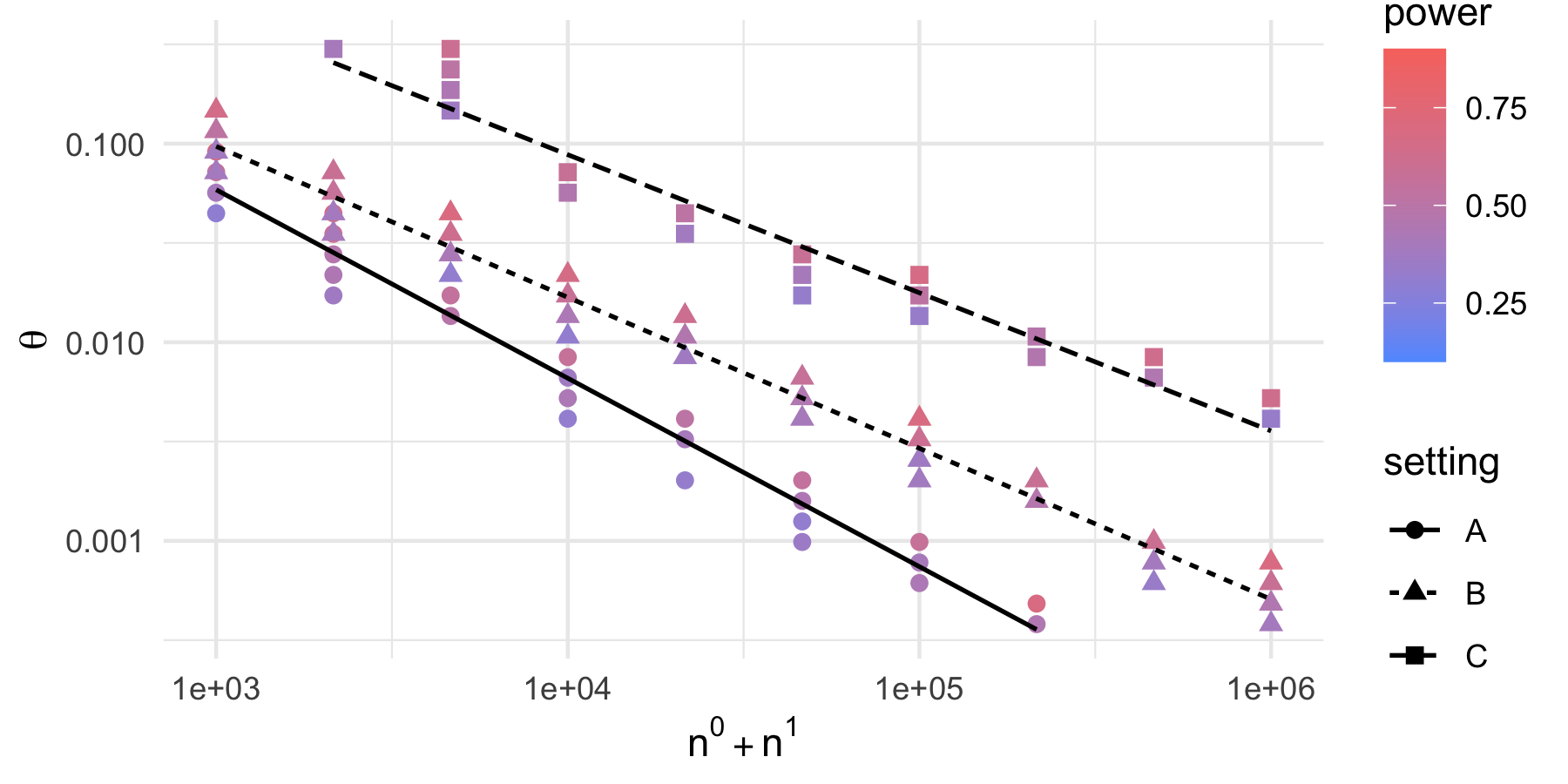}}
\subcaptionbox{\small Detection rate for MMD.\label{fig:rate_mmd_power}}
{\includegraphics[width=0.45\textwidth]{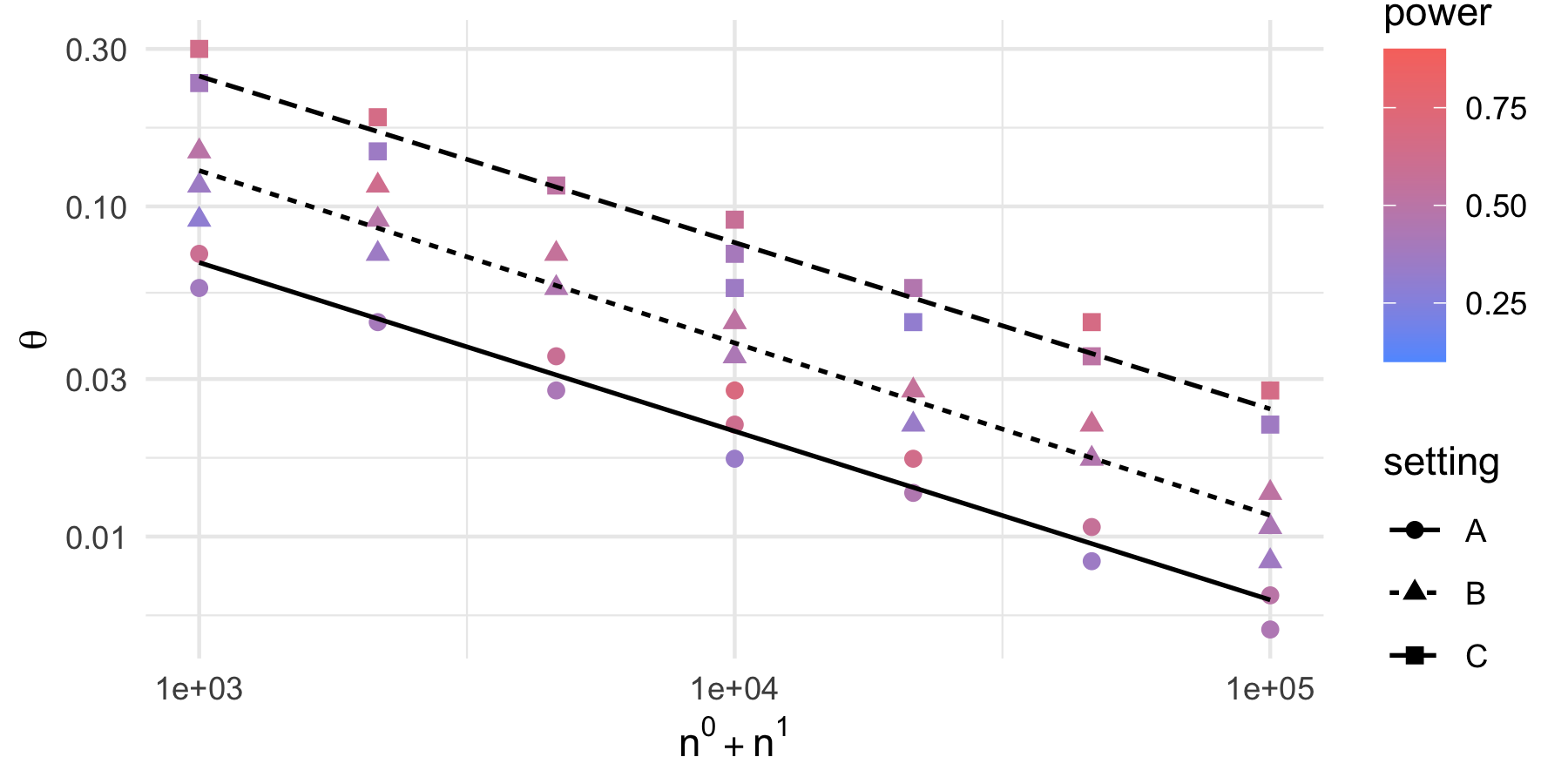}}
\caption{Values of pairs $(n^{train}, \theta)$ such that the Bootstrap Density Ratio Test (left) and MMD Test (right) have empirical power between $0.15$ and $0.85$ for DRT, and between $0.3$ and $0.7$ for BDRT and MMD for an alternative parameterized by $\theta$ when the training sample size is $n^0+n^1$, plotted on a logarithmic scale. The color correspond to the empirical power, the shape of the dots to the setting of the experiment (low, intermediate or high signal).}
\label{fig:puissance_add}
\end{figure}

Figure \ref{fig:puissance_add} reveals that the slope of the linear regression $\log(\theta) = a \log(1/n^{train}) + b$ is larger (in absolute value) in the high signal setting (Setting A): the detection threshold indeed decreases more rapidly in this case. More precisely, we note that the slope $a$ in Setting A is approximately $-0.97$: the test detects alternatives corresponding to approximately constant power at a rate close to $n^{-0.94}$ (up to a multiplicative constant). In this large signal setting, the Density Ratio Test almost achieves the fast detection rate $n^{-1}$. The slope $a$ in the intermediate setting is somewhat lower, approximately $-0.76$. By contrast, the slope is approximately $-0.65$ in the low signal setting. By contrast, the slope are respectively $-0.51$, $-0.52$ , and $-0.5$ for the MMD test (in settings A, B, and C), and $-0.95$, $-0.76$, and $-0.7$ for BDRT.

\subsection{Analysis of the HIPC dataset}
\paragraph{Data pre-processing} We pre-process the observations so as to ensure that each measurement lies in the range $[0,1]$. More precisely, we center each coordinates, and we apply an inverse sinus function. Finally, we rescale the result to the range $[0,1]$.

\paragraph{Experiment} We conduct the following experiment for each cell type, and each of the 62 samples. We consider the current sample of cells as the test sample, with the remaining 61 samples serving as training samples. Initially, we use 12 randomly selected samples (out of 61 samples) to determine the partition. Subsequently, we calibrate the Bootstrap Density Ratio Test to achieve a desired significance level $\alpha$ with a type one error of $5\%$. This involves estimating the distribution of the statistic under the null hypothesis using 25 samples (excluding all cells labeled ``c") and applying the test on the remaining 24 samples (also excluding all cells labeled ``c"). We choose $\alpha$ as the largest value that ensures the test declares at most one false positive. 

Finally, we execute our test for various prevalence levels $\theta$ of cell type $c$. To achieve this, we re-estimate the distribution of the test statistic under the null hypothesis using the same 49 samples used for calibration. Subsequently, we generate a synthetic test sample with a sample size of 10 000 cells, and with varying prevalence levels $\theta$ for cell type $c$. This is done by randomly selecting the corresponding number of cells of type $c$ and cells of other types (without replacement if the number is smaller than the actual count, with replacement otherwise). For each prevalence level $\theta$, we repeat this sampling process 10 times. Averaging across the 62 samples, we obtain an estimate of the test's power for detecting each of the three cell types against alternatives characterized by varying levels of prevalence $\theta$.

We proceed similarly to estimate the power of MMD.


\end{document}